\documentclass[a4paper]{article}
\usepackage{a4wide}
\usepackage{graphicx,xcolor}
%\graphicspath{{./}}
\usepackage{caption}
\usepackage{subcaption}
\usepackage{empheq}
\usepackage{amsthm}
\usepackage{amssymb, latexsym, mathrsfs}
\usepackage{amsmath}
\usepackage{dsfont}
\usepackage{enumerate}
\usepackage{algorithm}
\usepackage{color}
\usepackage{transparent}
\usepackage{subfig}
\usepackage{url}
\usepackage[affil-it]{authblk}
\usepackage{booktabs}

\usepackage[american,cuteinductors,smartlabels]{circuitikz}
\usepackage{tikz}
\usepackage{schemabloc}
\usetikzlibrary{shapes,arrows}

\theoremstyle{plain}

\newtheorem{assum}{Assumption}
\newtheorem{prbl}{Problem}

\newtheorem{rmk}{Remark}
\newtheorem{prop}{Proposition}
\newtheorem{lem}{Lemma}

%%%%%%%%%%%%%%%%%%%%%%%%%%%%%%%%%%%%%%%%%%%%%%
%
% Stefano Riverso: PhD Thesis template - custom definitions and settings
%
%%%%%%%%%%%%%%%%%%%%%%%%%%%%%%%%%%%%%%%%%%%%%%

% sets

\newcommand{\Rset}{\mathbb{R}}

% hat sets

% hat hat sets

% bar sets

% tilde sets

% hat letters

\newcommand{\hd}{{\hat{d}}}

% hat capital letters

% hat greece letters

% hat hat letters

% bar letters

\newcommand{\bd}{{\bar{d}}}

% bar capital letters

% bar bar letters

% tilde letters

% tilde letters

% calligraphic letters

\newcommand{\CC}{{\mathcal{C}}}
\newcommand{\DD}{{\mathcal{D}}}

\newcommand{\NN}{{\mathcal{N}}}

\newcommand{\PP}{{\mathcal{P}}}

% calligraphic hat letters

% cursive letters

% various letters

% functions
\newcommand{\diag}{{\mbox{diag}}}                  % diag function
      % distance function
                     % abs |#|
\newcommand{\norme}[2]{||{#1}||_{#2}}            % norme ||#||_#
                    % ball  B_#
              % eye(#)
                     % Metzeler matrix
\newcommand{\rank}{\mbox{rank}}                    % rank
                  % convex hull

% math
                   % imply -->
                 % if and only if <-->

% utility bold
\newcommand{\mbf}[1]{\mathbf{#1}}                  % bold symbol
                  % greek bold symbol

\newcommand{\Zero}{\textbf{0}}

% state variables and subsystems
%\newcommand{\subss}[2]{ #1_{[#2]} }              % subsystem
\newcommand{\subss}[2]{{#1}_{[#2]}}

                              % e^+
                              % x^+
                              % z^+

                   % \tx^+

                     % \hx^+
                     % \hz^+

              % \hhx^+

                 %bar kappa 
\newcommand{\dx}{{\dot x}}

% Matrices
\newcommand{\matr}[1]{
\begin{bmatrix}
    #1
\end{bmatrix}
}

% Reviews
 %Gianni
 %Stefano
 %Fabio

%
%   Use this file for your own custom packages, command-definitions, etc...
%

% the following lines are for creating a simplified TO-DO box. However since boites is not per default installed with all latex-distributions, we have removed this example again
% if you want to use it and do not have "boites" installed, you can get it from here: http://www.ctan.org/tex-archive/macros/latex/contrib/boites
%
%\usepackage{boites,boites_exemples}
%\newcommand{\todolist}[1]{\begin{boiteepaisseavecuntitre}{TO DO in this chapter} #1 \end{boiteepaisseavecuntitre}}  % creates a little box
% %\newcommand{\todolist}[1]{}  % to be used when to do is not to be printed

\begin{document}
	\title{\LARGE \bf A decentralized scalable approach to voltage control of DC islanded microgrids}
        % make the title area
          \author[1]{Michele Tucci%
       \thanks{Electronic address:
         \texttt{michele.tucci02@universitadipavia.it}; Corresponding author}}
        \author[2]{Stefano Riverso%
       \thanks{Electronic address: \texttt{riverss@utrc.utc.com}}} 
\author[3]{Juan C. Vasquez%
       \thanks{Electronic address: \texttt{joz@et.aau.dk}} }
\author[3]{Josep M. Guerrero%
       \thanks{Electronic address: \texttt{juq@et.aau.dk}} }
\author[1]{Giancarlo Ferrari-Trecate%
       \thanks{Electronic address: \texttt{giancarlo.ferrari@unipv.it}} }

     \affil[1]{Dipartimento di Ingegneria Industriale e
       dell'Informazione\\Universit\`a degli Studi di Pavia}
     \affil[2]{United Technologies Research Center Ireland}  
     \affil[3]{Institute of Energy Technology, Aalborg University}
     \date{\textbf{Technical Report}\\ March, 2015}

     \maketitle
     % As a general rule, do not put math, special symbols or citations in the abstract
     \begin{abstract}
     We propose a new decentralized control scheme for
DC Islanded microGrids (ImGs) composed by several
Distributed Generation Units (DGUs) with a general interconnection
topology.
Each local controller regulates
to a reference value the voltage of the Point of Common Coupling (PCC)
of the corresponding DGU. Notably, off-line control design is
conducted in a Plug-and-Play (PnP) fashion meaning that (i) the
possibility of adding/removing a DGU without spoiling stability of the
overall ImG is checked through an optimization problem; (ii) when a DGU is plugged in or
out at most neighbouring DGUs have to update their controllers and
(iii) the synthesis of a local controller uses only information on the
corresponding DGU and lines connected to it. This guarantee total
scalability of control synthesis as the ImG size grows or DGU gets
replaced. Yes, under mild approximations of line dynamics, we formally
guarantee stability of the overall closed-loop ImG. The performance of
the proposed controllers is analyzed simulating different scenarios in
PSCAD.

       \emph{Keywords}:  Decentralized control, plug-and-play, DC microgrid, islanded
       microgrid, voltage control.
     \end{abstract}

\newpage
     \section{Introduction}
In the recent years, the increasing penetration of renewable energy
sources has motivated a growing interest 
for microgrids, energy networks composed by the interconnection of
DGUs and loads \cite{lasseter2002certs}. Microgrids are
self-sustained electric systems that can supply local loads even in
islanded mode, i.e. 
disconnected from the main grid \cite{guerrero2013advanced}.  
Besides their use for electrifying remote areas, islands, or large
buildings, microgrids can be used for improving resilience to faults
and power quality in power networks \cite{guerrero2013advanced2}.
So far, research mainly focused on AC microgrids
\cite{lasseter2002certs,guerrero2013advanced, guerrero2013advanced2, riverso2014plug,Riverso2014c}.  
However, technological advances in power electronics
converters have considerably facilitated the operation of DC power
systems. This, together with the increasing use of DC
renewables (e.g. PV panels), batteries and loads
(e.g. electronic appliances, LEDs and electric vehicles),
has triggered a major interest in 
DC microgrids \cite{kwasinski2011quantitative, shafiee2014hierarchical, elsayed2015dc}. DC microgrids have also several advantages over
their AC counterparts. For instance, control of reactive power
or unbalanced electric signals are not an issue. On the
other hand, protection of DC systems is still a challenging problem \cite{elsayed2015dc}.

For AC ImGs a
key issue is to guarantee voltage and frequency stability by
controlling inverters interfacing energy sources with lines and
loads. This problem has received great attention and
several decentralized control schemes have been proposed, ranging from classic droop
control \cite{guerrero2013advanced, simpson2013synchronization}, to decentralized control
\cite{etemadi2012decentralized, riverso2014plug, Riverso2014c}. Some control design approaches are scalable, meaning
that the design of a local controller for a DGU is not based on
the knowledge of the whole ImG and the complexity of
local control design is independent of the ImG size. 
In addition, the method proposed in \cite{riverso2014plug, Riverso2014c} 
allows for the seamless plugging-in, unplugging and replacement of DGUs without
spoiling ImG stability.  Control design procedure with these features
have been termed PnP \cite{Riverso2013c,Riverso2014a,Riverso2014,stoustrup2009plug}.

% have been originally poposed in \cite{Riverso2013c,Riverso2014a,Riverso2014} [Cita anche Stroustrup] and to the development of a PnP design algorithm for AC
% microgrids \cite{riverso2014plug} [cita anche paper TSG]. 
% .   
Voltage stability is critical also in DC microgrids as
they cannot be directly coupled to an ``infinite-power'' source, such as the AC main grid, and therefore
they always operate in islanded mode. 
Existing controllers for the stabilization of DC ImGs are mainly based on
droop control \cite{shafiee2014hierarchical,lu2014improved}. So far, however, 
stability of the closed-loop systems has been analyzed only
for specific ImGs \cite{shafiee2014hierarchical,lu2014improved}.

% Further constraints arise from
% requirements of the control architecture. Indeed, it is highly
% desirable to adopt decentralized control schemes, so as to equip DGUs
% with local controllers that work in parallel.

In this paper we develop a totally scalable method for the synthesis
of 
decentralized controllers for DC ImGs. We propose a PnP design procedure 
where the synthesis of a local controller requires only the model of
the corresponding DGU and the parameters of transmission lines connected to it. 
Importantly, no specific information about any other DGU is needed. 
Moreover, when a DGU is plugged in or out, only DGUs
physically connected to it have to retune their local controllers.
As in \cite{riverso2014plug}, we exploit Quasi-Stationary Line
(QSL) approximations of line dynamics \cite{Venkatasubramanian1995}
and use structured Lyapunov functions for mapping control design
into a Linear Matrix Inequality (LMI) problem.  This also allows to
automatically deny plugging-in/out requests if these operations
spoil the stability of the ImG.

         In order to validate our results, we run several
         simulations in PSCAD using realistic models of Buck
         converters and associated filters. As a first test, we
         consider two radially connected DGUs
         \cite{shafiee2014modeling} and we show that, in spite of QSL
         approximations, PnP controllers lead to very good
         performances in terms of voltage tracking and robustness to
         unknown load dynamics. We also show how to
         embed PnP controllers in a bumpless transfer scheme \cite{aastrom2006advanced} so as to
         avoid abrupt changes of the control variables due to controller
         switching. Then, we consider an ImG with 5 DGUs
         arranged in a meshed topology including loops and discuss the
         real-time plugging-in and out of a DGU.

          The paper is organized as follows. In Section \ref{sec:Model} we present dynamical models of ImGs and the adopted line approximation. In Section \ref{sec:PnPctrl}, the procedure for performing PnP operations is described. In Section \ref{sec:Simresults} we assess performance of PnP controllers through simulation case studies. Section \ref{sec:conclusions} is devoted to some conclusions.
    \section{Model of a DC Microgrid}
          \label{sec:Model}      

This section discusses dynamic models of ImGs. For clarity, we start
by introducing an ImG consisting of two parallel
DGUs, then we generalize the model to ImGs composed of $N$ DGUs. 
Consider the scheme depicted in Figure \ref{fig:schemairandist_DC}
comprising two DGUs denoted with $i$ and $j$ and connected through a
DC line with an impedance specified by parameters $R_{ij}>0$ and
$L_{ij}>0$. At each DGU level, a DC voltage source represents
a generic renewable resource and a Buck converter is present in order
to supply a local DC
load connected to the PCC through a series $LC$ filter. For instance, the DC load
can be a combination of resistive electronic loads and negative
resistance of constant power loads. Furthermore, we assume that
loads are unknown and we treat them as current disturbances ($I_L$)
\cite{riverso2014plug,Babazadeh2013}. 
	      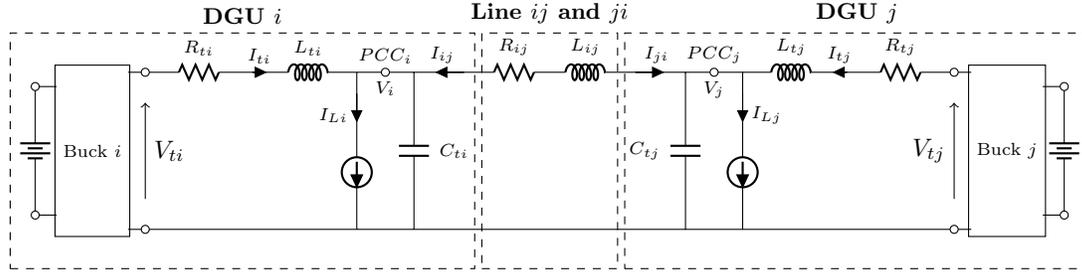
\begin{figure*}[!htb]
            \centering
           \ctikzset{bipoles/length=0.7cm}
\begin{circuitikz}[scale=.95,transform shape]
\ctikzset{current/distance=1}
\draw

% transformators i and j
node[] (Ti) at (0,0) {}
node[] (Tj) at ($(5.4,0)$) {}

% VSC i 
node[ocirc] (Aibattery) at ([xshift=-4.5cm,yshift=0.9cm]Ti) {}
node[ocirc] (Bibattery) at ([xshift=-4.5cm,yshift=-0.9cm]Ti) {}

(Bibattery) to [battery] (Aibattery) {}
node [rectangle,draw,minimum width=1cm,minimum height=2.4cm] (bucki) at ($0.5*(Aibattery)+0.5*(Bibattery)+(0.8,0)$) {\scriptsize{Buck $i$}}
(Aibattery) to [short] ([xshift=0.3cm]Aibattery)
(Bibattery) to [short] ([xshift=0.3cm]Bibattery) 

% filter i
node[ocirc] (Ai) at ($(Aibattery)+(1.54,0.2)$) {}
node[ocirc] (Bi) at ($(Bibattery)+(1.54,-0.2)$) {}
(Ai) to [short] ([xshift=-0.24cm]Ai)
(Bi) to [short] ([xshift=-0.24cm]Bi)
(Ai) to [R, l=\scriptsize{$R_{ti}$}] ($(Ai)+(1.5,0)$) {}
to [short,i=\scriptsize{$I_{ti}$}]($(Ai)+(1.6,0)$){}
to [L, l=\scriptsize{$L_{ti}$}]($(Ti)+(0,1.1)$){}
(Bi) to [short] ($(Ti)+(0,-1.1)$);
\begin{scope}[shorten >= 10pt,shorten <= 10pt,]
\draw[<-] (Ai) -- node[right] {$V_{ti}$} (Bi);
\end{scope};

\draw
% PCC-i
($(Ti)+(0.4,1.1)$) node[anchor=north]{\scriptsize{$V_i$}}
($(Ti)+(0.4,1.1)$) node[anchor=south]{\scriptsize{$PCC_i$}}
($(Ti)+(0.4,1.1)$) node[ocirc](PCCi){}
($(Ti)+(0,1.1)$)--($(Ti)+(0,0.8)$) to [short,i>_=\scriptsize{$I_{Li}$}]($(Ti)+(0,0.5)$)
to [I]($(Ti)+(0,-1.1)$)
($(Ti)+(0.8,1.1)$) to [C, l=\scriptsize{$C_{ti}$}] ($(Ti)+(0.8,-1.1)$)

% line
($(Ti)+(0,1.1)$) to [short] ($(Ti)+(0.8,1.1)$)
($(Ti)+(1.5,1.1)$) to [short,i_=\scriptsize{$I_{ij}$}] ($(Ti)+(1.2,1.1)$)--($(Ti)+(0.8,1.1)$)
($(Ti)+(1.5,1.1)$)--($(Ti)+(1.9,1.1)$) to [R, l=\scriptsize{$R_{ij}$}] ($(Ti)+(2.5,1.1)$) {}
to [L, l=\scriptsize{$L_{ij}$}]($(Tj)+(-1.5,1.1)$){}
($(Tj)+(-1.5,1.1)$) to [short,i=\scriptsize{$I_{ji}$}] ($(Tj)+(-1.2,1.1)$)--($(Tj)+(-0.8,1.1)$)
($(Tj)+(-0.8,1.1)$) to [short] ($(Tj)+(0,1.1)$)
($(Ti)+(0,-1.1)$) to [short] ($(Tj)+(0,-1.1)$)

%% PCC-j
($(Tj)+(-0.4,1.1)$) node[anchor=north]{\scriptsize{$V_j$}}
($(Tj)+(-0.4,1.1)$) node[anchor=south]{\scriptsize{$PCC_j$}}
($(Tj)+(-0.4,1.1)$) node[ocirc](PCCj){}
($(Tj)+(0,1.1)$)--($(Tj)+(0,0.8)$) to [short,i>=\scriptsize{$I_{Lj}$}]($(Tj)+(0,0.5)$)
to [I]($(Tj)+(0,-1.1)$)
($(Tj)+(-0.8,1.1)$) to [C, l_=\scriptsize{$C_{tj}$}] ($(Tj)+(-0.8,-1.1)$)

% VSC j
node[ocirc] (Ajbattery) at ([xshift=4.5cm,yshift=0.9cm]Tj) {}
node[ocirc] (Bjbattery) at ([xshift=4.5cm,yshift=-0.9cm]Tj) {}
(Bjbattery) to [battery] (Ajbattery) {}
node [rectangle,draw,minimum width=1cm,minimum height=2.4cm] (vsci) at ($0.5*(Ajbattery)+0.5*(Bjbattery)-(0.8,0)$) {\scriptsize{Buck $j$}}
(Ajbattery) to [short] ([xshift=-0.3cm]Ajbattery)
(Bjbattery) to [short] ([xshift=-0.3cm]Bjbattery)

%% filter j
node[ocirc] (Aj) at ($(Ajbattery)+(-1.54,0.2)$) {}
node[ocirc] (Bj) at ($(Bjbattery)+(-1.54,-0.2)$) {}
(Aj) to [short] ([xshift=0.24cm]Aj)
(Bj) to [short] ([xshift=0.24cm]Bj)
(Bj) to [short] ($(Tj)+(0,-1.1)$)
(Aj) to [R, l_=\scriptsize{$R_{tj}$}] ($(Aj)+(-1.5,0)$) {}
to [short,i_=\scriptsize{$I_{tj}$}]($(Aj)+(-1.6,0)$){}
($(Tj)+(0,1.1)$) to [L, l=\scriptsize{$L_{tj}$}]($(Aj)+(-1.6,0)$){};
\begin{scope}[shorten >= 10pt,shorten <= 10pt,]
\draw[<-] (Aj) -- node[left] {$V_{tj}$} (Bj);
\end{scope};

% DGU i and j and Line
\draw
node [rectangle,draw,minimum width=6.5cm,minimum height=3.3cm,dashed,label=\small\textbf{DGU $i$}] (DGUi) at ($0.5*(Aibattery)+0.5*(Bibattery)+(2.9,0)$) {}
node [rectangle,draw,minimum width=6.5cm,minimum height=3.3cm,dashed,label=\small\textbf{DGU $j$}] (DGUj) at ($0.5*(Ajbattery)+0.5*(Bjbattery)-(2.9,0)$) {}
node [rectangle,draw,minimum width=1.9cm,minimum height=3.3cm,dashed,label=\small\textbf{Line $ij$ and $ji$}] (Lineij) at ($0.5*(DGUi.center)+0.5*(DGUj.center)+(0,0)$){}

;\end{circuitikz}
           \caption{Electrical scheme of a DC ImG composed of two radially connected DGUs with unmodeled loads.}
           \label{fig:schemairandist_DC}
          \end{figure*}

Applying Kirchoff's voltage law and
Kirchoff's current law to the electrical scheme of Figure
\ref{fig:schemairandist_DC}, it is possible to write the following set
of equations:

% For $*\in\{i,j\}$, the model of DGU $*$ can be described as follows
%           \begin{subequations}
%             \label{eq:sysdistdq}
%             \begin{align}
%               \label{eq:sysdistdqB}\frac{dV_{*}}{dt}&= \frac{1}{C_{t*}}I_{t*}+\frac{1}{C_{t*}}I_{*\circ}-\frac{1}{C_{t*}}I_{L*}-\frac{1}{R_{*\circ} C_{t*}}V_{*} \\
%               \label{eq:sysdistdqA}\frac{dI_{t*}}{dt} & =-\frac{R_{t*}}{L_{t*}}I_{t*}-\frac{1}{L_{t*}}V_{*}+\frac{1}{L_{t*}}V_{t*}
%             \end{align}
%           \end{subequations}
%  where  $\circ\in\{i,j\}$ and $\circ\neq *$. For the line $*\circ$, we obtain
%           \begin{equation}
%             \label{eq:sysdistdqC}
%             \begin{aligned}
%               \frac{dI_{*\circ}}{dt} & = -\frac{R_{*\circ}}{L_{*\circ}}I_{{*\circ}}-\frac{1}{L_{*\circ}}V_{*}+\frac{1}{L_{*\circ}}V_{\circ}\\
%             \end{aligned}
%           \end{equation}

          \begin{subequations}
            \label{eq:sysdist}            
            \begin{empheq}[left=$DGU \emph{i}:\quad$\empheqlbrace]{align}
              \label{eq:sysdistB}\frac{dV_{i}}{dt} &= \frac{1}{C_{ti}}I_{ti}+\frac{1}{C_{ti}}I_{{ij}}-\frac{1}{C_{ti}}I_{Li} \\
              \label{eq:sysdistA}\frac{dI_{ti}}{dt}&= -\frac{R_{ti}}{L_{ti}}I_{ti}-\frac{1}{L_{ti}}V_{i}+\frac{1}{L_{ti}}V_{ti}
            \end{empheq}
            \begin{empheq}[left=$Line \emph{ij}:\quad$\empheqlbrace]{align}
              \label{eq:sysdistC}\frac{dI_{ij}}{dt}&= \frac{1}{L_{ij}}V_{j}-\frac{R_{ij}}{L_{ij}}I_{{ij}}-\frac{1}{L_{ij}}V_{i}
            \end{empheq}
            \begin{empheq}[left=$Line \emph{ji}:\quad$\empheqlbrace]{align}
              \label{eq:sysdistCji}\frac{dI_{ji}}{dt}&= \frac{1}{L_{ji}}V_{i}-\frac{R_{ji}}{L_{ji}}I_{{ji}}-\frac{1}{L_{ji}}V_{j}
            \end{empheq}
            \begin{empheq}[left=$DGU \emph{j}:\quad$\empheqlbrace]{align}
              \label{eq:sysdistD}\frac{dV_{j}}{dt} &= \frac{1}{C_{tj}}I_{tj}+\frac{1}{C_{tj}}I_{{ji}}-\frac{1}{C_{tj}}I_{Lj} \\
              \label{eq:sysdistE}\frac{dI_{tj}}{dt}&= -\frac{R_{tj}}{L_{tj}}I_{tj}-\frac{1}{L_{tj}}V_{j}+\frac{1}{L_{tj}}V_{tj}
            \end{empheq}
          \end{subequations}

 As in \cite{riverso2014plug}, we notice that from
 \eqref{eq:sysdistC} and \eqref{eq:sysdistCji} one gets two opposite line
 currents $I_{ij}$ and $I_{ji}$. This is equivalent to have a
 reference current entering in each DGU. We exploit the following
 assumption to ensure that $I_{ij}(t)=-I_{ji}(t)$, $\forall t\geq 0$.
          \begin{assum}
            \label{ass:lines}
            Initial states for the line currents fulfill
            $I_{ij}(0)=-I_{ji}(0)$. Furthermore, we set $L_{ij}=L_{ji}$ and $R_{ij}=R_{ji}$.
          \end{assum}

          \begin{rmk}
According to the terminology in Section 3.4 of \cite{Lunze1992}, the system in \eqref{eq:sysdistC}, \eqref{eq:sysdistCji} represents an expansion of the line
model one obtains introducing only a single state variable. System \eqref{eq:sysdist}
can also be viewed as a system of differential-algebraic equations,
given by (\ref{eq:sysdistB})-(\ref{eq:sysdistC}), (\ref{eq:sysdistD}),
(\ref{eq:sysdistE}) and $I_{ij}(t)=-I_{ji}(t)$.
\end{rmk}
At this point, we notice that adopting the above notation for the
lines, both DGU models have the same structure. In
particular, by recalling that the load current $I_{L*},\mbox{ }*\in{i,j}$
is treated as a disturbance, (\ref{eq:sysdist}) is the
following linear system         
\begin{equation}
            \label{eq:sysdistABCDM}
            \begin{aligned}
              \dot{x}(t) &= Ax(t)+Bu(t)+Md(t)\\
              y(t)       &= Cx(t)
            \end{aligned}
          \end{equation}
          where
          $x=[V_{i},I_{ti},I_{{ij}},I_{{ji}},V_{j,},I_{tj}]^T$
          is the state, $u=[V_{ti},V_{tj}]^T$
          the input, $d=[I_{Li},I_{Lj}]^T$ the
          disturbance and $y=[V_{i},V_{j}]^T$ the
          output of the system.
All matrices in (\ref{eq:sysdistABCDM}), which are obtained from
(\ref{eq:sysdist}), are given in Appendix \ref{sec:AppMasterSlave}.

Next, we show how to describe each DGU as
a dynamical system affected directly by state of the other DGU
connected to it. An approximate model will be proposed so that there
will be no need of
using the line current in the DGU state equations.
          
          \subsection{QSL model}
               \label{sec:newmodel}               
               As in \cite{Venkatasubramanian1995} and
               \cite{Akagi2007}, we set $\frac{d I_{{ij}}}{dt}=0$ and
               $\frac{d I_{{ji}}}{dt}=0$. Consequently, from
               \eqref{eq:sysdistC} and \eqref{eq:sysdistCji}, one gets the QSL model
               \begin{equation}
                 \label{eq:staticline}
                 \begin{aligned}
                   \bar{I}_{ij} = \frac{V_{j}}{R_{ij}}
                   -\frac{V_{i}}{R_{ij}}\\
                   \bar{I}_{ji} = \frac{V_{i}}{R_{ji}} -\frac{V_{j}}{R_{ji}}
                 \end{aligned}
               \end{equation}
               By replacing variables $I_{ij}$ and $I_{ji}$ in
               \eqref{eq:sysdistB} and \eqref{eq:sysdistD} with the right-hand side of
               \eqref{eq:staticline}, we obtain the following model of
               DGU $i$ 
               \begin{equation}
                 \label{eq:newDGU}
                 \text{DGU}~i:\quad\left\lbrace
                   \begin{aligned}
                     \frac{dV_{i}}{dt} &= \frac{1}{C_{ti}}I_{ti}-\frac{1}{C_{ti}}I_{Li}+\frac{1}{C_{ti}}\bar{I}_{ij}\\
                     \frac{dI_{ti}}{dt} &= -\frac{1}{L_{ti}}V_{i}-\frac{R_{ti}}{L_{ti}}I_{ti}+\frac{1}{L_{ti}}V_{ti}\\
                   \end{aligned}
                 \right.
               \end{equation}
               Switching indexes $i$ and $j$ in \eqref{eq:newDGU} one obtains the model of DGU $i$
               \begin{equation}
                 \label{eq:subsysDGUi}
                 \subss{\Sigma}{i}^{DGU} :
                 \left\lbrace
                 \begin{aligned}
                   \subss{\dot{x}}{i}(t) &= A_{ii}\subss{x}{i}(t) + B_{i}\subss{u}{i}(t)+M_{i}\subss{d}{i}(t)+ \subss\xi i(t)\\
                   \subss{y}{i}(t)       &= C_{i}\subss{x}{i}(t)\\
                   \subss{z}{i}(t)       &= H_{i}\subss{y}{i}(t)\\
                 \end{aligned}
                 \right.
               \end{equation}
               where $\subss{x}{i}=[V_{i},I_{ti}]^T$ is the state,
               $\subss{u}{i} = V_{ti}$ the control input,
               $\subss{d}{i} = I_{Li}$ the exogenous input and
               $\subss{z}{i} = V_{i}$ the controlled variable of the
               system. Moreover, $\subss y i(t)$ is the measurable
               output and we assume $\subss{y}{i}=\subss{x}{i}$, while
               $\subss\xi i(t)=A_{ij}\subss x j$ represents
               the coupling with DGU $j$.

The matrices of
               $\subss{\Sigma}{i}^{DGU}$ are obtained from
               \eqref{eq:newDGU} and they are provided in Appendix \ref{sec:AppMasterMaster}. As
               regards the line, we obtain the subsystem 
               \begin{equation}
                 \label{eq:subsysLine}
                 \subss{\Sigma}{ij}^{Line} :
                 \left\lbrace
                   \subss{\dot{x}}{l,ij}(t) = A_{ll,ij}\subss{x}{l,ij}(t) + A_{li,ij}\subss{x}{i}(t) + A_{lj,ij}\subss{x}{j}(t)\\
                 \right.
               \end{equation}
                with $\subss{x}{l,ij}=I_{ij}$ as the state of the line. The matrices of \eqref{eq:subsysLine} are derived from \eqref{eq:sysdistC} and reported in Appendix \ref{sec:AppMasterMaster}.
We have now all the ingredients to write the model of the overall
microgrid depicted in Figure \ref{fig:schemairandist_DC}. In
particular, from equations \eqref{eq:subsysDGUi} and
\eqref{eq:subsysLine}, we get
        \begin{equation}
                  \label{eq:overallmodeltwoDGU}
                  \begin{aligned}
                    \begin{bmatrix}
                      \subss{\dx}{i} \\
                      \subss{\dx}{j} \\
                      \subss{\dx}{l,ij} \\
                      \subss{\dx}{l,ji}
                    \end{bmatrix} 
                    &= 
                    \begin{bmatrix}
                      A_{ii} & A_{ij} & 0 & 0 \\
                      A_{ji} & A_{jj} & 0 & 0 \\
                      A_{li,ij} & A_{lj,ij} & A_{ll,ij} & 0 \\
                      A_{li,ji} & A_{lj,ji} & 0 & A_{ll,ji}
                    \end{bmatrix}
                    \begin{bmatrix}
                      \subss{x}{i} \\
                      \subss{x}{j} \\
                      \subss{x}{l,ij} \\
                      \subss{x}{l,ji}
                    \end{bmatrix}
                    +
                    \begin{bmatrix}
                      B_{i} & 0\\
                      0 & B_{j} \\
                      0 & 0  \\
                      0 & 0
                    \end{bmatrix}
                    \begin{bmatrix}
                      \subss{u}{i} \\
                      \subss{u}{j}  
                    \end{bmatrix}
                    +
                    \begin{bmatrix}
                      M_{i} & 0\\
                      0 & M_{j} \\
                      0 & 0  \\
                      0 & 0
                    \end{bmatrix}
                    \begin{bmatrix}
                      \subss{d}{i} \\
                      \subss{d}{j}  
                    \end{bmatrix}
                    \\		
                    \begin{bmatrix}
                      \subss{y}{i}\\
                      \subss{y}{j}
                    \end{bmatrix}
                    &=
                    \begin{bmatrix}
                      C_{1} & 0 & 0 & 0 \\
                      0 & C_{2} & 0 & 0 \\
                    \end{bmatrix}
                    \begin{bmatrix}
                      \subss{x}{i} \\
                      \subss{x}{j} \\
                      \subss{x}{l,ij} \\
                      \subss{x}{l,ji} 
                    \end{bmatrix}
                    \\
                    \begin{bmatrix}
                      \subss{z}{i}\\
                      \subss{z}{j}
                    \end{bmatrix}
                    &=
                    \begin{bmatrix}
                      H_{i} & 0\\
                      0 & H_{j}
                    \end{bmatrix}
                    \begin{bmatrix}
                      \subss{y}{i} \\
                      \subss{y}{j} \\ 
                    \end{bmatrix}.
                  \end{aligned}
                \end{equation}
\begin{rmk}
Consider the structure of matrix A
  \begin{equation*}
                    \label{eq:blktriangmatrix}
                    A=\left[\begin{array}{cc|cc}
			A_{ii} & A_{ij} & 0 & 0  \\
			A_{ji} & A_{jj} & 0 & 0 \\ \hline	
			A_{li,ij} & A_{lj,ij} & A_{ll,ij} & 0 \\
                        A_{li,ji} & A_{lj,ji} & 0 & A_{ll,ji}
                      \end{array}\right]
                  \end{equation*}
We notice that A is block-triangular, therefore its eigenvalues are
given by the union of those of $\matr{ A_{ii} & A_{ij} \\ A_{ji} &
  A_{jj} }$, $A_{ll,ij}$ and $A_{ll,ji}$. Moreover, we have
$A_{ll,ij}=A_{ll,ji}$. By virtue of the positivity of the line
parameters, line dynamics is asymptotically stable. As a consequence,
stability of \eqref{eq:overallmodeltwoDGU}  the depends on the
stability of local DGUs connected through the QSL model \eqref{eq:staticline}. 
Hence, designing decentralized controllers $\subss{u}{*}= k_*(\subss y
*)$, $*\in\{i,j\}$, such that the connection of the DGUs is
asymptotically stable implies stability of the overall closed-loop
model of the microgrid. We refer to the resulting system as QSL-ImG model.
\end{rmk}
                
          \subsection{QSL model of a  microgrid composed of $N$ DGUs}
               In this section, a generalization of model \eqref{eq:subsysDGUi} to ImGs composed of $N$ DGUs is
               presented. Let
               $\DD=\{1,\ldots,N\}£$. First, we call
               two DGUs neighbours if there is a
               transmission line connecting them. Then, we denote with
               $\NN_i\subset\DD$ the subset of neighbours of DGU
               $i$. We highlight that the neighbouring relation is symmetric,
               consequently $j\in\NN_i$ implies $i\in\NN_j$. In order
               to describe the dynamics of DGU $i$, we use model
               \eqref{eq:subsysDGUi}, with $\subss\xi i
               =\sum_{j\in\NN_i} A_{ij}\subss{x}{j}(t)$. The new
               matrices of $\subss{\Sigma}{i}^{DGU}$ are given in Appendix
               \ref{sec:AppNDGunit} while the overall QSL-ImG
               model can be written as follows
  \begin{subequations}
                 \label{eq:stdform}
                 \begin{align}
                   \label{eq:stdformA}\mbf{\dot{x}}(t) &= \mbf{Ax}(t) + \mbf{Bu}(t)+ \mbf{Md}(t)\\
                   \label{eq:stdformB}\subss{\dot{x}}{l,ij}(t) &= A_{ll,ij}\subss{x}{l,ij}(t) + A_{li,ij}\subss{x}{i}(t) + A_{lj,ij}\subss{x}{j}(t),~\forall i\in\DD,~\forall j\in\NN_i
                 \end{align}
               \end{subequations}
               \begin{equation}
                 \begin{aligned}
                   \label{eq:stdformOut}
                   \mbf{y}(t)       &= \mbf{Cx}(t)\\
                   \mbf{z}(t)       &= \mbf{Hy}(t)
                 \end{aligned}
               \end{equation}
               where $\mbf x = (\subss x 1,\ldots,\subss x
               N)\in\Rset^{2N}$, $\mbf u = (\subss u 1,\ldots,\subss u
               N)\in\Rset^{N}$, $\mbf d = (\subss d 1,\ldots,\subss d
               N)\in\Rset^{N}$, $\mbf y = (\subss y 1,\ldots,\subss y
               N)\in\Rset^{2N}$, $\mbf z = (\subss z 1,\ldots,\subss z
               N)\in\Rset^{N}$. Matrices $\mbf{A}$, $A_{ll,ij}$,
               $A_{li,ij}$, $A_{lj,ij}$, $\mbf{B}$, $\mbf
               M$, $\mbf C$ and $\mbf H$ are reported in Appendix
               \ref{sec:AppMasterMaster} and \ref{sec:AppNDGunit}. 

Note that neither $\mbf y$ nor $\mbf z$ depend upon states $\subss x
{l,ij}$. Moreover, even $\subss x {l,ij}$ does not influence $\mbf x$. Hence, equations \eqref{eq:stdformB} will be omitted in the sequel.               
               
     \section{Plug-and-Play decentralized voltage control}
          \label{sec:PnPctrl}
	  \subsection{Decentralized control scheme with integrators}
               \label{sec:ctrlint}
               Let $\mbf{z_{ref}}(t)$ denote the constant desired
               reference trajectory for the output $\mbf{z}(t)$. In
               order to track asymptotically $\mbf{z_{ref}}(t)$ when
               $\mbf{d}(t)$ is constant, we consider the augmented ImG
               model with integrators \cite{Skogestad1996}. A necessary condition for having that the steady-state
               error $\mbf{e}(t)=\mbf{z_{ref}}(t)-\mbf{z}(t)$ tends to
               zero as $t\rightarrow\infty$, is that for arbitrary
               constant signals $\mbf{d}(t)=\mbf{\bar d}$ and $\mbf{z_{ref}}(t)=\mbf{\bar
               z_{ref}}$, there are equilibrium states and inputs $\mbf{\bar
               x}$ and $\mbf{\bar u}$ verifying
               \begin{equation}
                 \begin{aligned}
                   \Zero &= \mbf{A\bar{x}}+\mbf{B\bar{u}}+\mbf{M\bd}\\
                   \mbf{\bar z_{ref}} &= \mbf{HC\bar{x}}
                 \end{aligned}
               \end{equation}
               \begin{equation}
                 \label{eq:cond_integrators}
                 \Gamma\begin{bmatrix}
                   \mbf{\bar{x}}\\
                   \mbf{\bar{u}}
                 \end{bmatrix}
                 =\begin{bmatrix}
                   \Zero & \mbf{-M}\\
                   \mbf{I} & \Zero
                 \end{bmatrix}
                 \begin{bmatrix}
                   \mbf{\bar z_{ref}}\\
                   \mbf{\bd}
                 \end{bmatrix},\quad
                 \Gamma = \begin{bmatrix}
                   \mbf{A} & \mbf{B}\\
                   \mbf{HC} & \Zero
                 \end{bmatrix} \in \Rset^{3N\times 3N}
               \end{equation}
               \begin{prop}
                 \label{prop:target}
                 Given $\mbf{\bar z_{ref}}$ and $\mbf{\bd}$, vectors
                 $\mbf{\bar{x}}$ and $\mbf{\bar{u}}$ satisfying \eqref{eq:cond_integrators} always exist.
               \end{prop}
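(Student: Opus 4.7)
Since \eqref{eq:cond_integrators} is a square linear system in $\Rset^{3N}$, it suffices to show that $\Gamma$ is invertible; existence of $\mbf{\bar x}$ and $\mbf{\bar u}$ for every RHS then follows automatically. I would argue invertibility by explicit construction of the unique solution, exploiting the block structure inherited from the per-DGU model \eqref{eq:subsysDGUi} and the form of the QSL coupling \eqref{eq:staticline}.

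First I would unpack \eqref{eq:cond_integrators} into the two block equations
\[
\mbf{A\bar x}+\mbf{B\bar u}=-\mbf{M\bd}, \qquad \mbf{HC\bar x}=\mbf{\bar z_{ref}}.
\]
Since $\subss z i=V_i$ and $\subss y i=\subss x i$, the matrix $\mbf{HC}$ simply extracts the voltage components of $\mbf{\bar x}$, so the second equation fixes $\bar V_i=\bar z_{ref,i}$ uniquely for every $i\in\DD$. Next, examining the first (voltage) row of each DGU block in \eqref{eq:subsysDGUi} after substituting the QSL expression \eqref{eq:staticline}, the first equation reduces, for each $i$, to
\[
\bar I_{ti}=\bar I_{Li}+\sum_{j\in\NN_i}\frac{\bar V_i-\bar V_j}{R_{ij}},
\]
which determines $\bar I_{ti}$ uniquely from the already-fixed $\bar V$'s and from $\mbf{\bd}$. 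Finally, the second (current) row of each DGU block yields
\[
\bar V_{ti}=\bar V_i+R_{ti}\bar I_{ti},
\]
so the control input $\subss{\bar u}{i}=\bar V_{ti}$ is also uniquely pinned down.

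Thus the map $(\mbf{\bar z_{ref}},\mbf{\bd})\mapsto(\mbf{\bar x},\mbf{\bar u})$ is well-defined and single-valued for every RHS, proving that $\Gamma$ has trivial kernel and hence (being square) is invertible. In particular $\mbf{\bar x}$ and $\mbf{\bar u}$ satisfying \eqref{eq:cond_integrators} exist for arbitrary $\mbf{\bar z_{ref}}$ and $\mbf{\bd}$.

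I do not expect any serious obstacle: the only thing to be careful about is correctly reading off the self-coupling contribution $-\sum_{j\in\NN_i}1/(C_{ti}R_{ij})$ on the diagonal of $A_{ii}$, which cancels against the cross terms $\sum_{j\in\NN_i}\bar V_j/(C_{ti}R_{ij})$ to produce the clean equilibrium law for $\bar I_{ti}$. The key structural facts used are that (i) the voltage appears directly as controlled output, (ii) each DGU has exactly one integrator per input, and (iii) $R_{ti}>0$ and $L_{ti}>0$ guarantee the triangular recursion above is actually a recursion (no division by zero arises).
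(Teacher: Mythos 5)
Your proof is correct, and it takes a more constructive route than the paper. The paper's proof invokes a general criterion from Skogestad and Postlethwaite: a steady state $(\mbf{\bar x},\mbf{\bar u})$ exists if and only if (i) the number of controlled variables does not exceed the number of inputs and (ii) the plant has no invariant zeros at the origin, i.e.\ $\rank(\Gamma)=3N$; it verifies (i) from the fact that $\subss{u}{i}$ and $\subss{z}{i}$ are both scalar, and then merely \emph{asserts} (ii) ``by the definition of the matrices and the positivity of the electrical parameters,'' without computation. You instead prove invertibility of $\Gamma$ directly by exhibiting the unique solution: $\mbf{HC}$ extracts the voltages so $\bar V_i=\bar z_{\mathrm{ref},i}$, the equilibrium of the voltage row gives $\bar I_{ti}=\bar I_{Li}+\sum_{j\in\NN_i}(\bar V_i-\bar V_j)/R_{ij}$, and the current row gives $\bar V_{ti}=\bar V_i+R_{ti}\bar I_{ti}$; this triangular elimination both produces the required $(\mbf{\bar x},\mbf{\bar u})$ and (applied to the zero right-hand side) shows $\ker\Gamma=\{0\}$. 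Your argument is self-contained, supplies exactly the rank verification the paper leaves implicit, and as a by-product yields uniqueness of the equilibrium and its closed form (the Kirchhoff current balance at each PCC plus the resistive voltage drop across the filter), which the paper's citation-based argument does not give explicitly. Two cosmetic remarks: the diagonal term $-\sum_{j\in\NN_i}1/(R_{ij}C_{ti})$ does not literally ``cancel'' the cross terms but combines with them into the net line current $\sum_{j\in\NN_i}(\bar V_j-\bar V_i)/R_{ij}$; and the cleanest way to state the final step is that uniqueness of the solution for the zero right-hand side already gives $\ker\Gamma=\{0\}$, hence invertibility of the square matrix $\Gamma$, from which existence for every right-hand side follows.
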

               \begin{proof}
                 From \cite{Skogestad1996}, we know that exists
                 $\mbf{\bar{x}},\mbf{\bar{u}}$ verifying
                 \eqref{eq:cond_integrators} if and only if the following two conditions are fulfilled:
                 
                 \begin{enumerate}[(i)]
                 \item\label{enu:cond_integrators_p1} 
                   The number of controlled variables is not greater than the number of control inputs.
                 \item\label{enu:cond_integrators_p2} The system under
                   control has no invariant zeros (i.e. $\rank{(\Gamma)}=3N$).
                 \end{enumerate}
                 Condition (\ref{enu:cond_integrators_p1}) is
                 fulfilled since from \eqref{eq:subsysDGUi} one
                 has that $\subss u i$ and $\subss z i$ have the same
                 size, $\forall i\in\DD$. In order to prove Condition
                 (\ref{enu:cond_integrators_p2}), we exploit the definition of matrices $\mbf A$, $\mbf B$, $\mbf C$ and $\mbf H$ and the fact that electrical parameters are positive.
               \end{proof}
              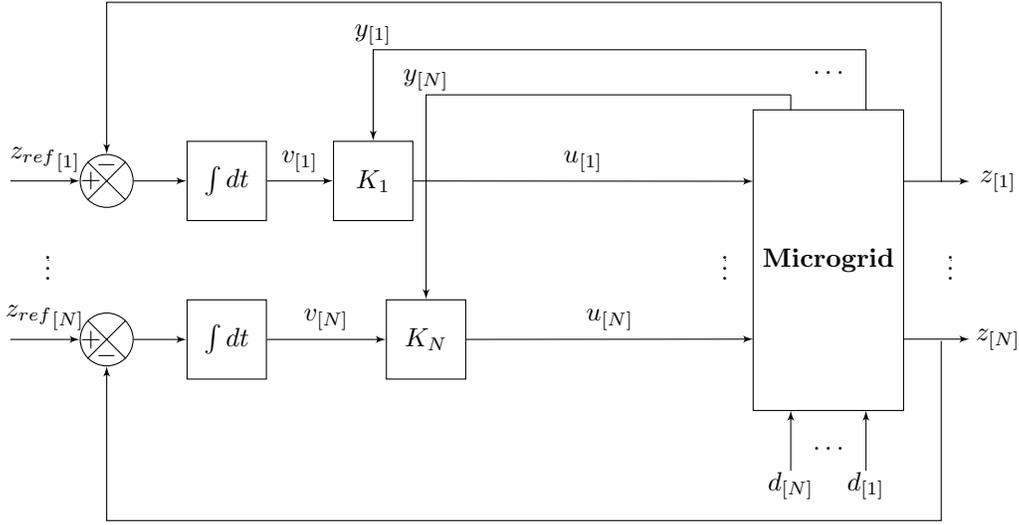
\begin{figure}[!htb]
                 \centering
                 \tikzstyle{input} = [coordinate]
\tikzstyle{output} = [coordinate]
\tikzstyle{guide} = []
\tikzstyle{block} = [draw, rectangle, minimum height=1cm]

\begin{tikzpicture}
  % zref and microgrid
  \sbEntree{zref1}
  \sbDecaleNoeudy[3]{zref1}{zrefj}
  \sbDecaleNoeudy[3]{zrefj}{zrefN}
  \node [block, right of=zrefj,node distance=11cm,minimum height=4cm, minimum width=2cm] (microgrid) {\textbf{Microgrid}};
  \node [guide, right of=zrefj,xshift=-0.5cm] (zrefjline) {};
  \draw [draw] (zrefjline) -| node{$\vdots$} (zrefjline);
  
  % line of blocks for DGU i
  \sbComph{sumret1}{zref1}
  \sbBloc{integrator1}{$\int dt$}{sumret1}
  \sbBloc[2.5]{controller1}{$K_1$}{integrator1}
  \sbRelier[$\subss{z_{ref}}{1}$]{zref1}{sumret1}
  \sbRelier{sumret1}{integrator1}
  \sbRelier[$\subss v 1$]{integrator1}{controller1}
  \node [guide, left of=microgrid,yshift=1.05cm,xshift=0.125cm] (u1end) {};
  \sbRelier[$\subss u 1$]{controller1}{u1end}
  
  % line of blocks for DGU N
  \sbComp{sumretN}{zrefN}
  \sbBloc{integratorN}{$\int dt$}{sumretN}
  \sbBloc[4.5]{controllerN}{$K_N$}{integratorN}
  \sbRelier[$\subss{z_{ref}}{N}$]{zrefN}{sumretN}
  \sbRelier{sumretN}{integratorN}
  \sbRelier[$\subss v N$]{integratorN}{controllerN}
  \node [guide, left of=microgrid,yshift=-1.05cm,xshift=0.125cm] (uNend) {};
  \sbRelier[$\subss u N$]{controllerN}{uNend}
  
  % d
  \node [output, below of=microgrid,yshift=-1.8cm,xshift=0.5cm] (d1out) {};
  \node [output, below of=microgrid,yshift=-1.0cm,xshift=0.5cm] (d1outstart) {};
  \draw [draw,->,>=latex'] (d1out) -| node[yshift=-0.2cm]{$\subss d 1$} (d1outstart);
  
  \node [output, below of=microgrid,yshift=-1.5cm] (djout) {};
  \draw [draw] (djout) -| node {$\ldots$} (djout);
  
  \node [output, below of=microgrid,yshift=-1.8cm,xshift=-0.5cm] (dNout) {};
  \node [output, below of=microgrid,yshift=-1.0cm,xshift=-0.5cm] (dNoutstart) {};
  \draw [draw,->,>=latex'] (dNout) -| node[yshift=-0.2cm]{$\subss d N$} (dNoutstart);

  % y
  \node [output, above of=microgrid,yshift=1.8cm,xshift=0.5cm] (y1out) {};
  \node [output, above of=microgrid,yshift=1.0cm,xshift=0.5cm] (y1outstart) {};
  \node [output, above of=microgrid,yshift=1.5cm] (yjout) {};
  \node [output, above of=microgrid,yshift=1.2cm,xshift=-0.5cm] (yNout) {};
  \node [output, above of=microgrid,yshift=1.0cm,xshift=-0.5cm] (yNoutstart) {};
  \draw [draw] (y1outstart) -- node {} (y1out);
  \draw [draw,->,>=latex'] (y1out) -| node[yshift=0.2cm]{$\subss y 1$} (controller1);
  \draw [draw] (yjout) -| node {$\ldots$} (yjout);
  \draw [draw] (yNoutstart) -- node {} (yNout);
  \draw [draw,->,>=latex'] (yNout) -| node[yshift=0.2cm]{$\subss y N$} (controllerN);
  
  % z
  \node [guide, right of=microgrid,yshift=1.05cm,xshift=-.125cm] (z1) {};
  \node [guide, right of=microgrid,yshift=1.165cm,xshift=0.5cm] (z1near) {};
  \node [guide, right of=microgrid,yshift=1.05cm,xshift=1cm] (z1end) {};
  \draw [draw,->,>=latex',near end,swap] (z1) -- node[xshift=0.6cm] {$\subss{z} 1$} (z1end);
  \sbRenvoi[-6.8]{z1near}{sumret1}{}

  \node [guide, right of=microgrid,yshift=-1.05cm,xshift=-.125cm] (zN) {};
  \node [guide, right of=microgrid,yshift=-0.95cm,xshift=0.5cm] (zNnear) {};
  \node [guide, right of=microgrid,yshift=-1.05cm,xshift=1cm] (zNend) {};
  \draw [draw,->,>=latex',near end,swap] (zN) -- node[xshift=0.6cm] {$\subss{z} N$} (zNend);
  \sbRenvoi[6.8]{zNnear}{sumretN}{}
  
  \node [guide, right of=microgrid,xshift=0.5cm] (zj) {};
  \draw [draw] (zj) -| node {$\vdots$} (zj);
  
  \node [guide, left of=microgrid,xshift=-0.5cm] (uj) {};
  \draw [draw] (uj) -| node {$\vdots$} (uj);
  
\end{tikzpicture}
                 \caption{Control scheme with integrators for the overall
                   augmented model.}
                 \label{fig:schemaint}
               \end{figure}
               The dynamics of the integrators is (see Figure \ref{fig:schemaint})
               \begin{equation}
                 \label{eq:integrators}
                 \begin{aligned}
                   \subss{\dot{v}}{i}(t) = \subss{e}{i}(t) &= \subss{z_{ref}}{i}(t)-\subss{z}{i}(t) \\
                   &= \subss{z_{ref}}{i}(t)-H_{i}C_{i}\subss{x}{i}(t),
                 \end{aligned}
               \end{equation}
               and hence, the DGU model augmented with integrators is
               \begin{equation}
                 \label{eq:modelDGUgen-aug}
  \subss{\hat{\Sigma}}{i}^{DGU} :
                 \left\lbrace
                 \begin{aligned}
                   \subss{\dot{\hat{x}}}{i}(t) &= \hat{A}_{ii}\subss{\hat{x}}{i}(t) + \hat{B}_{i}\subss{u}{i}(t)+\hat{M}_{i}\subss{\hat{d}}{i}(t)+\subss{\hat\xi}i(t)\\
                   \subss{\hat{y}}{i}(t)       &= \hat{C}_{i}\subss{\hat{x}}{i}(t)\\
                   \subss{z}{i}(t)       &= \hat{H}_{i}\subss{\hat{y}}{i}(t)
                 \end{aligned}
                 \right.
               \end{equation}
               where $\subss{\hat{x}}{i}=[\subss{x^T}
               i,v_{i,}]^T\in\Rset^3$ is the state,
               $\subss{\hat{y}}{i}=\subss{\hat{x}}{i}\in\Rset^3$ is
               the measurable output,
               $\subss{\hat{d}}{i}=[\subss{d}{i},\subss{z_{ref}}{i}]^T\in\Rset^2$
               collects the exogenous signals (both current of the
               load and reference signals) and
               $\subss{\hat\xi}i(t)=\sum_{j\in\NN_i}\hat{A}_{ij}\subss{\hat{x}}{j}(t)$. Matrices
               in \eqref{eq:modelDGUgen-aug} are defined as follows
  \begin{equation}
                 \label{eq:augith}
                 \hat{A}_{ii}=\begin{bmatrix}
                   A_{ii} & 0\\
                   -H_{i}C_{i} & 0
                 \end{bmatrix}
                 \hat{A}_{ij}=\begin{bmatrix}
                   A_{ij} &0\\
                   0&0
                 \end{bmatrix}
                 \hat{B}_{i}=\begin{bmatrix}
                   B_{i}\\
                   0
                 \end{bmatrix}
                 \hat{C}_{i}=\begin{bmatrix}
                   C_{i} & 0\\
                   0 & I
                 \end{bmatrix}
                 \hat{M}_{i}=\begin{bmatrix}
                   M_{i} & 0 \\
                   0 & 1
                 \end{bmatrix}
                 \hat{H}_{i}=\begin{bmatrix}
                   H_{i} & 0
                 \end{bmatrix}.
               \end{equation}             
               Through the following proposition we make sure that the pair $(\hat{A}_{ii},\hat{B}_{i})$ is controllable, thus system \eqref{eq:modelDGUgen-aug} can be stabilized.
               \begin{prop}
                 \label{prop:local_controllability}
                 The pair $(\hat{A}_{ii},\hat{B}_i)$ is controllable.
               \end{prop}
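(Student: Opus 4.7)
The plan is to invoke the Kalman rank condition: since the augmented system in \eqref{eq:modelDGUgen-aug} has state dimension $n_i = \dim(\subss{\hat x}{i}) = 3$, it suffices to show that the controllability matrix
\[
\mathcal{C}_i = \begin{bmatrix} \hat{B}_i & \hat{A}_{ii}\hat{B}_i & \hat{A}_{ii}^2 \hat{B}_i \end{bmatrix} \in \Rset^{3\times 3}
\]
is nonsingular. Equivalently, one could use the Popov--Belevitch--Hautus test, but since the block structure of $\hat{A}_{ii}$ in \eqref{eq:augith} forces $\lambda=0$ to be an eigenvalue (the lower-right block is zero), the PBH check would still essentially reduce to the same algebraic verification, so a direct determinant computation is the cleanest route.

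First I would make the matrices fully explicit. From the DGU dynamics \eqref{eq:newDGU} and the definitions $\subss{y}{i}=\subss{x}{i}$ and $\subss{z}{i}=V_i$, we have $C_i = I_2$, $H_i = \begin{bmatrix}1 & 0\end{bmatrix}$, and
\[
A_{ii} = \begin{bmatrix} 0 & 1/C_{ti} \\ -1/L_{ti} & -R_{ti}/L_{ti} \end{bmatrix}, \qquad B_i = \begin{bmatrix} 0 \\ 1/L_{ti} \end{bmatrix}.
\]
Plugging into \eqref{eq:augith} yields
\[
\hat{A}_{ii} = \begin{bmatrix} 0 & 1/C_{ti} & 0 \\ -1/L_{ti} & -R_{ti}/L_{ti} & 0 \\ -1 & 0 & 0 \end{bmatrix}, \qquad \hat{B}_i = \begin{bmatrix} 0 \\ 1/L_{ti} \\ 0 \end{bmatrix}.
\]

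Then I would carry out the routine computation of the two additional columns $\hat{A}_{ii}\hat{B}_i$ and $\hat{A}_{ii}^2\hat{B}_i$, assemble $\mathcal{C}_i$, and expand the determinant along the third row, which has only one nonzero entry (coming from $\hat{A}_{ii}^2 \hat{B}_i$, which feeds the integrator through $-H_iC_i$ applied to the velocity $\hat{A}_{ii}\hat{B}_i$). The calculation collapses to a single $2\times 2$ minor and gives $\det \mathcal{C}_i = \pm 1/(C_{ti}^2 L_{ti}^3)$.

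The conclusion then follows immediately from the standing assumption that the electrical parameters $R_{ti},\,L_{ti},\,C_{ti}$ are strictly positive: $\det\mathcal{C}_i \neq 0$, so $\mathcal{C}_i$ has full rank $3$ and $(\hat{A}_{ii},\hat{B}_i)$ is controllable. No step is particularly delicate; the only thing to be careful about is the sign bookkeeping in the integrator row (the $-H_iC_i$ block), which is what guarantees that the third column of $\mathcal{C}_i$ has a nonzero entry in its last row and hence that $\mathcal{C}_i$ is not rank-deficient in the direction $v_i$ orthogonal to the image of $\hat{B}_i$.
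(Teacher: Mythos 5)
Your proof is correct, and it follows the same overall strategy as the paper (Kalman rank test on the $3\times 3$ controllability matrix $[\hat B_i \;\; \hat A_{ii}\hat B_i \;\; \hat A_{ii}^2\hat B_i]$), but the way you certify full rank is different. The paper factors the controllability matrix as $\hat M^C_{i,1}\hat M^C_{i,2}$ with $\hat M^C_{i,1}=\left[\begin{smallmatrix} A_{ii} & B_i\\ -H_iC_i & 0\end{smallmatrix}\right]$, i.e.\ it isolates the two structural conditions under which integral augmentation preserves controllability: $(A_{ii},B_i)$ controllable (encoded in the second factor) and no invariant zero at $s=0$ (nonsingularity of the first factor, which is the local block of $\Gamma$ from \eqref{eq:cond_integrators}). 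Your direct determinant expansion along the integrator row is more elementary and yields the explicit value $\det\mathcal{C}_i = 1/(C_{ti}^2 L_{ti}^3)$, which makes the positivity argument completely concrete; the paper's factorization is terser on the computation but exposes why the result holds and connects to Proposition~\ref{prop:target}. One small slip: the $(1,1)$ entry of $A_{ii}$ in the QSL model is not $0$ but $-\sum_{j\in\NN_i}\frac{1}{R_{ij}C_{ti}}$ (see \eqref{eq:Aii}), since the self-term of the quasi-stationary line current is absorbed into $A_{ii}$ rather than $A_{ij}$. Fortunately this does not affect your argument: expanding the determinant along the third row deletes the only column in which that entry appears in the surviving $2\times 2$ minor, so $\det\mathcal{C}_i$ depends only on $C_{ti}$ and $L_{ti}$ and your stated value and conclusion stand unchanged. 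You should nonetheless correct the displayed matrix so that it matches the system the proposition is actually about.
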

               \begin{proof}
                 Using the definition of controllability matrix, we get
                 \begin{small}
                   \begin{equation}
                     \label{eq:ctrb}
                     \begin{aligned}
                       \hat{M}^C_i &= \begin{bmatrix}
                         \hat B_{i} & \hat{A}_{ii}\hat{B}_i & \hat{A}_{ii}^2\hat{B}_i 
                       \end{bmatrix}\\
                       % &= \begin{bmatrix}
                       %   B_{i} & A_{ii}B_i & A_{ii}^2B_i \\
                       %   0 & -H_iC_iB_i & -H_iC_iA_{ii}B_i & -H_iC_iA_{ii}^2B_i
                       % \end{bmatrix}\\
                       &= \underbrace{\begin{bmatrix}
                           A_{ii} & B_i \\ -H_iC_i & 0 
                         \end{bmatrix}}_{\hat{M}^C_{i,1}}\underbrace{\begin{bmatrix}
                           0 & B_i & A_{ii}B_i & A_{ii}^2B_i  \\ I & 0 & 0 & 0 
                         \end{bmatrix}}_{\hat{M}^C_{i,2}}.
                     \end{aligned}
                   \end{equation}
                 \end{small}
                 Matrices $\hat{M}^C_{i,1}$ and $\hat{M}^C_{i,2}$ have always full rank, since all electrical parameters are positive, hence $\rank(\hat{M}^C_i)=3$. Therefore the pair $(\hat{A}_{ii},\hat{B}_i)$ is controllable.
             \end{proof}
               
               The overall augmented system is obtained from \eqref{eq:modelDGUgen-aug} as
               \begin{equation}
                 \label{eq:sysaugoverall_1}
                 \left\lbrace
                   \begin{aligned}
                     \mbf{\dot{\hat{x}}}(t) &= \mbf{\hat{A}\hat{x}}(t) + \mbf{\hat{B}u}(t)+ \mbf{\hat{M}\hat{d}}(t)\\
                     \mbf{\hat{y}}(t)       &= \mbf{\hat{C}\hat{x}}(t)\\
                     \mbf{z}(t)       &= \mbf{\hat{H}\hat{y}}(t)
                   \end{aligned}
                 \right.
               \end{equation}
               where $\mbf{\hat{x}}$, $\mbf{\hat{y}}$ and $\mbf{\hat{d}}$ collect variables $\subss{\hat{x}}{i}$, $\subss{\hat{y}}{i}$ and $\subss{\hat{d}}{i}$ respectively, and matrices $\mbf{\hat{A}}, \mbf{\hat{B}}, \mbf{\hat{C}}, \mbf{\hat{M}}$ and $\mbf{\hat{H}}$ are obtained from systems \eqref{eq:modelDGUgen-aug}.

	  \subsection{Decentralized PnP control}
          \label{sec:riferimento}
               This section presents the adopted control approach that
               allows us to design local controlles while guaranteeing
               asymptotic stability for the augmented system
               \eqref{eq:sysaugoverall_1}. Local controllers are
               synthesized in a decentralized fashion permitting PnP
               operations. Let us equip each DGU $\subss{\hat{\Sigma}}{i}^{DGU}$ with the following state-feedback controller
               \begin{equation}
                 \label{eq:ctrldec}
                 \subss{\CC}{i}:\qquad \subss{u}{i}(t)=K_{i}\subss{\hat{y}}{i}(t)=K_{i}\subss{\hat{x}}{i}(t)
               \end{equation}
               where $K_{i}\in\Rset^{1\times3}$ and 
               controllers $\subss{\CC}{i}$, $i\in\DD$ are
               decentralized since the computation of
               $\subss{u}{i}(t)$ requires the state of
               $\subss{\hat{\Sigma}}{i}^{DGU}$ only. Let nominal
               subsystems be given by $\subss{\hat{\Sigma}}{i}^{DGU}$
               without coupling terms $\subss{\hat\xi}i(t)$. 
               We aim to design local controllers $\subss{\CC}{i}$
               such that the nominal closed-loop subsystem 
               \begin{equation}
                 \label{eq:modelDGUgen-aug-closed}
                 \left\lbrace
                   \begin{aligned}
                     \subss{\dot{\hat{x}}}{i}(t) &= (\hat{A}_{ii}+ \hat{B}_{i}K_{i})\subss{\hat{x}}{i}(t)+\hat{M}_{i}\subss{\hd}{i}(t)\\
                     \subss{\hat{y}}{i}(t)       &= \hat{C}_{i}\subss{\hat{x}}{i}(t)\\
                     \subss{z}{i}(t)       &= \hat{H}_{i}\subss{\hat{y}}{i}(t)
                   \end{aligned}
                 \right.
               \end{equation}
is asymptotically stable. From Lyapunov theory, we know
               that if there exists a symmetric matrix $P_{i}\in\Rset^{3\times3}, P_{i}>0$ such that 
               \begin{equation}
                 \label{eq:Lyapeqnith}
                 (\hat{A}_{ii}+\hat{B}_{i}K_{i})^T
                 P_{i}+P_{i}(\hat{A}_{ii}+\hat{B}_{i}K_{i})<0,
               \end{equation}
               then the nominal closed-loop subsystem equipped with
               controller $\subss{\CC}{i}$ is asymptotically stable.
               Similarly, the closed-loop QSL-ImG be given by
               \eqref{eq:sysaugoverall_1} and \eqref{eq:ctrldec} 
               \begin{equation}
                 \label{eq:sysaugoverallclosed}
                 \left\lbrace
                   \begin{aligned}
                     \mbf{\dot{\hat{x}}}(t) &= (\mbf{\hat{A}+\hat{B}K})\mbf{\hat{x}}(t)+ \mbf{\hat{M}\hd}(t)\\
                     \mbf{\hat{y}}(t)       &= \mbf{\hat{C}\hat{x}}(t)\\
                     \mbf{z}(t)       &= \mbf{\hat{H}\hat{y}}(t)
                   \end{aligned}
                 \right.
               \end{equation}
is asymptotically stable if matrix $\mbf{P}=\diag(P_{1},\dots,P_{N})$ satisfies 
               \begin{equation}
                 \label{eq:Lyapeqnoverall}
                 (\mbf{\hat{A}}+\mbf{\hat{B}K})^T \mbf{P}+\mbf{P}(\mbf{\hat{A}}+\mbf{\hat{B}K})<0
               \end{equation}
               where $\mbf{\hat{A}}$, $\mbf{\hat{B}}$ and $\mbf{K}$
               collect matrices $\hat{A}_{ij}$, $\hat{B}_i$ and $K_i$,
               for all $i,~j\in\DD$. We want to emphasize that, in
               general, \eqref{eq:Lyapeqnith} does not imply
               \eqref{eq:Lyapeqnoverall}, since one can show that
               decentralized design of local controllers can fail to
               guarantee voltage stability of the whole ImG, if
               coupling among DGUs is neglected (see Appendix B in
               \cite{Riverso2014c} for an example in the case of AC ImGs).
%terms might spoil stability of the closed-loop QSL-ImG model, as shown in \cite{Riverso2014c}.
 In order to derive conditions such that \eqref{eq:Lyapeqnith} guarantees \eqref{eq:Lyapeqnoverall}, we first define $\mbf{\hat{A}_{D}}=\diag(\hat{A}_{ii},\dots,\hat{A}_{NN})$ and $\mbf{\hat{A}_{C}=\hat{A}-\hat{A}_{D}}$.
Then, we exploit the following assumptions to ensure asymptotic stability of the closed-loop QSL-ImG.
               \begin{assum} 
                 \label{ass:ctrl}
                 \begin{enumerate}[(i)]
                 \item\label{assum:pstruct} Decentralized controllers $\subss{\CC}{i}$, $i\in\DD$ are designed such that \eqref{eq:Lyapeqnith} holds with 
                   \begin{equation}
                     \small \label{eq:pstruct}
                     P_{i}=\left( \begin{array}{c|cc}
                         \eta_i & 0 & 0 \\
                         \hline
                         0 & \bullet & \bullet\\
                         0 & \bullet & \bullet\\
                         \end{array}\right)
                   \end{equation}
                   where $\bullet$ denotes an arbitrary entry and $\eta_i>0$ is a local parameter.
                 \item\label{assum:etasmallest} It holds $\frac{\eta_i } {R_{ij}C_{ti}}\approx 0$, $\forall i\in\DD$, $\forall j\in\NN_i$.
                 \end{enumerate}
               \end{assum}

               As regards Assumption
               \ref{ass:ctrl}-(\ref{assum:pstruct}), we will show later that checking the existence of $P_i$ as in \eqref{eq:pstruct} and $K_i$ fulfilling \eqref{eq:Lyapeqnith} leads to solving a convex optimization problem. 
On the other hand, there exist different ways to fulfill Assumption
\ref{ass:ctrl}-(\ref{assum:etasmallest}). In fact, when an upper bound
to all ratios $\frac{1} {R_{ij}C_{ti}}$ (which depend upon line
parameters only) is known, one can simply set the control design
parameter $\eta_i$ sufficiently small. However, if networks are spread
over a small area, the impedances are small and predominantly resistive. Therefore, one has $\frac{1} {R_{ij}C_{ti}}\approx 0$ by construction and bigger values of $\eta_i$ can be used for synthesizing local controllers.
               \begin{prop}
                 \label{prop:ctrldec}
                 Let Assumption \ref{ass:ctrl} holds. Then, the overall closed-loop QSL-ImG is asymptotically stable.
               \end{prop}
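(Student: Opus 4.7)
The plan is to verify that the block-diagonal Lyapunov matrix $\mbf{P} = \diag(P_1, \dots, P_N)$ built from the local matrices $P_i$ supplied by Assumption~\ref{ass:ctrl}-(\ref{assum:pstruct}) satisfies the aggregate Lyapunov inequality \eqref{eq:Lyapeqnoverall}. I would first split $\mbf{\hat A} = \mbf{\hat A_D} + \mbf{\hat A_C}$ and write
\begin{equation*}
(\mbf{\hat A} + \mbf{\hat B K})^T \mbf{P} + \mbf{P}(\mbf{\hat A} + \mbf{\hat B K}) = \Omega_D + \Omega_C,
\end{equation*}
where $\Omega_D = (\mbf{\hat A_D} + \mbf{\hat B K})^T \mbf{P} + \mbf{P}(\mbf{\hat A_D} + \mbf{\hat B K})$ and $\Omega_C = \mbf{\hat A_C}^T \mbf{P} + \mbf{P} \mbf{\hat A_C}$. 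Since $\mbf{\hat B K}$ is block-diagonal (every $\subss{\CC}{i}$ is decentralized), $\Omega_D$ is block-diagonal and its $i$-th diagonal block coincides with the left-hand side of \eqref{eq:Lyapeqnith}. Hence, by Assumption~\ref{ass:ctrl}-(\ref{assum:pstruct}), every diagonal block is strictly negative definite and therefore $\Omega_D < 0$.

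Next I would examine $\Omega_C$. Because $\mbf{P}$ is block-diagonal and $\mbf{\hat A_C}$ has zero diagonal blocks, the diagonal blocks of $\Omega_C$ vanish while the off-diagonal $(i,k)$ block (for $k\in\NN_i$) equals $\hat{A}_{ki}^T P_k + P_i \hat{A}_{ik}$. A direct inspection of \eqref{eq:augith} together with \eqref{eq:newDGU} shows that $\hat{A}_{ik}$ has a single nonzero entry, namely $1/(R_{ik} C_{ti})$ in position $(1,1)$. Combined with the structured form of $P_i$ prescribed in \eqref{eq:pstruct}, this gives that $P_i \hat{A}_{ik}$ has only its $(1,1)$ entry nonzero and equal to $\eta_i /(R_{ik} C_{ti})$, and symmetrically $\hat{A}_{ki}^T P_k$ contributes only the entry $\eta_k/(R_{ik} C_{tk})$ in position $(1,1)$ (using $R_{ik}=R_{ki}$ from Assumption~\ref{ass:lines}).

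Invoking Assumption~\ref{ass:ctrl}-(\ref{assum:etasmallest}), every such entry is negligible, so $\Omega_C \approx 0$. Since $\Omega_D$ is strictly negative definite it has a positive spectral margin, and this margin dominates the perturbation $\Omega_C$; hence $\Omega_D + \Omega_C < 0$, i.e.\ \eqref{eq:Lyapeqnoverall} holds. Applying Lyapunov's theorem to \eqref{eq:sysaugoverallclosed} concludes that the closed-loop QSL-ImG is asymptotically stable.

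The main obstacle is giving rigorous meaning to the ``$\approx 0$'' clause of Assumption~\ref{ass:ctrl}-(\ref{assum:etasmallest}): strictly speaking one must argue that $\|\Omega_C\|$ is dominated by the spectral margin of $\Omega_D$. The structural computation above is what makes this tractable, because it pins down the coupling perturbation to a single sparse pattern whose magnitude is entirely controlled by the ratios $\eta_i/(R_{ij}C_{ti})$, reducing the claim to a continuity-of-definiteness argument.
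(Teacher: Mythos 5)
Your proposal follows essentially the same route as the paper's proof: the same splitting of the closed-loop matrix into a block-diagonal part (whose blocks are the left-hand sides of \eqref{eq:Lyapeqnith}) and a coupling part whose $(i,j)$ blocks reduce, via the structure of $P_i$ in \eqref{eq:pstruct} and of $\hat{A}_{ij}$ in \eqref{eq:augith}, to the single entries $\eta_i/(R_{ij}C_{ti})$ and $\eta_j/(R_{ji}C_{tj})$, which vanish under Assumption~\ref{ass:ctrl}-(\ref{assum:etasmallest}). Your closing remark that the ``$\approx 0$'' step really requires the coupling perturbation to be dominated by the spectral margin of the negative-definite diagonal part is a fair observation about a point the paper's proof leaves implicit, but it does not change the argument.
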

               \begin{proof}                 
                 We have to show that \eqref{eq:Lyapeqnoverall}
                 holds, which is equivalent to prove that
                 \begin{equation}
                   \label{eq:Lyap2part}
                   \underbrace{\mbf{(\hat{A}_{D}+\hat{B}K)}^T \mbf{P+P(\hat{A}_{D}+\hat{B}K)}}_{(a)}+\underbrace{\mbf{\hat{A}_{C}}^T \mbf{P+P\hat{A}_{C}}}_{(b)}<0.
                 \end{equation}
                 First, we highlight that term $(a)$ is a block diagonal matrix that collects on the diagonal all left hand sides of \eqref{eq:Lyapeqnith}. It follows that term $(a)$ is a negative definite matrix. Next, we show that term $(b)$ is zero. In particular, each block $(i,j)$ of term $(b)$ can be written as 
 \begin{displaymath}
\left\{ \begin{array}{ll}
 P_i\hat{A}_{ij}+\hat{A}_{ji}^TP_j & \hspace{7mm}\mbox{if } j\in\mathcal{N}_i \\
 0 & \hspace{7mm}\mbox{otherwise}
  \end{array} \right.
\end{displaymath}

Using Assumption \ref{ass:ctrl}-(\ref{assum:etasmallest}), we obtain
                 \begin{equation}
                   \label{eq:couplingslyap}
                   \begin{footnotesize}
                     \begin{aligned}
                       P_i\hat{A}_{ij}=
                       \renewcommand\arraystretch{2}
                       \begin{pmatrix}
                         \frac{\eta_i }{R_{ij}C_{ti}} & 0 & 0& \\
                         0 & 0 & 0 \\
                         0 & 0 & 0\\
                        \end{pmatrix}\approx
                       \begin{pmatrix}
                         0 & 0 & 0 \\
                         0 & 0 & 0 \\
                         0 & 0 & 0\\
                       \end{pmatrix}
                     \end{aligned}
                   \end{footnotesize}
                 \end{equation}
                 and
            \begin{equation}
                   \label{eq:couplingslyap}
                   \begin{footnotesize}
                     \begin{aligned}
                       \hat{A}_{ji}^TP_j=
                       \renewcommand\arraystretch{2}
                       \begin{pmatrix}
                         \frac{\eta_j}{ R_{ji}C_{tj}} & 0 & 0 \\
                         0 & 0 & 0 \\
                         0 & 0 & 0 \\
                       \end{pmatrix}\approx
                       \begin{pmatrix}
                         0 & 0 & 0 \\
                         0 & 0 & 0 \\
                         0 & 0 & 0 \\
                       \end{pmatrix},
                     \end{aligned}
                   \end{footnotesize}
                 \end{equation}
                 which proves that inequality \eqref{eq:Lyap2part} holds.
               \end{proof}
               At this point, in order to complete the design of the local
               controller $\subss{\CC}{i}$, we have to solve the following
               problem.
               \begin{prbl}
                 \label{prbl:designPrbl}
                 Compute a matrix $K_{i}$ such that the nominal closed-loop subsystem is asymptotically stable and Assumption \ref{ass:ctrl}-(\ref{assum:pstruct}) is verified, i.e. \eqref{eq:Lyapeqnith} holds for a matrix $P_i$ structured as in \eqref{eq:pstruct}.
               \end{prbl}
               Consider the following optimization problem
               \begin{subequations}
                 \label{eq:optproblem}
                 \begin{small}
                 \begin{align}
                   % NON CAMBIARE IL NOME A QUESTO PBL DI OTTIMIZZAZIONE perche' c'e' una cross-reference nel paper smartgridcomm2014
                  \mathcal{O}: &\min_{\substack{Y_{i},G_{i},\gamma_{i},\beta_{i},\delta_{i}}}\quad \alpha_{i1}\gamma_{i}+\alpha_{i2}\beta_{i}+\alpha_{i3}\delta_{i}\nonumber \\
                   \label{eq:Ystruct}&Y_{i}=\left[ \begin{smallmatrix}
                       \eta_i^{-1} & 0 & 0 \\
                       0 & \bullet & \bullet\\
                       0 & \bullet & \bullet\\
                     \end{smallmatrix}\right]>0\\
                   \label{eq:LMIstab}&\begin{bmatrix}
                     Y_{i}\hat{A}_{ii}^{T}+G_{i}^{T}\hat{B}_{i}^{T}+\hat{A}_{ii}Y_{i}+\hat{B}_{i}G_{i} & Y_{i} \\
                     Y_{i} & -\gamma_{i}I
                   \end{bmatrix}\le 0\\
                   \label{eq:Gcostr}&\begin{bmatrix}
                     -\beta_{i}I & G_{i}^T\\
                     G_{i} & -I
                   \end{bmatrix}<0\\
                   \label{eq:Ycostr}&\begin{bmatrix}
                     Y_{i} & I\\
                     I & \delta_{i}I
                   \end{bmatrix}>0\\
                   &\gamma_{i}>0,\quad\beta_{i}>0,\quad\delta_{i}>0
                 \end{align}
                 \end{small}
               \end{subequations} 
               where $\alpha_{i1}$, $\alpha_{i2}$ and $\alpha_{i3}$
               represent positive weights and $\bullet$ are arbitrary
               entries. Since all constraints in \eqref{eq:optproblem}
               are Linear Matrix Inequalities (LMI), the optimization
               problem is convex and can be solved with efficient
               (i.e. polynomial-time) LMI solvers \cite{Boyd1994}. 
               \begin{lem}
                 \label{lem:optProbl}
                 Problem $\mathcal{O}$ is feasible if and only if Problem \ref{prbl:designPrbl} has a solution. Moreover, $K_i$ and $P_i$ in \eqref{eq:Lyapeqnith} are given by $K_{i}=G_{i}Y_{i}^{-1}$, $P_{i}=Y_{i}^{-1}$ and $\norme{K_{i}}{2}<\sqrt{\beta_i}\delta_{i}$.
               \end{lem}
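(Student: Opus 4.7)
The plan is to recognise \eqref{eq:optproblem} as a standard convex reformulation of the stabilising Lyapunov inequality \eqref{eq:Lyapeqnith} via the change of variables $P_i=Y_i^{-1}$ and $K_i=G_iY_i^{-1}$, complemented by two Schur-complement gadgets enforcing the norm bound on $K_i$. First I would verify that the sparsity pattern \eqref{eq:Ystruct} on $Y_i$ is in bijection with \eqref{eq:pstruct} on $P_i=Y_i^{-1}$: the zeros in the first row and column make $Y_i$ block-diagonal with a $1\times 1$ block $\eta_i^{-1}$ and a $2\times 2$ block, hence $Y_i^{-1}$ has the same block-diagonal layout with $\eta_i$ in position $(1,1)$, matching \eqref{eq:pstruct}. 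This is the main subtlety I anticipate; once it is in place everything else reduces to routine LMI manipulations.

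Next I would pre- and post-multiply \eqref{eq:Lyapeqnith} by $Y_i>0$ and substitute $G_i=K_iY_i$ to obtain the equivalent inequality
\[
Y_i\hat{A}_{ii}^T+G_i^T\hat{B}_i^T+\hat{A}_{ii}Y_i+\hat{B}_iG_i<0.
\]
The LMI \eqref{eq:LMIstab} is a Schur-complement reformulation of this: since the $(2,2)$ block $-\gamma_iI$ is negative definite, \eqref{eq:LMIstab} is equivalent to
\[
Y_i\hat{A}_{ii}^T+G_i^T\hat{B}_i^T+\hat{A}_{ii}Y_i+\hat{B}_iG_i+\gamma_i^{-1}Y_i^2\le 0,
\]
and, because $\gamma_i^{-1}Y_i^2>0$, this implies strict negative definiteness of the Lyapunov term, proving that $K_i=G_iY_i^{-1}$ and $P_i=Y_i^{-1}$ solve Problem \ref{prbl:designPrbl} (the structural requirement being already encoded by \eqref{eq:Ystruct}). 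For the converse direction, from any $K_i,P_i$ solving Problem \ref{prbl:designPrbl} I would set $Y_i=P_i^{-1}$, $G_i=K_iY_i$, and pick $\gamma_i$ sufficiently large to make the perturbation $\gamma_i^{-1}Y_i^2$ small enough that the strict inequality is preserved; then $\beta_i$ and $\delta_i$ can be taken large enough to render \eqref{eq:Gcostr} and \eqref{eq:Ycostr} feasible, again by a direct Schur complement.

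Finally, for the norm estimate I would apply Schur complement to \eqref{eq:Gcostr}, obtaining $G_i^TG_i<\beta_iI$, i.e.\ $\norme{G_i}{2}<\sqrt{\beta_i}$; and to \eqref{eq:Ycostr}, obtaining $Y_i^{-1}<\delta_iI$, i.e.\ $\norme{Y_i^{-1}}{2}<\delta_i$. Submultiplicativity of the spectral norm then yields
\[
\norme{K_i}{2}=\norme{G_iY_i^{-1}}{2}\le\norme{G_i}{2}\,\norme{Y_i^{-1}}{2}<\sqrt{\beta_i}\,\delta_i,
\]
which is exactly the claimed bound. Because every step is either an equivalence (change of variables, Schur complement) or a one-sided bound that can be saturated by a large enough choice of slack variables, the two problems are indeed equivalent, and the positive weights $\alpha_{i1},\alpha_{i2},\alpha_{i3}$ in $\mathcal{O}$ merely select a particular solution that trades off $\gamma_i$, the gain magnitude and the conditioning of $P_i$.
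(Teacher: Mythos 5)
Your proposal is correct and follows essentially the same route as the paper: the change of variables $Y_i=P_i^{-1}$, $G_i=K_iY_i$, Schur complements to linearise the Lyapunov inequality and to encode the bounds $\norme{G_i}{2}<\sqrt{\beta_i}$ and $\norme{Y_i^{-1}}{2}<\delta_i$, and submultiplicativity for the gain bound. If anything, your write-up is slightly more explicit than the paper's on two points it leaves implicit — why the block-diagonal sparsity of $Y_i$ transfers to $P_i=Y_i^{-1}$, and why the converse direction (choosing $\gamma_i$, $\beta_i$, $\delta_i$ large enough) always succeeds.
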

               \begin{proof}
                 Inequality \eqref{eq:Lyapeqnith} is equivalent to the existence of $\gamma_i>0$ such that
                 \begin{equation}
                   \label{eq:Lyapdecr}
                   (\hat{A}_{ii}+\hat{B}_{i}K_{i})^{T}P_{i}+P_{i}(\hat{A}_{ii}+\hat{B}_{i}K_{i})+\gamma_{i}^{-1}I\le 0
                 \end{equation} 
                 where $P_{i}$ is defined in \eqref{eq:pstruct}. By
                 applying the Schur lemma on \eqref{eq:Lyapdecr}, we
                 get the following inequality
                 \begin{equation}
                   \label{eq:LyapSchur}
                   \begin{bmatrix}
                     (\hat{A}_{ii}+\hat{B}_{i}K_{i})^{T}P_{i}+P_{i}(\hat{A}_{ii}+\hat{B}_{i}K_{i}) & I\\
                     I & -\gamma_{i}I
                   \end{bmatrix}\le 0
                 \end{equation}
                 which is nonlinear in $P_{i}$ and $K_{i}$. In order
                 to get rid of the nonlinear terms, we perform the
                 following parametrization trick \cite{Boyd1994}
                 \begin{equation}
                   \label{eq:eqschur}
                   \begin{aligned}
                     Y_{i}&=P_{i}^{-1}\\
                     G_{i}&=K_{i}Y_{i}.
                   \end{aligned}
                 \end{equation}
                 Notice that the structure of $Y_{i}$ is the same as
                 the structure of $P_{i}$. By pre- and post-multiplying \eqref{eq:LyapSchur} with $\left[\begin{smallmatrix}
                     Y_{i} & 0\\
                     0 & I 
                   \end{smallmatrix}\right]$ and exploiting \eqref{eq:eqschur} we obtain
                 \begin{equation}
                   \begin{bmatrix}
                     Y_{i}\hat{A}_{ii}^{T}+G_{i}^{T}\hat{B}_{i}^{T}+\hat{A}_{ii}Y_{i}+\hat{B}_{i}G_{i} & Y_{i} \\
                     Y_{i} & -\gamma_{i}I
                   \end{bmatrix}\le 0
                 \end{equation}
                 
                 Constraint
                 \eqref{eq:Ystruct} ensures that matrix $P_i$ has
                 the structure required by Assumption
                 \ref{ass:ctrl}-(\ref{assum:pstruct}). At the same time,
                 constraint \eqref{eq:LMIstab} guarantees stability of
                 the closed-loop subsystem. Further constraints appear
                 in Problem $\mathcal{O}$  with the aim of bounding
                 $\norme{K_{i}}{2}$. In particular, we add
                 $\norme{G_{i}}{2}<\sqrt{\beta_{i}}$ and
                 $\norme{Y_{i}^{-1}}{2}<\delta_{i}$ (which via Schur
                 complement, correspond to constraints
                 \eqref{eq:Gcostr} and \eqref{eq:Ycostr}) to prevent
                 $\norme{K_{i}}{2}$ from becoming too large. These bounds imply $\norme{K_{i}}{2}<\sqrt{\beta_i}\delta_{i}$ and then affect the magnitude of control variables.
               \end{proof}
               Next, we discuss the key feature of the proposed
               decentralized control approach. We first notice that
               constraints in
               \eqref{eq:optproblem} depend upon local fixed matrices
               ($\hat{A}_{ii}$, $\hat{B}_{i}$) and local design
               parameters ($\alpha_{i1}$, $\alpha_{i2}$,
               $\alpha_{i3}$, $\beta_i$, $\delta_i$). It follows that the computation of controller
               $\subss{\CC}{i}$ is completely independent from the
               computation of controllers $\subss{\CC}{j}$ when $j\neq i$
               since, provided that problem $\PP_i$ is feasible,
               controller $\subss\CC i$ can be directly obtained
               through $K_{i}=G_{i}Y_{i}^{-1}$. In addition, it is
               clear that constraints \eqref{eq:Gcostr} and
               \eqref{eq:Ycostr} affect only the magnitude of control
               variables as stated in Lemma
               \ref{lem:optProbl}. Finally, since all assumptions in Proposition \ref{prop:ctrldec} are verified, the overall closed-loop QSL-ImG is asymptotically stable.
\subsection{Enhancements of local controllers for improving performances}
               In the previous section we have shown how to design
               decentralized controllers $\subss{\CC}{i}$ guaranteeing
               asymptotic stability for the overall closed-loop system
               \eqref{eq:sysaugoverallclosed}. In order to improve transient
               performances of controllers $\subss{\CC}{i}$, we
               enhance them with feed-forward terms for
               \begin{enumerate}[(i)]
               \item pre-filtering reference signals;
               \item compensating measurable disturbances.
               \end{enumerate}

               \subsubsection{Pre-filtering of the reference signal}
                    \label{sec:pre-filter}
                    Pre-filtering is well known technique used to
                    widen the bandwidth so as to speed up the response
                    of the system. Consider the transfer function
                    $\subss{F}{i}(s)$, from $\subss{z_{ref}}{i}(t)$ to
                    the controlled variable  $\subss{z}{i}(t)$ 
                    \begin{equation}
                      \label{eq:F(s)}
                      \subss{F}{i}(s)=(\hat{H}_i \hat{C}_i) (sI-(\hat{A}_{ii}+\hat{B}_{i}K_i ))^{-1}\begin{bmatrix}
                        0 \\
                        1
                      \end{bmatrix} 
                    \end{equation}
of each nominal closed-loop subsystem
\eqref{eq:modelDGUgen-aug-closed}. By virtue of a feedforward
compensator $\subss{\tilde{C}}{i}(s)$, it is possible to filter the reference signal $\subss{z_{ref}}{i}(t)$ (see Figure \ref{fig:prefilter}). 
                    \begin{figure}[!htb]
                      \centering
                      \begin{tikzpicture}
  % zref and microgrid
  \begin{Large}
  \sbEntree{zrefi}
  \sbBloc{compensatori}{$\subss{\tilde{C}}{i}(s)$}{zrefi}
  \sbBloc{Fi}{$\subss{F}{i}(s)$}{compensatori}
  \sbSortie{zi}{Fi}
  \sbRelier{zrefi}{compensatori}
  \sbRelier{Fi}{zi}
  \end{Large}
  \sbNomLien{zrefi}{$\subss{z_{ref}} i$}
  \sbNomLien{zi}{$\subss z i$}
  \sbRelier[$\subss{z_{ref}^f} i$]{compensatori}{Fi}
\end{tikzpicture}
                      \caption{Block diagram of closed-loop DGU $i$ with prefilter.}
                      \label{fig:prefilter}
                    \end{figure}
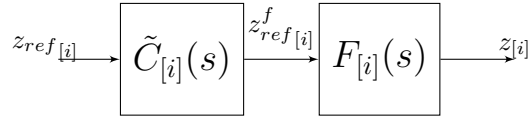
                    Consequently, the new transfer function from $\subss{z_{ref}}{i}(t)$ to $\subss{z}{i}(t)$ becomes
                    \begin{equation}
                      \label{eq:C(s)F(s}
                      \subss{\tilde{F}}{i}(s)=\subss{\tilde{C}}{i}(s)\subss{F}{i}(s)
                    \end{equation}
                    Now, taking a desired transfer function
                    $\subss{\tilde{F}}{i}(s)$ for each subsystem, we
                    can compute, from \eqref{eq:C(s)F(s}, the
                    pre-filter $\subss{\tilde{C}}{i}(s)$ as 
                    \begin{equation}
                      \label{eq:prefilter}
                      \subss{\tilde{C}}{i}(s) = \subss{\tilde{F}}{i}(s)\subss{{F}}{i}(s)^{-1}
                    \end{equation}
                    under the following conditions \cite{Skogestad1996}:
                    \begin{itemize}
                    \item $\subss{{F}}{i}(s)$ must not have Right-Half-Plane (RHP) zeros that would become RHP poles of $\subss{\tilde{C}}{i}(s)$, making it unstable;
                    \item $\subss{{F}}{i}(s)$ must not contain a time delay, otherwise $\subss{\tilde{C}}{i}(s)$ would have a predictive action
                    \item $\subss{\tilde{C}}{i}(s)$ must be realizable, i.e. it must have more poles than zeros.
                    \end{itemize}
                    Hence, if these conditions are fulfilled, the
                    filter $\subss{\tilde{C}}{i}(s)$ given by
                    \eqref{eq:prefilter} is realizable and
                    asymptotically stable (this condition is essential
                    since $\subss{\tilde{C}}{i}(s)$ works in
                    open-loop). Furthermore, since
                    $\subss{\hat{F}}{i}(s)$ is asymptotically stable
                    (controllers
                    $\subss{\CC}{i}$ are, in fact, designed solving the problem
                    $\PP_i$), the closed-loop system including filters
                    $\subss{\tilde{C}}{i}(s)$ is asymptotically stable
                    as well. He highlight that, if some of the
                    previous conditions are not valid, expression \eqref{eq:prefilter} cannot be used. Still, the compensator $\subss{\tilde{C}}{i}(s)$ can be designed for a given bandwidth, as shown in \cite{Skogestad1996}. 
     
 \subsubsection{Compensation of measurable disturbances}
	            \label{sec:compensator}
                    The second enhancement one can introduce regards
                    the compensation of measurable disturbances. We
                    remind that, since we assumed that load dynamics
                    is not known, we have modeled the load
                    currents for each subsystem of the microgrid as a
                    measurable disturbance $\subss{d}{i}(t)$. Let us
                    define new local controllers
                    $\subss{\tilde{\CC}}{i}$ as
                    \begin{equation}
                      \subss{\tilde{\CC}}{i}:\quad \subss{u}{i}=K_{i}\subss{\hat{x}}{i}(t)+\subss{\tilde{u}}{i}(t)
                    \end{equation}
                    Note that $\subss{\tilde{\CC}}{i}$ are obtained by adding term
                    $\subss{\tilde{u}}{i}(t)$ to the controllers $\subss{\CC}{i}$ in
                    \eqref{eq:ctrldec}. Hence,
                    \eqref{eq:modelDGUgen-aug-closed} can be rewritten
                    as follows
                    \begin{equation}
                      \label{eq:subsysDGi-thAUGCLComp}
                      \subss{\tilde{\Sigma}}{i}^{DGU} :
                      \left\lbrace
                        \begin{aligned}
                          \subss{\dot{\hat{x}}}{i}(t) &= (\hat{A}_{ii}+ \hat{B}_{i}K_{i})\subss{\hat{x}}{i}(t)+\hat{M}_{i}\subss{\hat{d}}{i}(t)+\hat{B_{i}}\subss{\tilde{u}}{i}(t)\\
                          \subss{\hat{y}}{i}(t)       &= \hat{C}_{i}\subss{\hat{x}}{i}(t)\\
                          \subss{z}{i}(t)       &= \hat{H}_{i}\subss{\hat{y}}{i}(t)
                        \end{aligned}
                      \right..
                    \end{equation}
                    We now use the new input $\subss{\tilde{u}}{i}(t)$ to compensate the measurable disturbance $\subss{d}{i}(t)$ (recall that $\subss{\hat d}{i} = [\subss{d^T}{i}~\subss{z_{ref}^T}{i} ]^T$). From \eqref{eq:subsysDGi-thAUGCLComp}, the transfer function from the disturbance $\subss{d}{i}(t)$ to the controlled variable $\subss{z}{i}(t)$ is
                    \begin{equation}
                      \label{eq:Gd(s)}
                      G^{d}_{i}(s)=(\hat{H}_i \hat{C}_i)(sI-(\hat{A}_{ii}+\hat{B}_{i}K_i))^{-1}			\begin{bmatrix}
                        M_{i}\\
                        0
                      \end{bmatrix}. 
                    \end{equation}
                    Moreover, the transfer function from the new input $\subss{\tilde{u}}{i}(t)$ to the controlled variable $\subss{z}{i}(t)$ is
                    \begin{equation}
                      \label{eq:G(s)}
                      G_{i}(s)=(\hat{H}_i \hat{C}_i)(sI-(\hat{A}_{ii}+\hat{B}_{i}K_i ))^{-1}\hat{B}_i . 
                    \end{equation}
                    If we combine \eqref{eq:Gd(s)} and \eqref{eq:G(s)}, we obtain
                    \begin{equation}
                      \subss{z}{i}(s)=G_{i}(s)\subss{\tilde{u}}{i}(s)+G^{d}_{i}(s)\subss{d}{i}(s).
                    \end{equation}
                    In order to zero the effect of the disturbance on the controlled variable, we set
                    \begin{equation}
                      \subss{\tilde{u}}{i}(s)=N_{i}(s)\subss{d}{i}(s)
                    \end{equation}
                    where 
                    \begin{equation}
                      \label{eq:compensator}
                      \subss{N}{i}(s)=-G_{i}(s)^{-1}G^{d}_{i}(s)
                    \end{equation}
                    is the transfer function of the compensator. Note that $\subss{N}{i}(s)$ is well defined under the following conditions \cite{Skogestad1996}:
                    \begin{itemize}
                    \item $\subss{G}{i}(s)$ must not have RHP zeros that would become RHP poles of $\subss{N}{i}(s)$;
                    \item $\subss{G}{i}(s)$ must not contain a time delay, otherwise $\subss{N}{i}(s)$ would have a predictive action
                    \item $\subss{N}{i}(s)$ must be realizable, i.e. it must have more poles than zeros.
                    \end{itemize}
                    In this way, we can ensure that the compensator $\subss{N}{i}(s)$ is asymptotically stable, hence preserving asymptotic stability of the system. When some of the previous conditions do not hold, formula \eqref{eq:compensator} cannot be used and perfect compensation cannot be achieved. Still, the compensator $\subss{N}{i}(s)$ can be designed to reject disturbances within a given bandwidth, as shown in \cite{Skogestad1996}. The overall control scheme with the addition of the compensators is shown in Figure \ref{fig:compensator}.
                    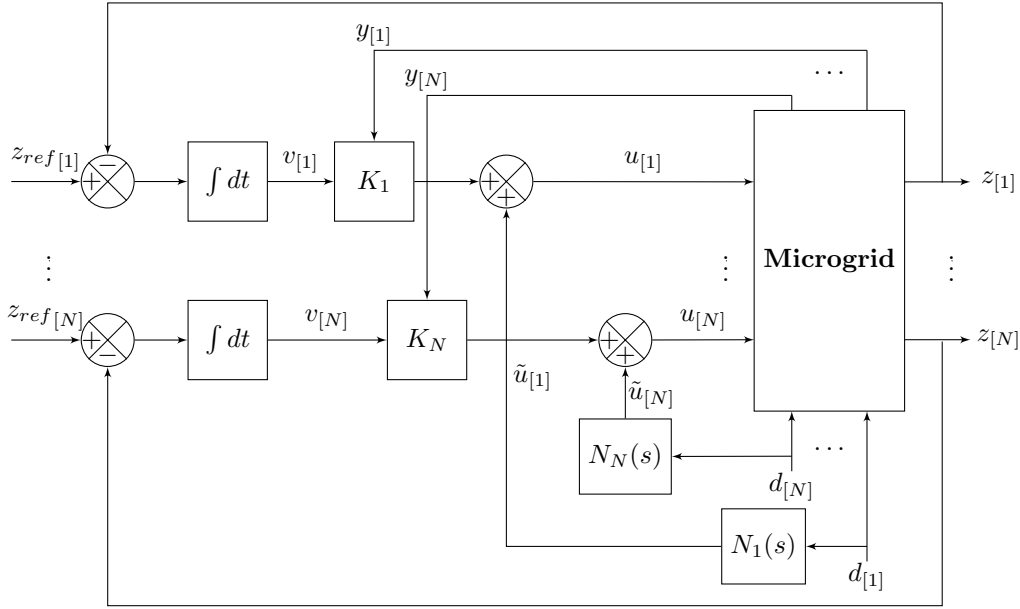
\begin{figure}[!htb]
                      \centering
                      \tikzstyle{input} = [coordinate]
\tikzstyle{output} = [coordinate]
\tikzstyle{guide} = []
\tikzstyle{block} = [draw, rectangle, minimum height=1cm]

\begin{tikzpicture}
  % zref and microgrid
  \sbEntree{zref1}
  \sbDecaleNoeudy[3]{zref1}{zrefj}
  \sbDecaleNoeudy[3]{zrefj}{zrefN}
  \node [block, right of=zrefj,node distance=11cm,minimum height=4cm, minimum width=2cm] (microgrid) {\textbf{Microgrid}};
  \node [guide, right of=zrefj,xshift=-0.5cm] (zrefjline) {};
  \draw [draw] (zrefjline) -| node{$\vdots$} (zrefjline);
  
  % line of blocks for DGU i
  \sbComph{sumret1}{zref1}
  \sbBloc{integrator1}{$\int dt$}{sumret1}
  \sbBloc[2.5]{controller1}{$K_1$}{integrator1}
  \sbSumb[5]{sumdist1}{controller1}
  \sbRelier[$\subss{z_{ref}}{1}$]{zref1}{sumret1}
  \sbRelier{sumret1}{integrator1}
  \sbRelier[$\subss v 1$]{integrator1}{controller1}
  \sbRelier{controller1}{sumdist1}
  \node [guide, left of=microgrid,yshift=1.05cm,xshift=0.125cm] (u1end) {};
  \sbRelier[$\subss u 1$]{sumdist1}{u1end}
  
  % line of blocks for DGU N
  \sbComp{sumretN}{zrefN}
  \sbBloc{integratorN}{$\int dt$}{sumretN}
  \sbBloc[4.5]{controllerN}{$K_N$}{integratorN}
  \sbSumb[7.5]{sumdistN}{controllerN}
  \sbRelier[$\subss{z_{ref}}{N}$]{zrefN}{sumretN}
  \sbRelier{sumretN}{integratorN}
  \sbRelier[$\subss v N$]{integratorN}{controllerN}
  \sbRelier{controllerN}{sumdistN}
  \node [guide, left of=microgrid,yshift=-1.05cm,xshift=0.125cm] (uNend) {};
  \sbRelier[$\subss u N$]{sumdistN}{uNend}
  
  % d
  \node [output, below of=microgrid,yshift=-3.0cm,xshift=0.5cm] (d1out) {};
  \node [output, below of=microgrid,yshift=-2.8cm,xshift=0.5cm] (d1outnear) {};
  \node [output, below of=microgrid,yshift=-1.0cm,xshift=0.5cm] (d1outstart) {};
  \draw [draw,->,>=latex'] (d1out) -| node[yshift=-0.2cm]{$\subss d 1$} (d1outstart);
  \node [block, below of=uNend,yshift=-1.75cm,xshift=0.0cm] (N1) {$N_1(s)$};
  \draw [draw,->,>=latex',near end,swap] (d1outnear) -- (N1);
  \draw [draw,->,>=latex',near end,swap] (N1) -| node[xshift=0.35cm] {$\subss{\tilde u} 1$} (sumdist1);
  
  \node [output, below of=microgrid,yshift=-1.5cm] (djout) {};
  \draw [draw] (djout) -| node {$\ldots$} (djout);
  
  \node [output, below of=microgrid,yshift=-1.8cm,xshift=-0.5cm] (dNout) {};
  \node [output, below of=microgrid,yshift=-1.6cm,xshift=-0.5cm] (dNoutnear) {};
  \node [output, below of=microgrid,yshift=-1.0cm,xshift=-0.5cm] (dNoutstart) {};
  \draw [draw,->,>=latex'] (dNout) -| node[yshift=-0.2cm]{$\subss d N$} (dNoutstart);
  \node [block, below of=sumdistN,yshift=-0.55cm,xshift=0.0cm] (NN) {$N_N(s)$};
  \draw [draw,->,>=latex',near end,swap] (dNoutnear) -- (NN);
  \draw [draw,->,>=latex',near end,swap] (NN) -- node[xshift=0.35cm,yshift=-0.2cm] {$\subss{\tilde u} N$} (sumdistN);

  % y
  \node [output, above of=microgrid,yshift=1.8cm,xshift=0.5cm] (y1out) {};
  \node [output, above of=microgrid,yshift=1.0cm,xshift=0.5cm] (y1outstart) {};
  \node [output, above of=microgrid,yshift=1.5cm] (yjout) {};
  \node [output, above of=microgrid,yshift=1.2cm,xshift=-0.5cm] (yNout) {};
  \node [output, above of=microgrid,yshift=1.0cm,xshift=-0.5cm] (yNoutstart) {};
  \draw [draw] (y1outstart) -- node {} (y1out);
  \draw [draw,->,>=latex'] (y1out) -| node[yshift=0.2cm]{$\subss y 1$} (controller1);
  \draw [draw] (yjout) -| node {$\ldots$} (yjout);
  \draw [draw] (yNoutstart) -- node {} (yNout);
  \draw [draw,->,>=latex'] (yNout) -| node[yshift=0.2cm]{$\subss y N$} (controllerN);
  
  % z
  \node [guide, right of=microgrid,yshift=1.05cm,xshift=-.125cm] (z1) {};
  \node [guide, right of=microgrid,yshift=1.165cm,xshift=0.5cm] (z1near) {};
  \node [guide, right of=microgrid,yshift=1.05cm,xshift=1cm] (z1end) {};
  \draw [draw,->,>=latex',near end,swap] (z1) -- node[xshift=0.6cm] {$\subss{z} 1$} (z1end);
  \sbRenvoi[-6.8]{z1near}{sumret1}{}

  \node [guide, right of=microgrid,yshift=-1.05cm,xshift=-.125cm] (zN) {};
  \node [guide, right of=microgrid,yshift=-0.95cm,xshift=0.5cm] (zNnear) {};
  \node [guide, right of=microgrid,yshift=-1.05cm,xshift=1cm] (zNend) {};
  \draw [draw,->,>=latex',near end,swap] (zN) -- node[xshift=0.6cm] {$\subss{z} N$} (zNend);
  \sbRenvoi[10]{zNnear}{sumretN}{}
  
  \node [guide, right of=microgrid,xshift=0.5cm] (zj) {};
  \draw [draw] (zj) -| node {$\vdots$} (zj);
  
  \node [guide, left of=microgrid,xshift=-0.5cm] (uj) {};
  \draw [draw] (uj) -| node {$\vdots$} (uj);
  
\end{tikzpicture}
                      \caption{Overall microgrid control scheme with compensation of measurable disturbances $\subss{d}{i}(s)$.}
                      \label{fig:compensator}
                    \end{figure}

	  \subsection{Algorithm for the design of local controllers}
Algorithm
               \ref{alg:ctrl_design} collects the steps of the overall design procedure.
               \begin{algorithm}[!htb]
                 \caption{Design of controller $\subss{\CC}{i}$ and compensators $\subss{\tilde{C}}{i}$ and $\subss{N}{i}$ for subsystem $\subss{\hat{\Sigma}}{i}^{DGU}$}
                 \label{alg:ctrl_design}
                 \textbf{Input:} DGU $\subss{\hat{\Sigma}}{i}^{DGU}$ as in \eqref{eq:modelDGUgen-aug} \\
                 \textbf{Output:} Controller $\subss{\CC}{i}$ and, optionally, pre-filter $\subss{\tilde{C}}{i}$ and compensator $\subss{N}{i}$\\
                 \begin{enumerate}[(A)]
                 \item\label{enu:stepAalgCtrl} Find $K_i$ solving the LMI problem \eqref{eq:optproblem}. If it is not feasible \textbf{stop} (the controller $\subss\CC i$ cannot be designed).\\
                   \textbf{Optional steps}
                 \item\label{enu:stepBalgCtrl} Design the
                   asymptotically stable local pre-filter
                   $\subss{\tilde{C}}{i}$ and compensator
                   $\subss{N}{i}$ as in (\ref{eq:compensator}).
                 \end{enumerate}
               \end{algorithm}

          \subsection{PnP operations}
               \label{sec:PnP}
              In the following section, the operations for updating the controllers when DGUs are
              added to or removed from an ImG are presented. We remind
              that all these operations are performed with the
              aim of preserving stability of the new closed-loop
              system. Consider, as a starting point, a microgrid composed of subsystems
              $\subss{\hat{\Sigma}}{i}^{DGU}, i\in\DD$ equipped with
              local controllers $\subss{\CC}{i}$ and compensators
              $\subss{\tilde{C}}{i}$ and $\subss{N}{i}$, $i\in\DD$
              produced by Algorithm \ref{alg:ctrl_design}.
\begin{rmk}
\label{rmk:bumpless}
In order to avoid jumps in the control variable when local regulator
are switched, we embedded each local regulator into a bumpless control
scheme \cite{aastrom2006advanced} shown in Appendix
\ref{sec:AppBumpless}.
\end{rmk}

\textbf{Plugging-in operation}
                    Assume that the plug-in of a new DGU
                    $\subss{\hat{\Sigma}}{N+1}^{DGU}$ described by
                    matrices, $\hat{A}_{N+1\:N+1}$, $\hat{B}_{N+1}$,
                    $\hat{C}_{N+1}$, $\hat{M}_{N+1}$, $\hat{H}_{N+1}$
                    and $\{\hat{A}_{N+1\:j}\}_{j\in\NN_{N+1}}$ needs
                    to be performed. Let $\NN_{N+1}$ be the
                    set of DGUs that are directly coupled to
                    $\subss{\hat{\Sigma}}{N+1}^{DGU}$ through
                    transmission lines and
                    let $\{\hat{A}_{N+1\:j}\}_{j\in\NN_{N+1}}$ be the
                    matrices containing the corresponding coupling
                    terms. According to our method, the
                    design of controller $\subss{\CC}{N+1}$ and compensators
                    $\subss{\tilde{C}}{N+1}$ and $\subss{N}{N+i}$
                    requires Algorithm \ref{alg:ctrl_design} to be
                    executed. Since DGUs
                    $\subss{\hat{\Sigma}}{j}^{DGU}$, $j\in\NN_{N+1}$,
                    have the new neighbour
                    $\subss{\hat{\Sigma}}{N+1}^{DGU}$, we need to
                    redesign controllers $\subss{\CC}{j}$ and
                    compensators $\subss{\tilde{C}}{j}$ and
                    $\subss{N}{j}$, $\forall j\in\NN_{N+1}$ because
                    matrices $\hat{A}_{jj}$, $j\in\NN_{N+1}$ change. 

Only if Algorithm \ref{alg:ctrl_design} does not stop in Step
\ref{enu:stepAalgCtrl} when computing controllers $\subss{\CC}{k}$ for
all $k\in\NN_{N+1}\cup\{N+1\}$, we have that the plug-in of
$\subss{\hat{\Sigma}}{N+1}^{DGU}$ is allowed. Moreover, we stress that the redesign is not propagated further in the
network and therefore the asymptotic stability of the new overall closed-loop QSL-ImG model is preserved even without changing controllers $\subss{\CC}{i}$, $\subss{\tilde{C}}{i}$ and $\subss{N}{i}$, $i\not\in \{N+1\}\cup\NN_{N+1}$.

Prior to real-time plugging-in
               operation (\textit{hot plugging-in}), it is recommended
               to keep set points constant for a sufficient amount of
               time so as to guarantee control variable in the
               bumpless control scheme (see Remark \ref{rmk:bumpless})
               is in steady state. This ensures smooth behaviours of
               the electrical variables.

               \textbf{Unplugging operation}
                    Let us now examine the unplugging of DGU
                    $\subss{\hat{\Sigma}}{k}^{DGU}$, $k\in\DD$. The
                    disconnection of $\subss{\hat{\Sigma}}{k}^{DGU}$
                    from the network leads to a change in matrix
                    $\hat{A}_{jj}$ of each
                    $\subss{\hat{\Sigma}}{j}^{DGU}$, $
                    j\in\NN_{k}$. Consequently, for each $
                    j\in\NN_{k}$, we have to redesign
                    controllers $\subss{\CC}{j}$ and compensators
                    $\subss{\tilde{C}}{j}$ and $\subss{N}{j}$. As for the plug-in operation, we run
                    Algorithm \ref{alg:ctrl_design}. If all operations can be
                    successfully terminated, then 
                    the unplugging of $\subss{\hat{\Sigma}}{k}^{DGU}$
                    is allowed and stability is preserved without redesigning the local controllers $\subss\CC j$, $j\notin\NN_{k}$.

                    When an unplugging operation is scheduled in advance, it is advisable to follow an \textit{hot unplugging} protocol similar to the one introduced for the plugging-in operation.

     \section{Simulation results}
          \label{sec:Simresults}
          In this section, we study performance brought about by PnP
          controllers described in Section \ref{sec:PnPctrl}. As a
          starting point, we consider the ImG depicted in Figure
          \ref{fig:schemairandist_DC} with only two DGUs (Scenario 1)
          and we evaluate performance in terms of tracking step references as
          well as hot plugging-in of the two DGUs and robustness to unknown load dynamic. Then, we
          extend the analysis to an ImG with 6 DGUs (Scenario 2) and we show that stability of the whole microgrid is guaranteed.

         Simulations have been performed in PSCAD, a simulation environment for
         electric systems which allows to implement the microgrid model with realistic electric components.

          For both scenarios, we run a simulation from time 0 s up to time 10 s. Each simulation has been split into subparts that are discussed next.

          \subsection{Scenario 1}
               \label{sec:scenario_1}
               In this Scenario, we consider the ImG shown in Figure
               \ref{fig:schemairandist_DC} composed of two DC DGUs
               connected through high resistive-inductive lines
               supporting 10 $\Omega$ and 6 $\Omega$ loads,
               respectively. For the sake of simplicity, we set $i=1$
               and $j=2$. The output voltage reference $v^\star_{MG}$
               has been selected at $48\mbox{ }V$ and it is equal for both
               DGUs. Parameters values for all DGUs are given in
               Table \ref{tbl:electrical_setup} in Appendix
               \ref{sec:AppElectrPar}. Notice that that they are comparable to those used in \cite{shafiee2014modeling} and \cite{shafiee2014hierarchical}.

               \subsubsection{Voltage reference tracking at the startup}
We assume that at the beginning of the simulation ($t = 0$ s),
subsystems $\subss{\hat{\Sigma}}{1}^{DGU}$ and
$\subss{\hat{\Sigma}}{2}^{DGU}$ are not interconnected. Therefore,
stabilizing controllers $\mathcal{C}_{i}$, $i={1,2}$ are designed
neglecting coupling among DGUs. Moreover, in order to widen the
bandwidth of each closed-loop subsystem, we use local pre-filters $\subss{\tilde{C}}{i}$, $i = 1,2$of reference signals. The desired closed-loop transfer functions $\tilde F_{i}(s)$, $i = {1, 2}$ have been chosen as low-pass filters with DC gain equal to $0$ dB and bandwidth equal to 100 Hz. 
The eigenvalues of the two decoupled closed-loop QSL subsystems are
shown in Figure \ref{fig:closedLoop2Areas_0}. Moreover, by running Step B of
Algorithm \ref{alg:ctrl_design} we obtain two asymptotically stable
local pre-filters $\tilde
C_{i}$, $i = {1,2}$ whose Bode magnitude plots are depicted in Figure
\ref{fig:singularValuePreFilter2Areas_0}. Notice that through the addition of the
pre-filters, the frequency response of the two closed-loop transfer
functions $F_i(s)$, $i = 1,2$ coincide with the frequency response of the
desired transfer functions $\tilde F_{i}(s)$, $i = {1, 2}$ (see the green line in Figure \ref{fig:singularValueClosedLoop2Areas_0}).
\begin{figure}[!htb]
                 \centering
                 \begin{subfigure}[!htb]{0.48\textwidth}
                   \centering
                   \includegraphics[width=1.1\textwidth, height=150pt]{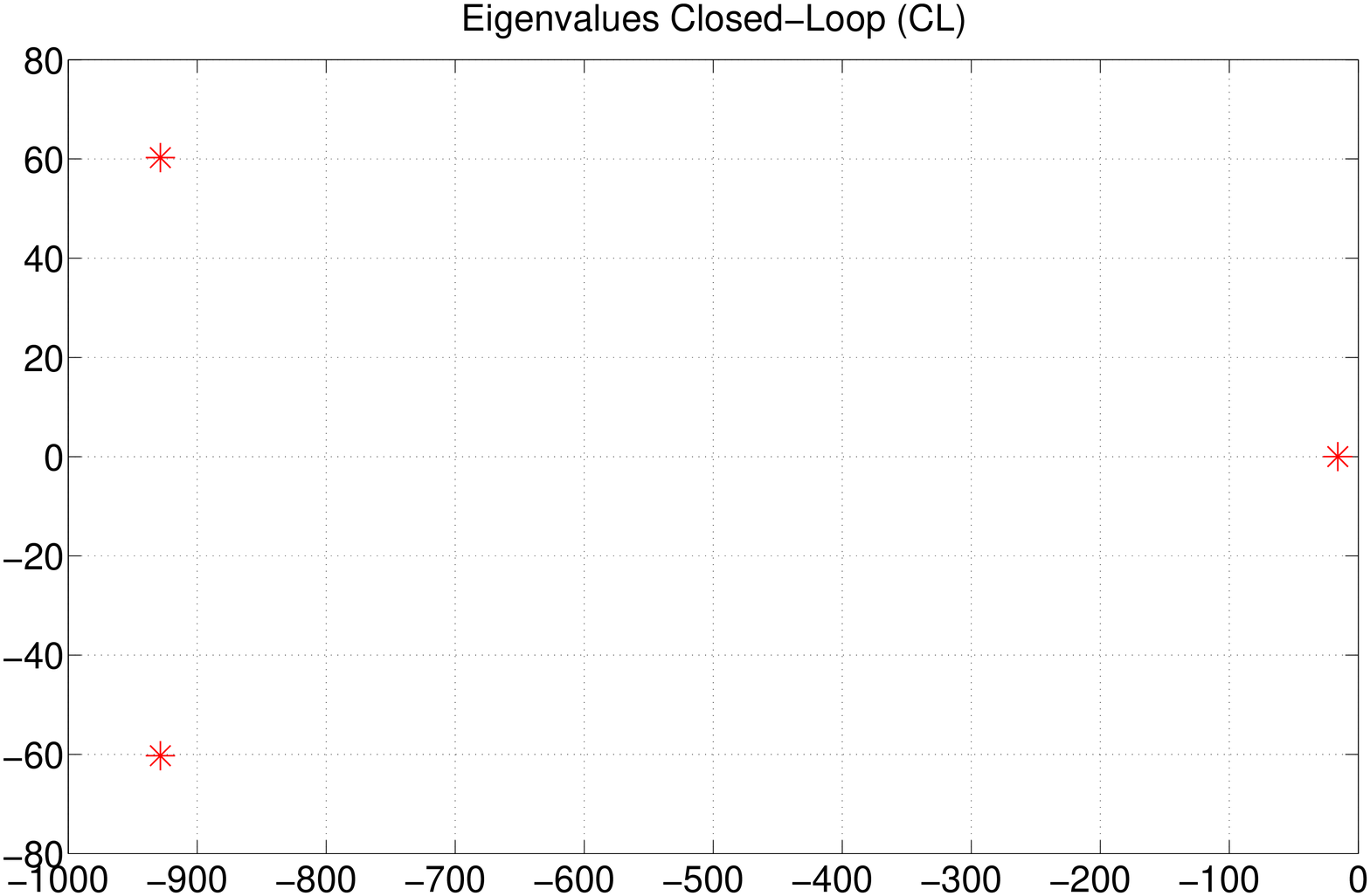}
                   \caption{Eigenvalues of the two decoupled
                     closed-loop QSL subsystems.}
                   \label{fig:closedLoop2Areas_0}
                 \end{subfigure}
                 \begin{subfigure}[!htb]{0.48\textwidth}
                   \centering
                   \includegraphics[width=1.1\textwidth, height=150pt]{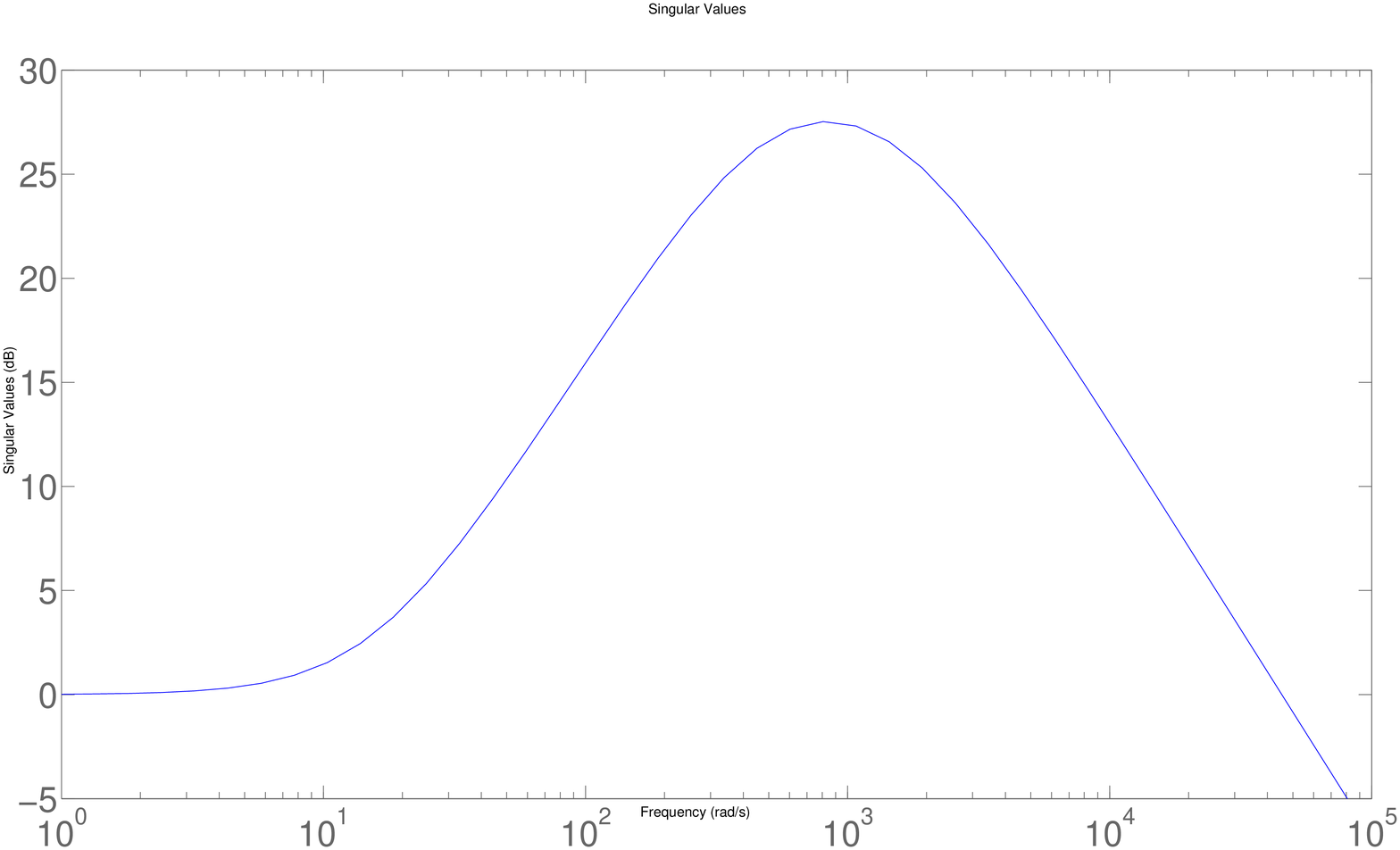}
                   \caption{Bode magnitude plot of pre-filters $\subss{\tilde{C}}{i}$, $i = 1,2$.}
                   \label{fig:singularValuePreFilter2Areas_0}
                 \end{subfigure}
                 \begin{subfigure}[!htb]{0.48\textwidth}
                   \centering
                   \includegraphics[width=1.1\textwidth, height=150pt]{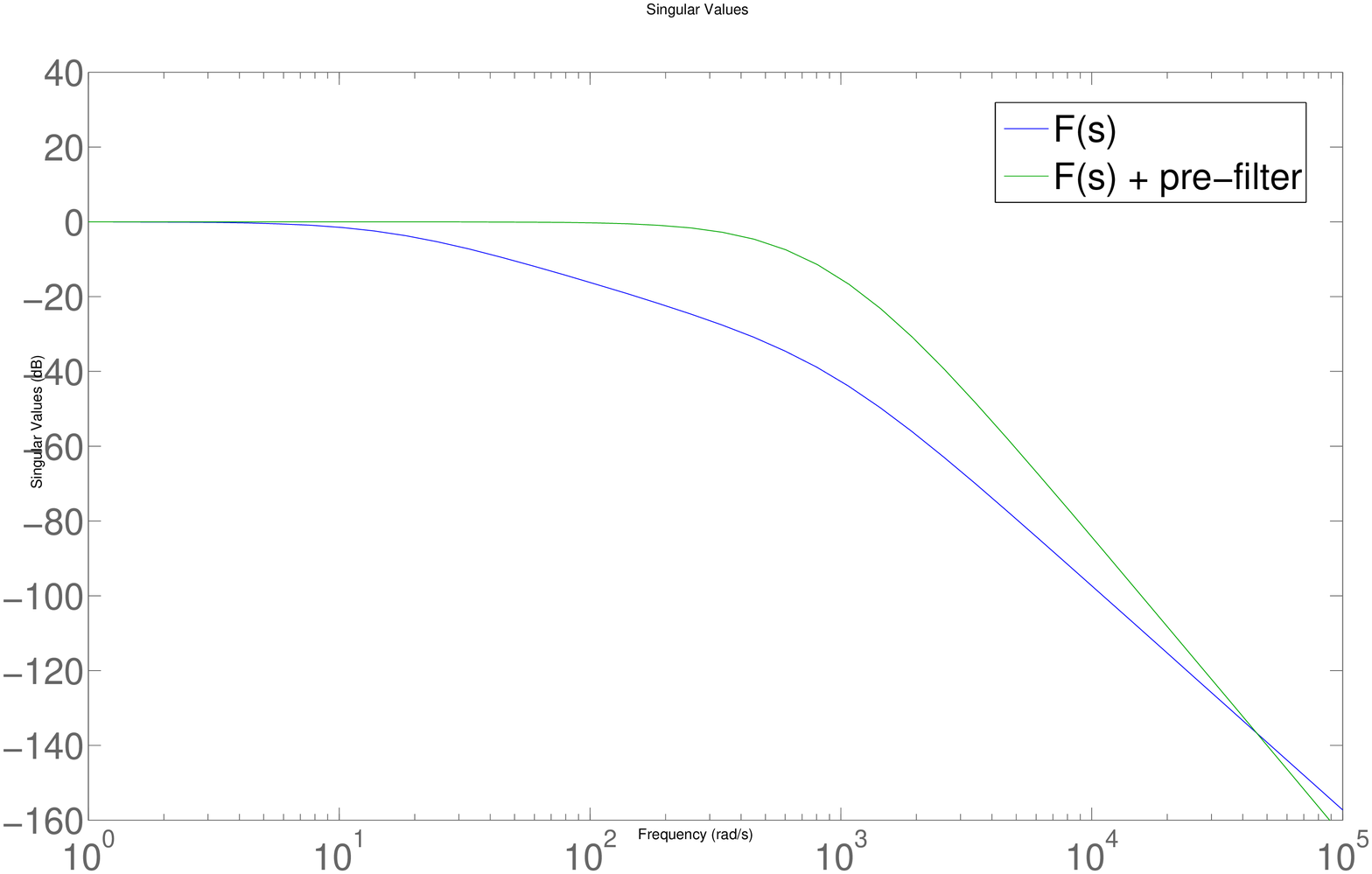}
                   \caption{Bode magnitude plot of $F_i(s)$, $i = 1,2$ with (green) and without (blue) pre-filters.}
                   \label{fig:singularValueClosedLoop2Areas_0}
                 \end{subfigure}
                 \caption{Features of PnP controllers for Scenario 1
                   when the DGUs are not interconnected.}
                 \label{fig:closedLoop2areas_0}
               \end{figure}
Figures \ref{fig:startupDGU1} and \ref{fig:startupDGU2} show the voltages at $PCC_{1}$ and $PCC_{2}$. Note that the controllers ensure an excellent tracking of the reference signals at the startup in a very short time.
    
  \begin{figure}[!htb]
                      \centering
                      \begin{subfigure}[!htb]{0.48\textwidth}
                        \centering
                        \includegraphics[width=1\textwidth, height=130pt]{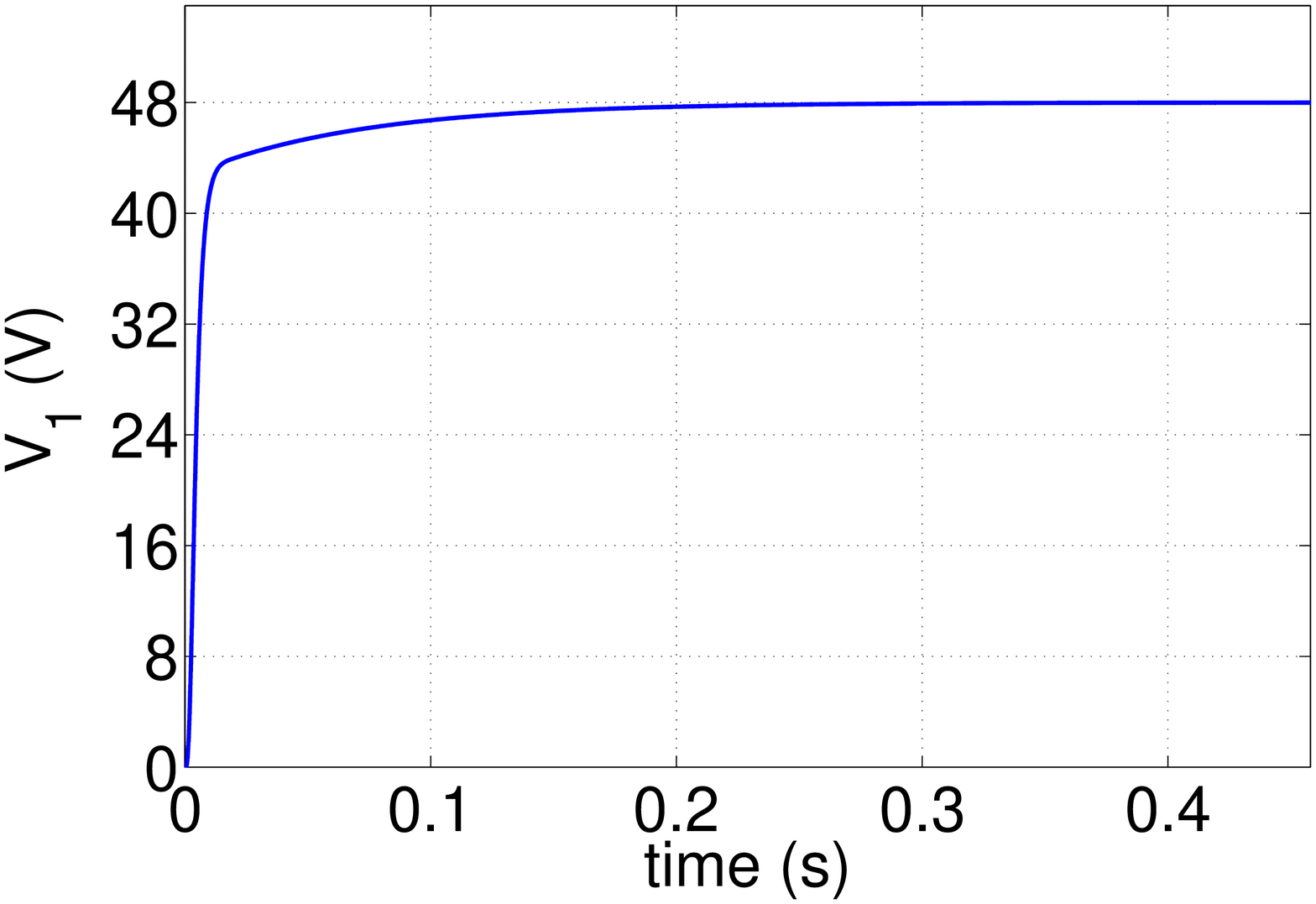}
                        \caption{Load voltage at $PCC_1$.}
                        \label{fig:startupDGU1}
                      \end{subfigure}
                      \begin{subfigure}[!htb]{0.48\textwidth}
                        \centering
                        \includegraphics[width=1\textwidth, height=130pt]{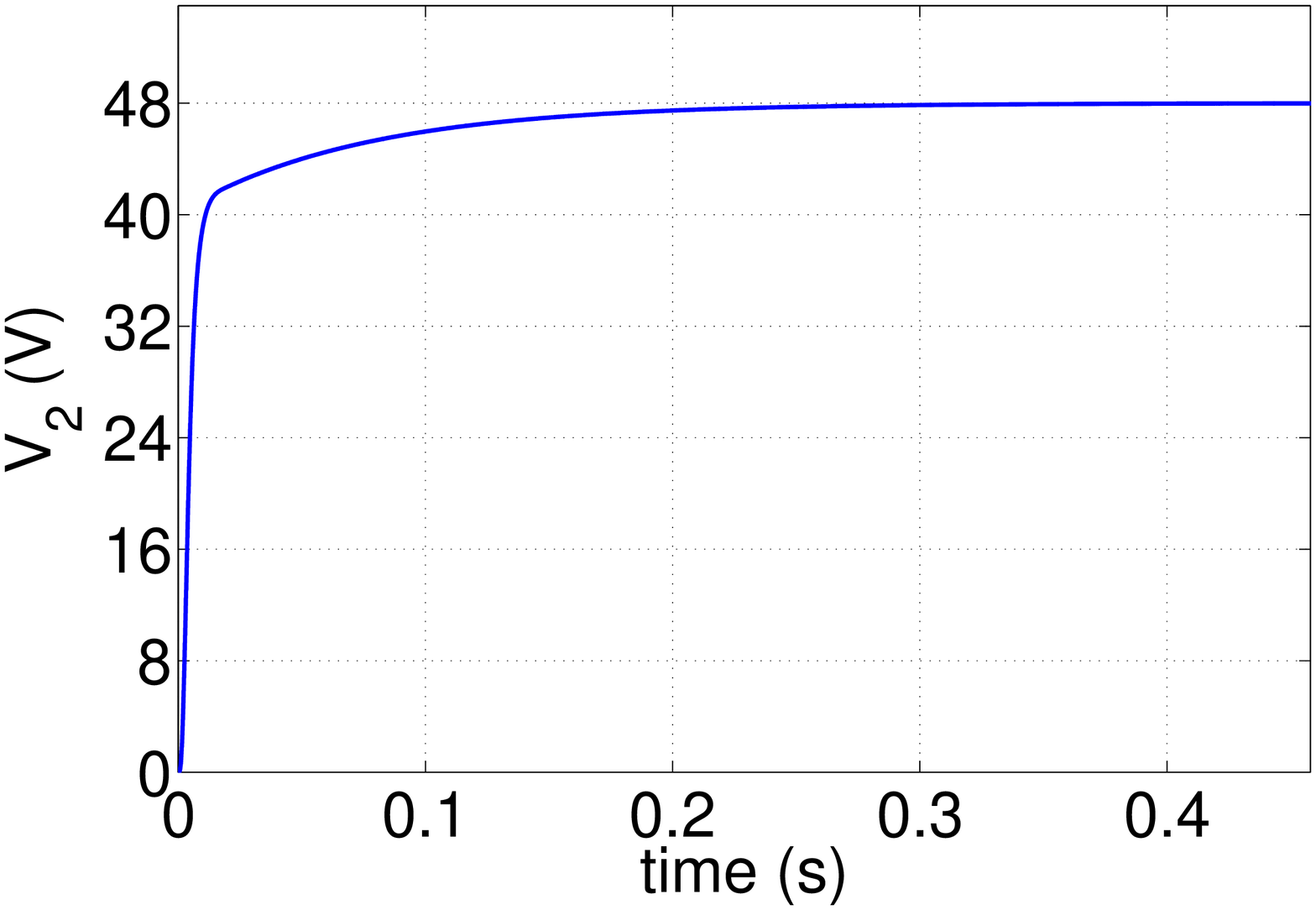}
                        \caption{Load voltage at $PCC_2$.}
                        \label{fig:startupDGU2}
                      \end{subfigure}
                        \caption{Scenario 1 - Voltage reference tracking at the startup.}
                      \label{fig:startupDGU}

                    \end{figure}
               \subsubsection{Hot plugging-in of DGUs 1 and 2}
                    \label{sec:scenario1voltagetrack1}
At time $t = 2$ s, we connect DGUs 1 and 2 together. This requires
real-time switch of the local controllers which translates into two
hot plugging-in operations as described in Section \ref{sec:PnP}. The
new decentralized controllers for subsystems
$\subss{\hat{\Sigma}}{1}^{DGU}$  and $\subss{\hat{\Sigma}}{2}^{DGU}$
are designed running Algorithm~\ref{alg:ctrl_design}. Notice that the
interconnection of the two subsystems lead to a variation of each DGU
dynamics, therefore even compensators $\subss{\tilde{C}}{i}$ and
$\subss{{N}}{i}$, $i = 1,2$ need to be updated. In particular, the new
desired closed-loop transfer functions $\tilde F_{i}(s)$, $i = {1, 2}$
have been chosen as low-pass filters with DC gain equal to 0 dB and bandwidth equal to 100 Hz. 

Since Algorithm 1 never stops in Step A, the hot plug-in of the DGUs
is allowed and local controllers get replaced by the new ones at $t =
2$ s. 
Figure \ref{fig:closedLoop2Areas_1} shows the closed-loop eigenvalues of the
overall QSL ImG composed of two interconnected DGUs.
The Bode magnitude plots of compensators $\subss{\tilde{C}}{i}$ and
$\subss{{N}}{i}$, $i = 1,2$ are depicted in Figure \ref{fig:singularValuePreFilter2Areas_1}
and \ref{fig:singularValueCmpAreas_1}, respectively, while the
singular values of the overall
closed-loop transfer function $F(s)$ with inputs
$[z_{ref_{[1]}},z_{ref_{[2]}}]^T$ and outputs $[z_{[1]},z_{[2]}]^T$ are shown in Figure
\ref{fig:singularValueClosedLoop2Areas_1}.
\begin{figure}[!htb]
                 \centering
                 \begin{subfigure}[!htb]{0.48\textwidth}
                   \centering
                   \includegraphics[width=1.1\textwidth, height=150pt]{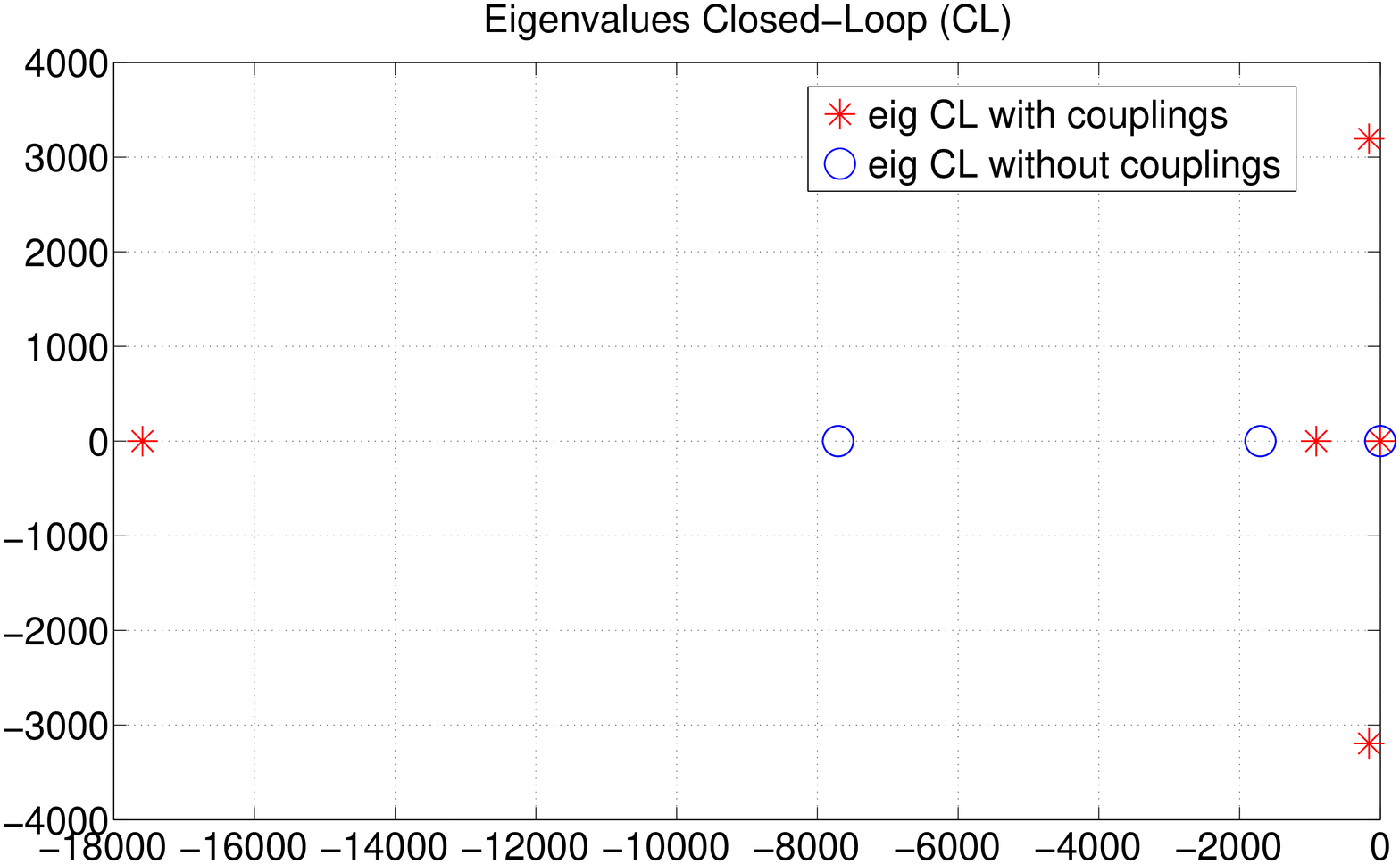}
                   \caption{Eigenvalues of the closed-loop QSL microgrid with (red) and without (blue) couplings.}
                   \label{fig:closedLoop2Areas_1}
                 \end{subfigure}
                 \begin{subfigure}[!htb]{0.48\textwidth}
                   \centering
                   \includegraphics[width=1.1\textwidth, height=150pt]{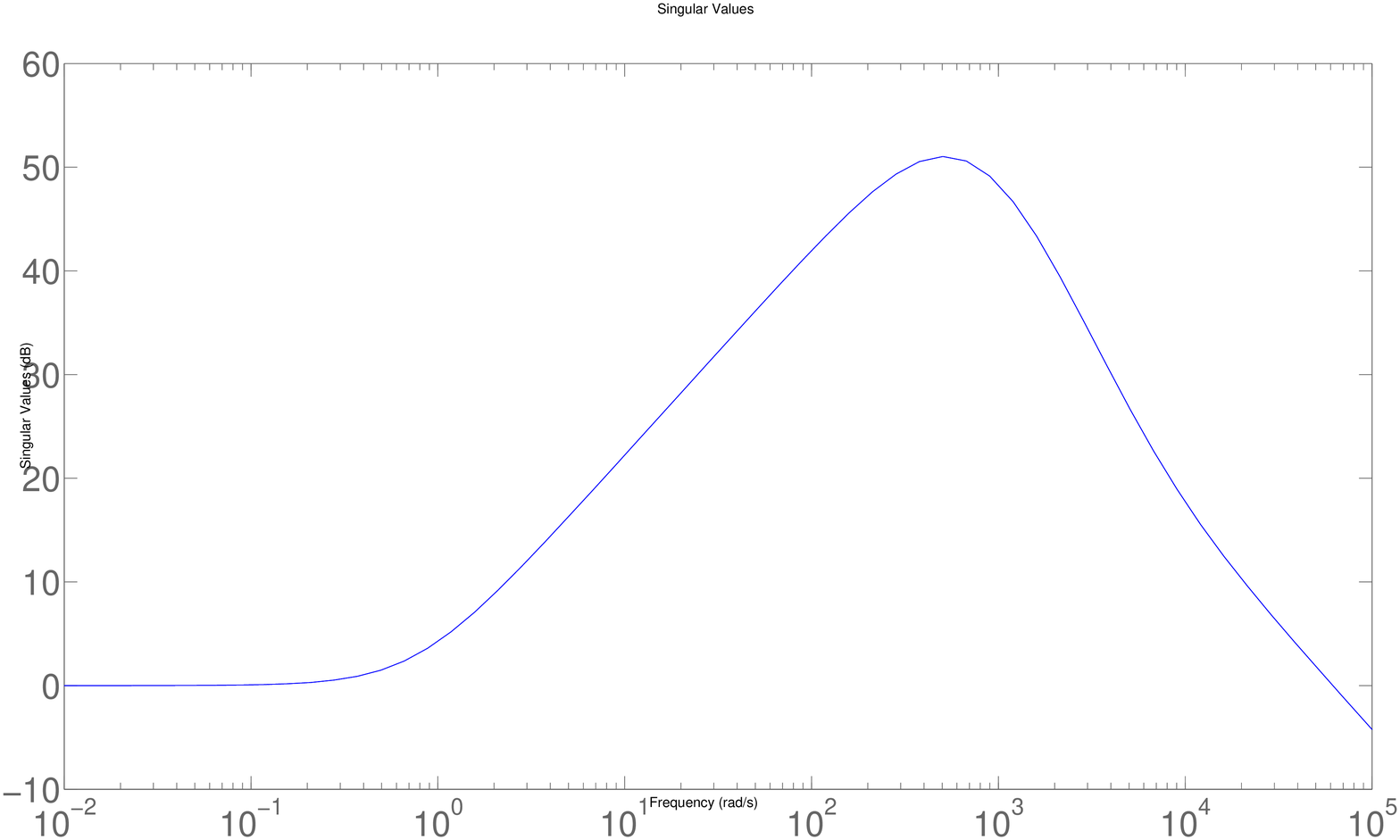}
                   \caption{Bode magnitude plot of pre-filters $\subss{\tilde{C}}{i}$, $i = 1,2$.}
                   \label{fig:singularValuePreFilter2Areas_1}
                 \end{subfigure}
                    \begin{subfigure}[!htb]{0.48\textwidth}
                   \centering
                   \includegraphics[width=1.1\textwidth, height=150pt]{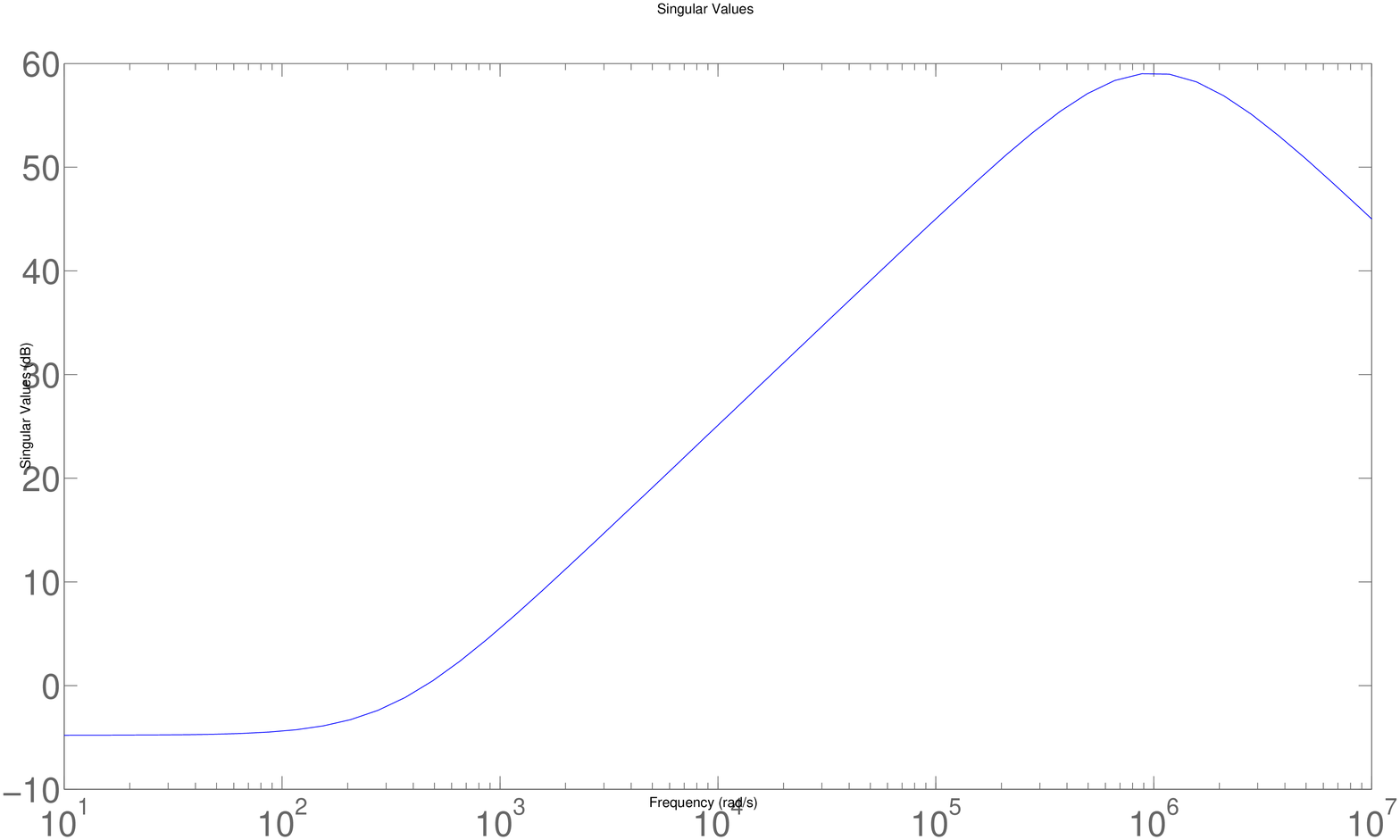}
                   \caption{Bode magnitude plot of disturbances
                     compensators $\subss{{N}}{i}$, $i = 1,2$. }
                   \label{fig:singularValueCmpAreas_1}
                 \end{subfigure}
                 \begin{subfigure}[!htb]{0.48\textwidth}
                   \centering
                   \includegraphics[width=1.1\textwidth, height=150pt]{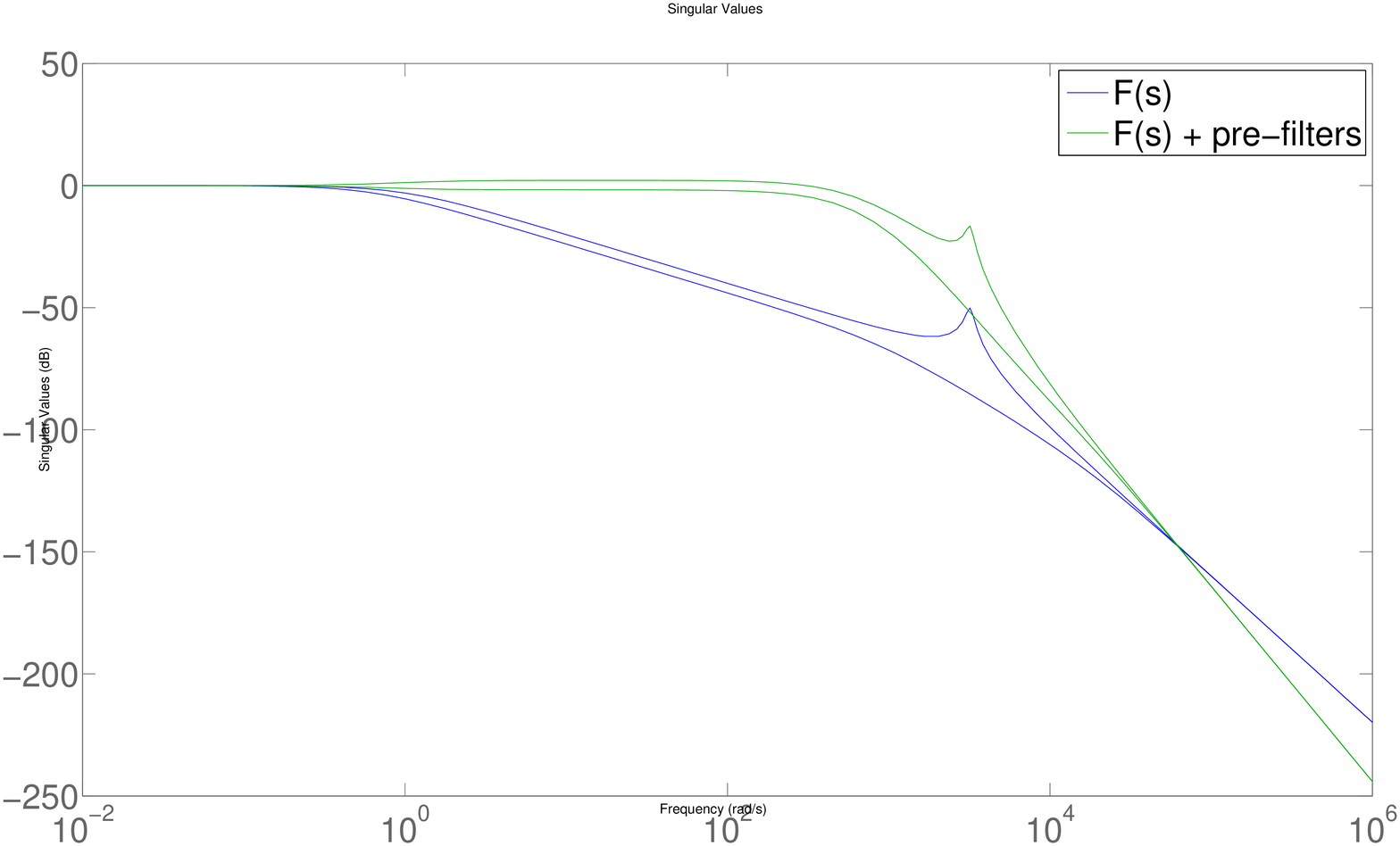}
                   \caption{Singular values of $F(s)$ with (green) and without (blue) pre-filters.}
                   \label{fig:singularValueClosedLoop2Areas_1}
                 \end{subfigure}
                 \caption{Features of PnP controllers for Scenario 1
                   when the DGUs are connected together.}
                 \label{fig:closedLoop2areas_1}
               \end{figure}

Figure \ref{fig:hotpluginDGU} shows the dynamic responses of the
voltages at $PCC_{1}$ and $PCC_{2}$ when the subsystems are connected
together. We highlight that the bumpless control transfer schemes
ensure no significant deviations in the output signals when the
controller switch is performed. Moreover, through the proposed
decentralized control strategy, voltage regulation is excellent.

\begin{figure}[!htb]
                      \centering
                      \begin{subfigure}[!htb]{0.48\textwidth}
                        \centering
                        \includegraphics[width=1\textwidth, height=130pt]{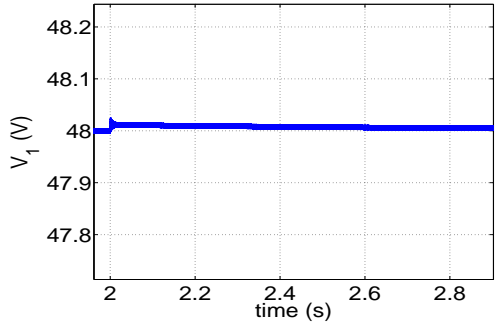}
                        \caption{Load voltage at $PCC_1$.}
                        %\label{fig:hotpluginDGU1}
                      \end{subfigure}
                      \begin{subfigure}[!htb]{0.48\textwidth}
                        \centering
                        \includegraphics[width=1\textwidth, height=130pt]{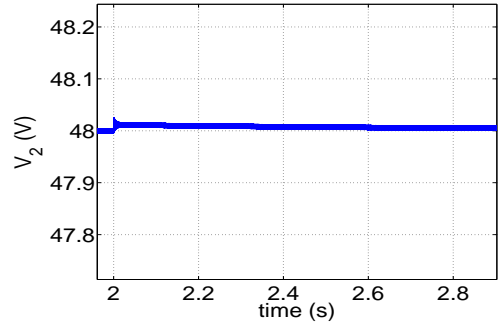}
                        \caption{Load voltage at $PCC_2$.}
                      %  \label{fig:hotpluginDGU2}
                      \end{subfigure}
                        \caption{Scenario 1 - Impact of bumpless
                          control transfer on the hot plug-in at time
                          $t=2$ s.}
                      \label{fig:hotpluginDGU}

                    \end{figure}

                 %   Additional simulations illustrating the voltage tracking at $PCC_2$ are provided in \cite{Riverso2014c}.
                    
               \subsubsection{Robustness to unknown load dynamics}
                    Next, we assess the performance of PnP controllers
                    when loads suddenly change at a certain time. To this purpose, at $t = 3$ s we
                    decrease the load resistances at $PCC_1$ and
                    $PCC_2$ to half of their initial values. Figure
                    \ref{fig:Sc1switch} shows the response of the
                    ImGs. Figures \ref{fig:Sc1switchV1} and
                    \ref{fig:Sc1switchV2} show the load voltage at
                    $PCC_1$ and $PCC_2$ which confirm very good
                    compensation of the current disturbances produced by load changes. We
                    notice small oscillations of the voltage signals
                    due to the presence of complex coniugate poles in
                    the transfer function of the closed-loop overall
                    system including couplings (as shown in Figure \ref{fig:closedLoop2Areas_1}). However, these
                    oscillations disappear after a short transient. We
                    recall that load currents (see Figures
                    \ref{fig:Sc1switchI1} and \ref{fig:Sc1switchI2})
                    are treated as measurable disturbances in our
                    model. Varying the load resistance, induces
                    step-like changes in the disturbances.
%notice that, except for short transients, local controllers successfully regulate the output sinusoidal waveforms at the desired levels. 
                    \begin{figure}[!htb]
                      \centering
                      \begin{subfigure}[!htb]{0.48\textwidth}
                        \centering
                        \includegraphics[width=1\textwidth, height=130pt]{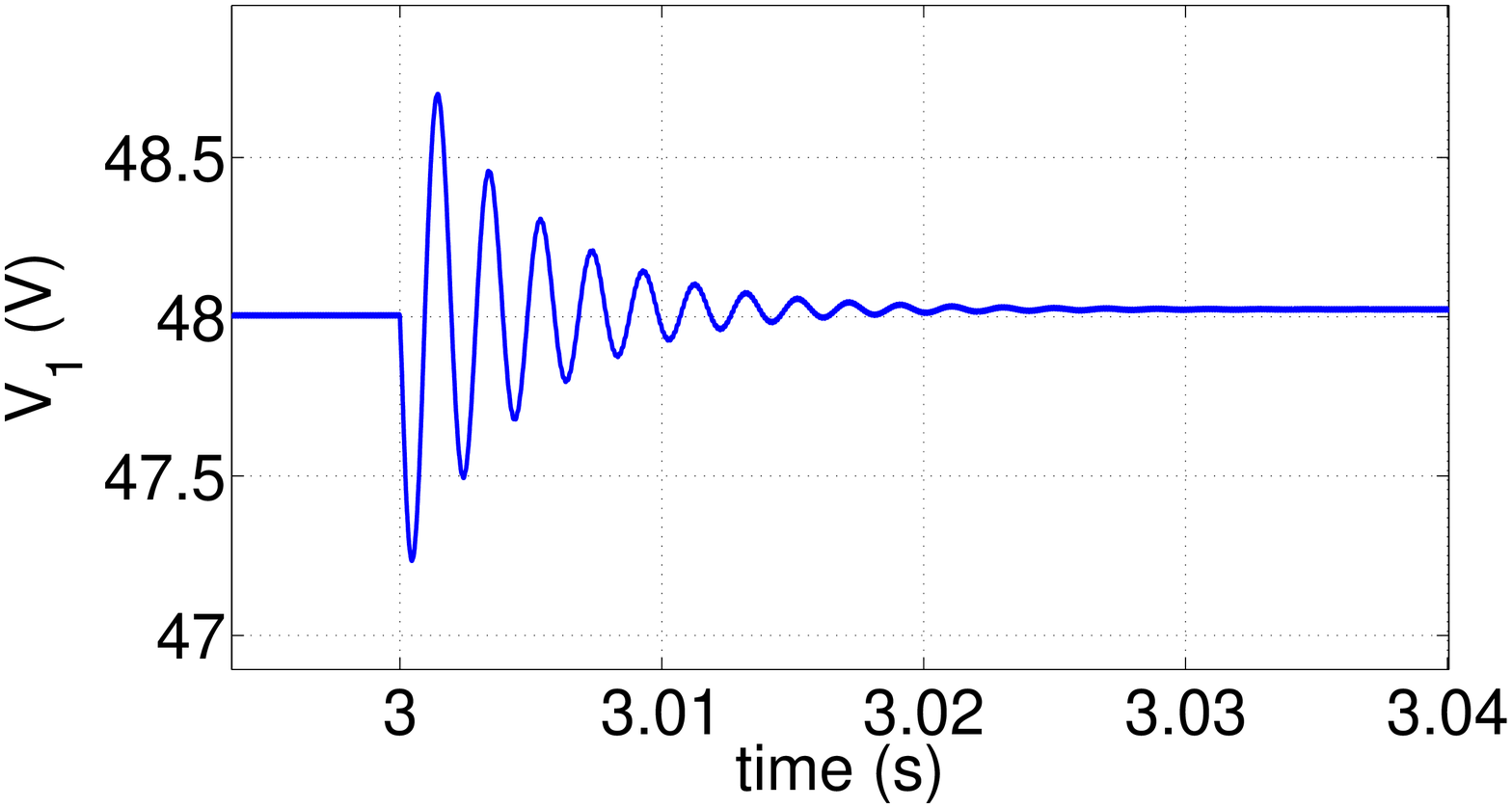}
                        \caption{Voltage at $PCC_1$.}
                        \label{fig:Sc1switchV1}
                      \end{subfigure}
                      \begin{subfigure}[!htb]{0.48\textwidth}
                        \centering
                        \includegraphics[width=1\textwidth, height=130pt]{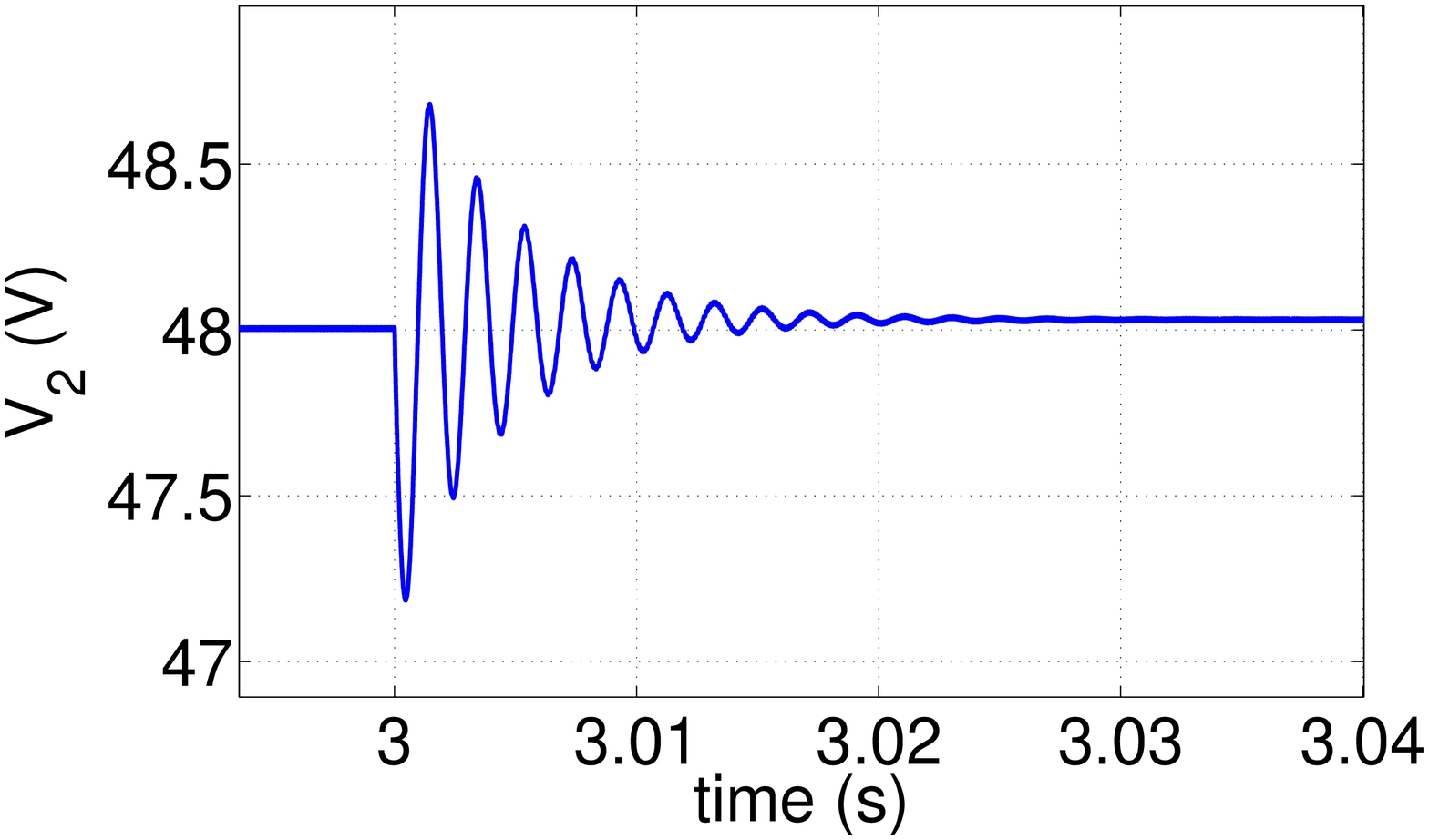}
                        \caption{Voltage at $PCC_2$.}
                        \label{fig:Sc1switchV2}
                      \end{subfigure}
                      \begin{subfigure}[!htb]{0.48\textwidth}
                        \centering
                        \includegraphics[width=1\textwidth, height=130pt]{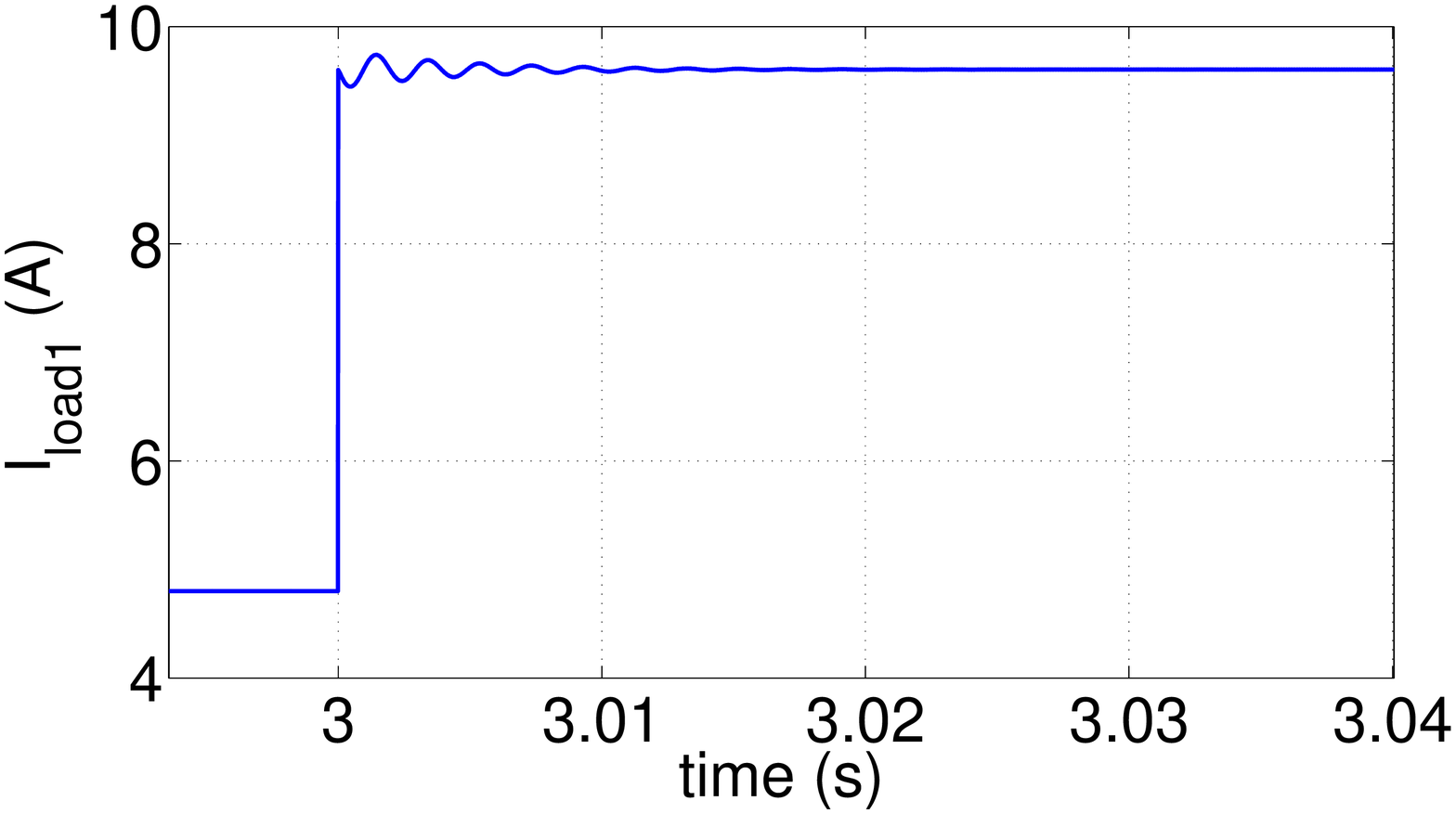}
                        \caption{Instantaneous load current $I_{L1}$.}
                        \label{fig:Sc1switchI1}
                      \end{subfigure}
                      \begin{subfigure}[!htb]{0.48\textwidth}
                        \centering
                        \includegraphics[width=1\textwidth, height=130pt]{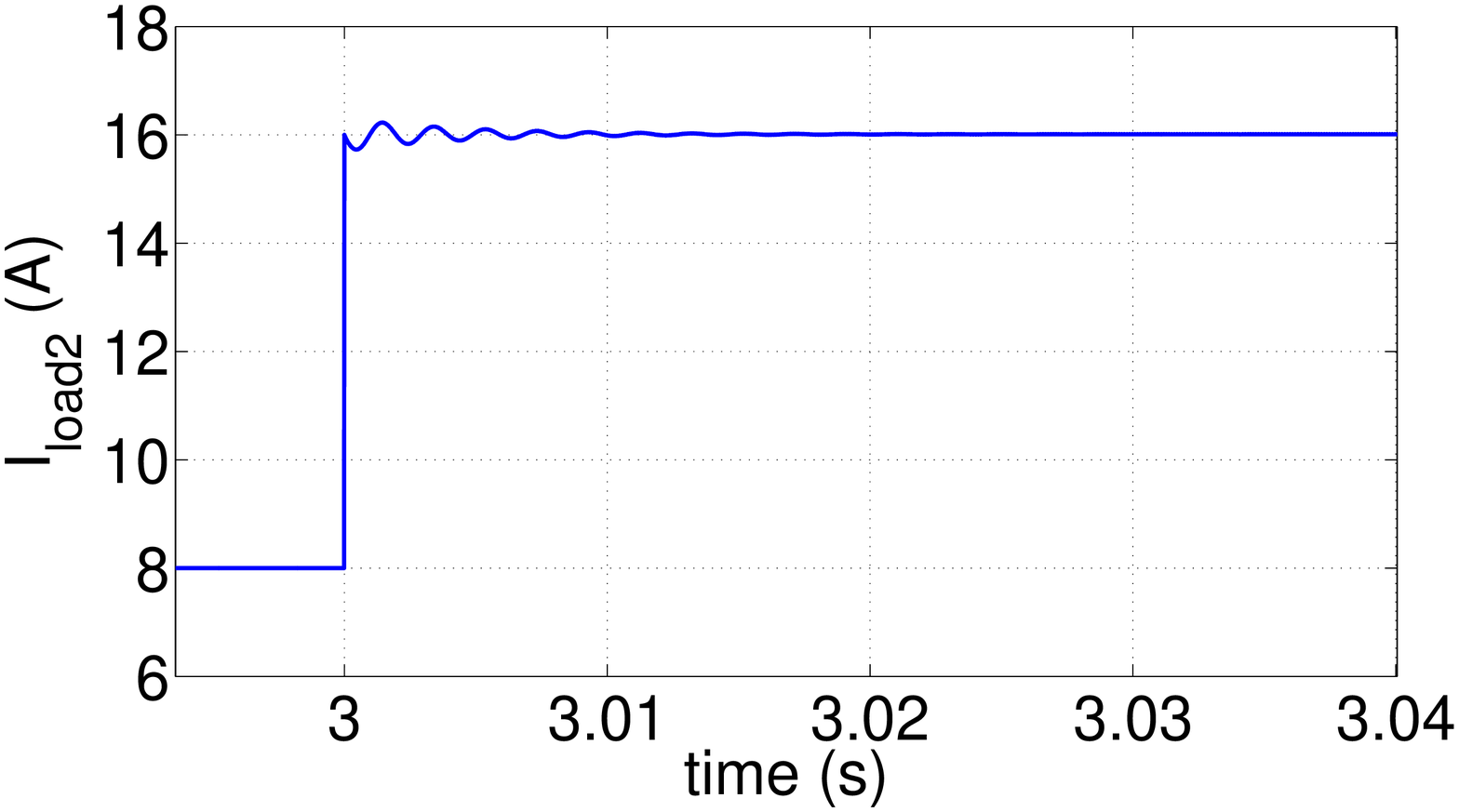}
                        \caption{Instantaneous load current $I_{L2}$.}
                        \label{fig:Sc1switchI2}
                      \end{subfigure}
                      \caption{Scenario 1 - Performance of PnP
                        decentralized voltage control in presence of
                        load switches at time $t = 3$ s.}
                      \label{fig:Sc1switch}
                    \end{figure}

               \subsubsection{Voltage tracking for DGU 1}
                    \label{sec:differentC}
                    Finally, we evaluate the performance in tracking step changes in the voltage reference at one $PCC$ (e.g. $PCC_1$) when the DGUs are connected together. This test is of particular concern if we look at the concrete implementation of islanded DC microgrids. In fact, changes in the voltage references can be required in order to regulate power flow among the DGUs, or to control the state-of-charge of possible batteries embedded in the ImG.

                    To this purpose, at $t = 4$ s we let the reference
                    signal of DGU 1, $v^\star_{1, MG}$, step down to
                    $47.6\mbox{ }V$. Notice that this small variation
                    of the voltage reference at $PCC_1$ is sufficient
                    to let an appreciable amount of current flow
                    through the line, since the line impedance is
                    quite small. The dynamic responses of the overall
                    microgrid to this change are shown in Figure
                    \ref{fig:Vtrack}. As one can see, controllers guarantees good tracking performances in a reasonable time with small interactions between the two DGUs. 

   \begin{figure}[!htb]
                      \centering
                      \begin{subfigure}[!htb]{0.48\textwidth}
                        \centering
                        \includegraphics[width=1\textwidth, height=130pt]{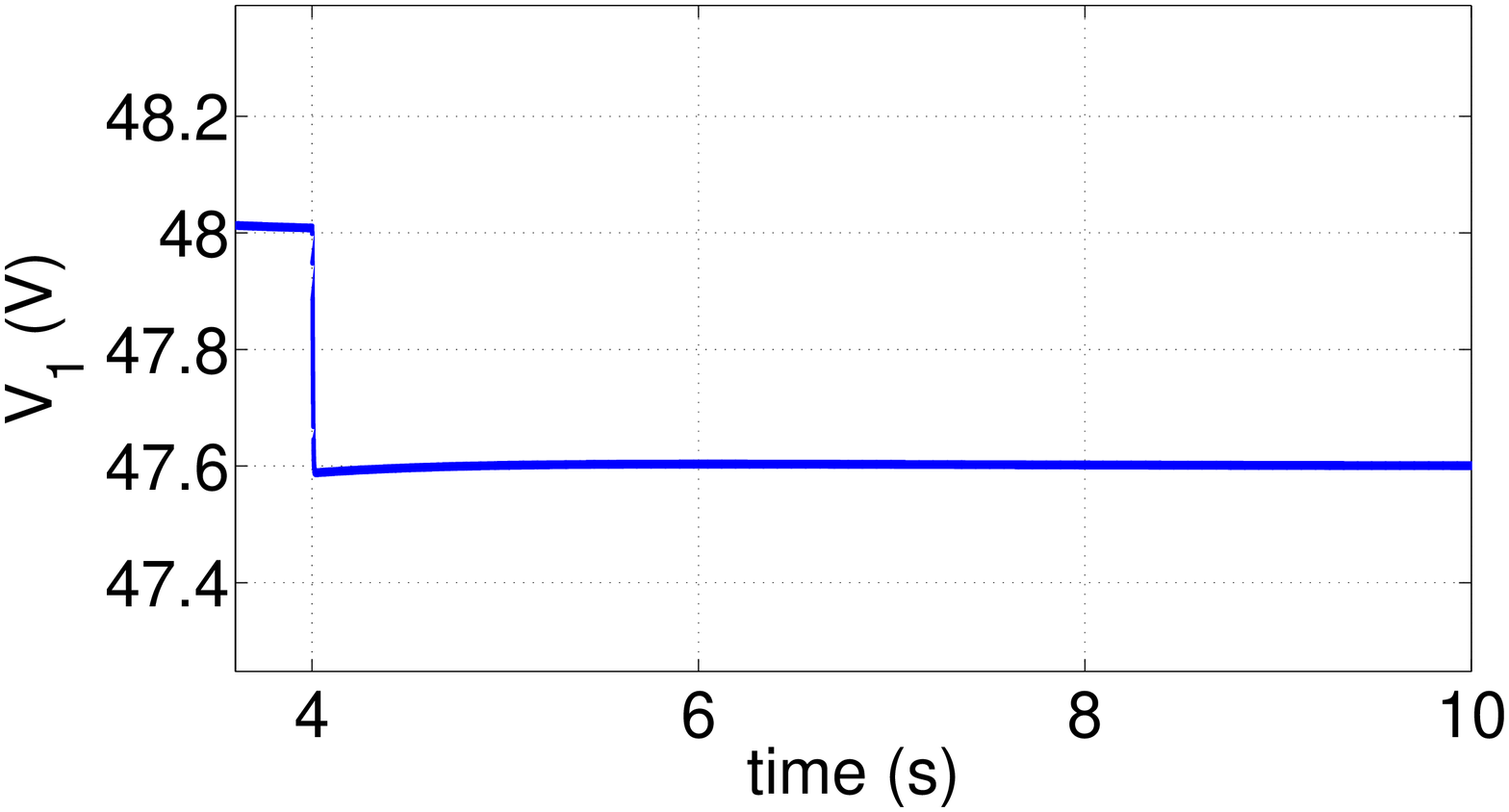}
                        \caption{Voltage at $PCC_1$.}
                        \label{fig:V1track}
                      \end{subfigure}
                      \begin{subfigure}[!htb]{0.48\textwidth}
                        \centering
                        \includegraphics[width=1\textwidth, height=130pt]{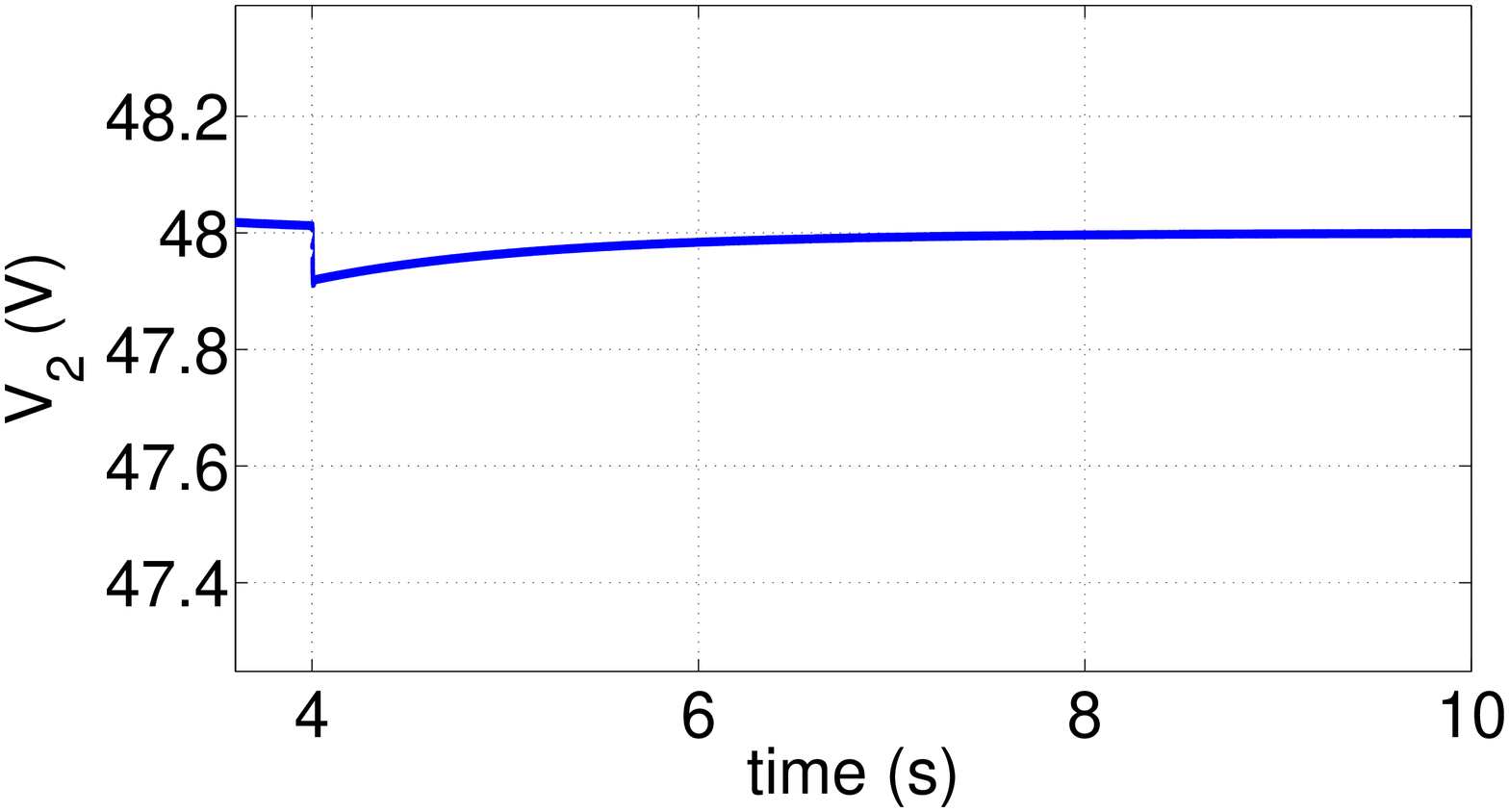}
                        \caption{Voltage at $PCC_2$.}
                        \label{fig:V2track}
                      \end{subfigure}
                     \caption{Scenario 1 - Performance of PnP decentralized voltage controllers in terms of set-point tracking for DGU 1.}
                      \label{fig:Vtrack}                 
                    \end{figure}

            \subsection{Scenario 2}
             \label{sec:scenario_2}
               In this second scenario, we consider the meshed ImG shown in Figure \ref{fig:5area_1}. As
               one can notice, the
               main difference with respect to Scenario 1 is that some
               DGUs have more than one neighbour. This means
               that the disturbances influencing their dynamics will
               be greater. Moreover, the presence of a loop further
               complicates voltage regulation. 
               To our knowledge, control of loop-interconnected DGUs
               has never been investigated for
               DC microgrids. 
    
      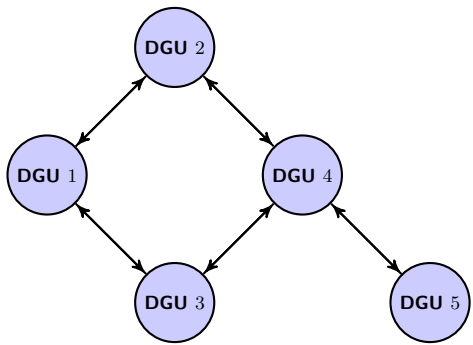
\begin{figure}[!htb]
                 \centering
                 \begin{tikzpicture}[scale=0.8,transform shape,->,>=stealth',shorten >=1pt,auto,node distance=3cm, thick,main node/.style={circle,fill=blue!20,draw,font=\sffamily\bfseries}]
					  
  \node[main node] (1) {\small{DGU $1$}};
  \node[main node] (2) [above right of=1] {\small{DGU $2$}};
  \node[main node] (3) [below right of=1] {\small{DGU $3$}};
  \node[main node] (4) [above right of=3] {\small{DGU $4$}};
  \node[main node] (5) [below right of=4] {\small{DGU $5$}};
 
  \path[every node/.style={font=\sffamily\small}]
  (1) edge node [left] {} (3)
  (3) edge node [right] {} (1)

  (1) edge node [left] {} (2)
  (2) edge node [right] {} (1)
  
  (3) edge node [left] {} (4)
  (4) edge node [right] {} (3)
  
  (2) edge node [left] {} (4)
  (4) edge node [right] {} (2)
  
  (4) edge node [left] {} (5)
  (5) edge node [right] {} (4);
\end{tikzpicture}
                 \caption{Scenario 2 - Scheme of the ImG
                   composed of 5 DGUs, connected through transmission
                   lines (black arrows).}
                 \label{fig:5area_1}
               \end{figure}

In order to assess the capability of the proposed decentralized
approach to cope with heterogeneous dynamics, we
consider an ImG composed of DGUs with non-identical electrical parameters. They are listed in Tables \ref{tbl:diffpar5}, \ref{tbl:linespar5}
and \ref{tbl:commpar5} in Appendix \ref{sec:AppElectrPar}.

We also assume that DGUs 1-5 supply  $10\mbox{ } \Omega$,
$6\mbox{ } \Omega$, $20\mbox{ } \Omega$, $2 \mbox{ }\Omega$ and
$4\mbox{ } \Omega$ loads, respectively. Moreover, we highlight that, for this Scenario, no compensators $\tilde C_{i}$ and $N_i$ have been used. 
               At the beginning of the simulation, all the DGUs are
               assumed to be isolated and not connected to each
               other. However, we choose to equip each subsystem
               $\subss{\hat{\Sigma}}{i}^{DGU}$,
               $i\in\DD=\{1,\dots,5\}$, with controllers
               $\subss{\CC}{i}$ designed by running Algorithm
               \ref{alg:ctrl_design} and taking into account couplings
               among DGUs. This is possible because, as shown in
               Section \ref{sec:riferimento}, local controllers
               stabilize the ImG
               also in absence of couplings. Because of this choice of local controllers
               in the startup phase, when the five subsystems are
               connected together at time $t = 1.5$ s, no bumpless
               control scheme is required since no real-time switch of controllers is performed. The
               closed-loop eigenvalues of the overall QSL ImG are depicted in Figure
               \ref{fig:eig_5Areas_0} while Figure \ref{fig:Fs_5Areas_0} shows the closed-loop
               transfer function of the whole microgrid.
\begin{figure}[!htb]
                 \centering
                  \begin{subfigure}[!htb]{0.48\textwidth}
                   \centering
                   \includegraphics[width=1.1\textwidth, height=150pt]{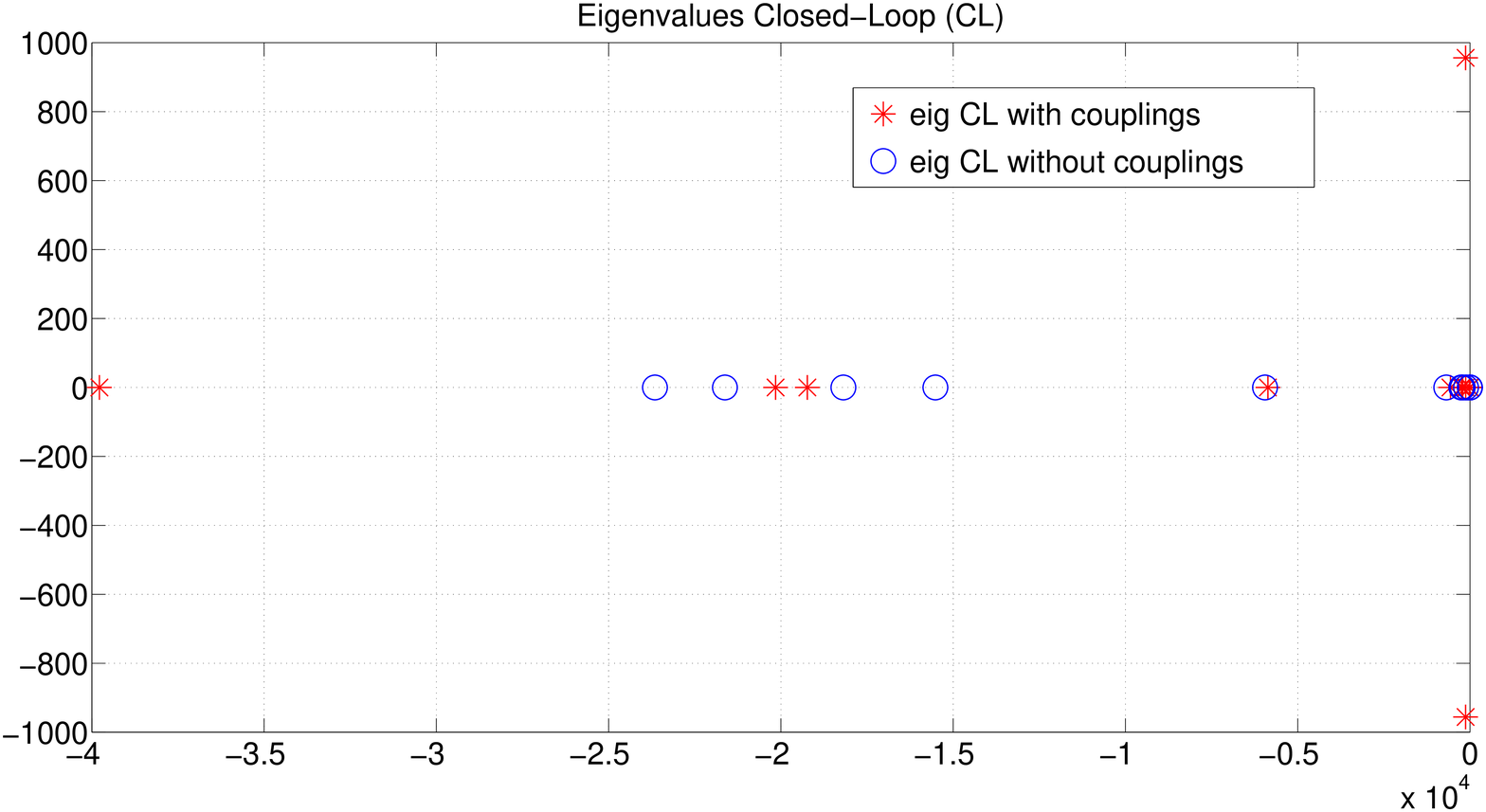}
                   \caption{Eigenvalues of the closed-loop QSL microgrid with (red) and without (blue) couplings.}
                   \label{fig:eig_5Areas_0}
                 \end{subfigure}
                 \begin{subfigure}[!htb]{0.48\textwidth}
                   \centering
                   \includegraphics[width=1.1\textwidth, height=150pt]{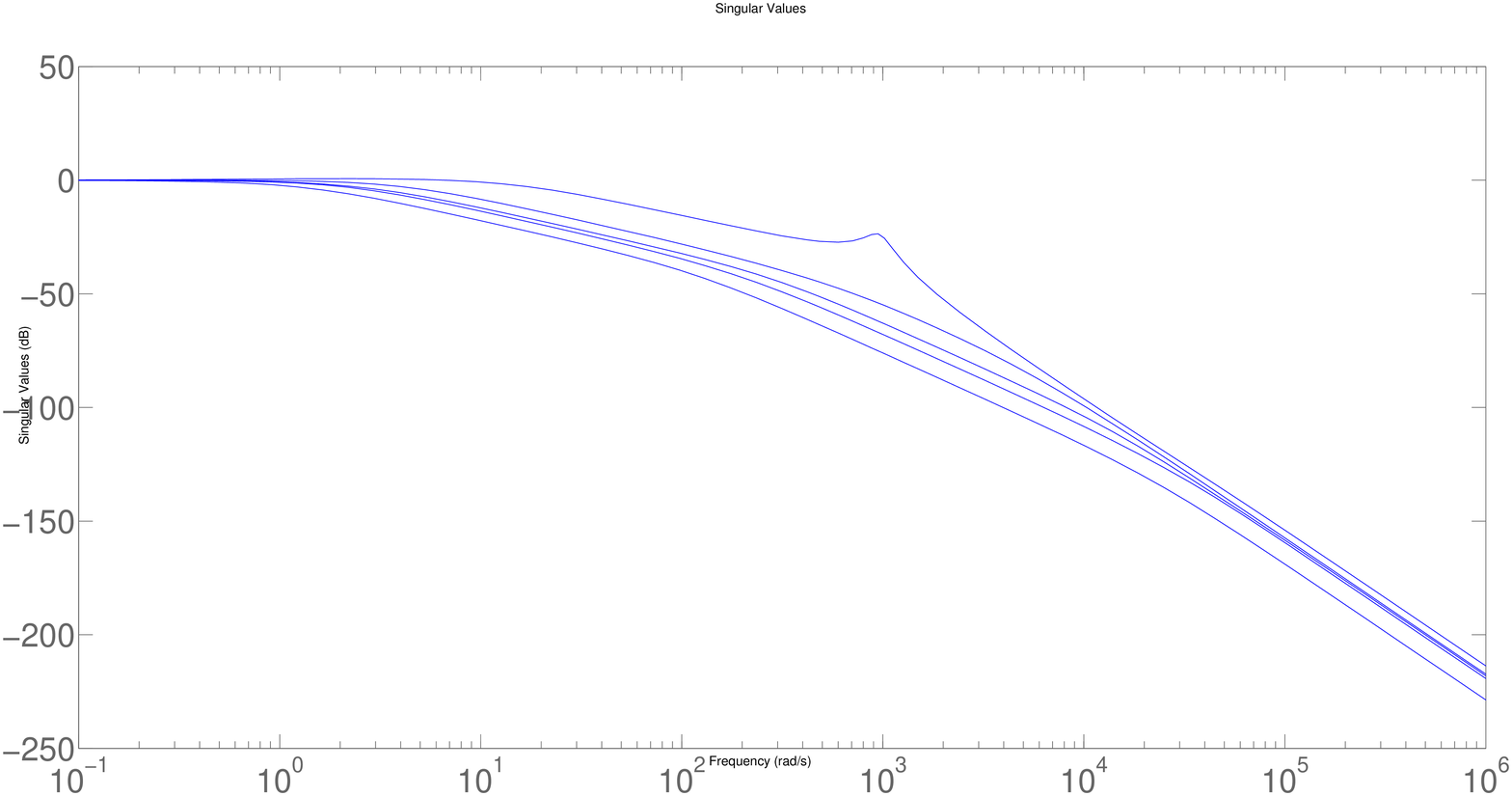}
                   \caption{ Singular values of $F(s)$.}
                   \label{fig:Fs_5Areas_0}
                 \end{subfigure}
                 \caption{Features of PnP controllers for Scenario 2
                   with 5 interconnected DGUs.}
                 \label{fig:5areas}
               \end{figure}

\subsubsection{Plug-in of a new DGU}
               For evaluating the PnP capabilities of our control
               approach, we simulate the connection of DGU
               $\subss{\hat{\Sigma}}{6}^{DGU}$ with
               $\subss{\hat{\Sigma}}{1}^{DGU}$ and
               $\subss{\hat{\Sigma}}{5}^{DGU}$, as shown in Figure
               \ref{fig:5areasplug}. Therefore, we have
               $\NN_{6}=\{1,5\}$. 
\begin{figure}[!htb]
                 \centering
                 \begin{tikzpicture}[scale=0.8,transform shape,->,>=stealth',shorten >=1pt,auto,node distance=3cm, thick,main node/.style={circle,fill=blue!20,draw,font=\sffamily\bfseries}]
					  
  \node[main node] (1) {\small{DGU $1$}};
  \node[main node] (2) [above right of=1] {\small{DGU $2$}};
  \node[main node] (3) [below right of=1] {\small{DGU $3$}};
  \node[main node] (4) [above right of=3] {\small{DGU $4$}};
  \node[main node] (5) [below right of=4] {\small{DGU $5$}};
  \node[main node] (6) [below of=3] {\small{DGU $6$}};
 
  \path[every node/.style={font=\sffamily\small}]
  (1) edge node [left] {} (3)
  (3) edge node [right] {} (1)

  (1) edge node [left] {} (2)
  (2) edge node [right] {} (1)
  
  (3) edge node [left] {} (4)
  (4) edge node [right] {} (3)
  
  (2) edge node [left] {} (4)
  (4) edge node [right] {} (2)
  
  (4) edge node [left] {} (5)
  (5) edge node [right] {} (4);

  \draw[red] (1) to (6);
  \draw[red] (6) to (1);
  \draw[red] (5) to (6);
  \draw[red] (6) to (5);

\end{tikzpicture}
                 \caption{Scenario 2 - Scheme of the ImG
                   composed of 6 DGUs after the plugging-in of $\subss{\hat{\Sigma}}{6}^{DGU}$.}
                 \label{fig:5areasplug}
               \end{figure}
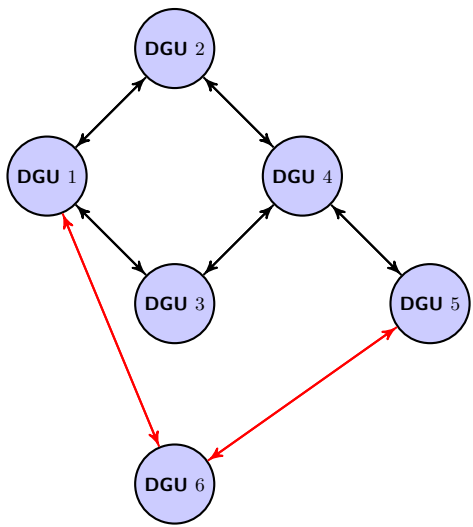
In principle, subsystems $\subss{\hat{\Sigma}}{j}^{DGU}$,
               $j\in\NN_{6}$ must update their controllers
               $\subss{\CC}{j}$  (see Section \ref{sec:PnP}). However,
               we highlight that previous controllers for DGUs
               $\subss{\hat{\Sigma}}{1}^{DGU}$ and
               $\subss{\hat{\Sigma}}{5}^{DGU}$ can be also maintained,
               provided that the already computed matrices $K_j$, $j\in\NN_{6}$ still
               fulfill all constraints in \eqref{eq:optproblem}
               for the new ImG topology. Since this test
               succeeds, we proceed by executing  Algorithm
               \ref{alg:ctrl_design} for synthesizing $\subss{\CC}{6}$
               for the new DGU only. Algorithm \ref{alg:ctrl_design} never
               stops in Step \ref{enu:stepAalgCtrl} and therefore the addition of $\subss{\hat{\Sigma}}{6}^{DGU}$ is allowed.
      The real-time plugging-in of
               $\subss{\hat{\Sigma}}{6}^{DGU}$ is executed at time
               $t=2$ s. Until the plug-in of
               $\subss{\hat{\Sigma}}{6}^{DGU}$, common reference
               $v^\star_{MG}$ for DGUs 1-5 is the same as for DGUs 1-2
               in Scenario 1 and the subsystem
               $\subss{\hat{\Sigma}}{6}^{DGU}$ is assumed to work
               isolated, tracking the reference voltage
               $v^\star_{MG}$. Figures \ref{fig:eig_2} and \ref{fig:F_s_2} show
               respectively the closed-loop eigenvalues and the singular
               values of the
               closed-loop $F(s)$ of the overall QSL ImG in Figure \ref{fig:5areasplug} equipped with the
               controllers described above. From Figure
               \ref{fig:Sc2_V_0}, we note that right after the hot
               plug-in of $\subss{\hat{\Sigma}}{6}^{DGU}$  at $t=2$ s, load voltages of $\subss{\hat{\Sigma}}{1}^{DGU}$ and $\subss{\hat{\Sigma}}{5}^{DGU}$ do not deviate from the respective reference signals. 
               \begin{figure}[!htb]
                 \centering
                  \begin{subfigure}[!htb]{0.48\textwidth}
                   \centering
                   \includegraphics[width=1.1\textwidth, height=150pt]{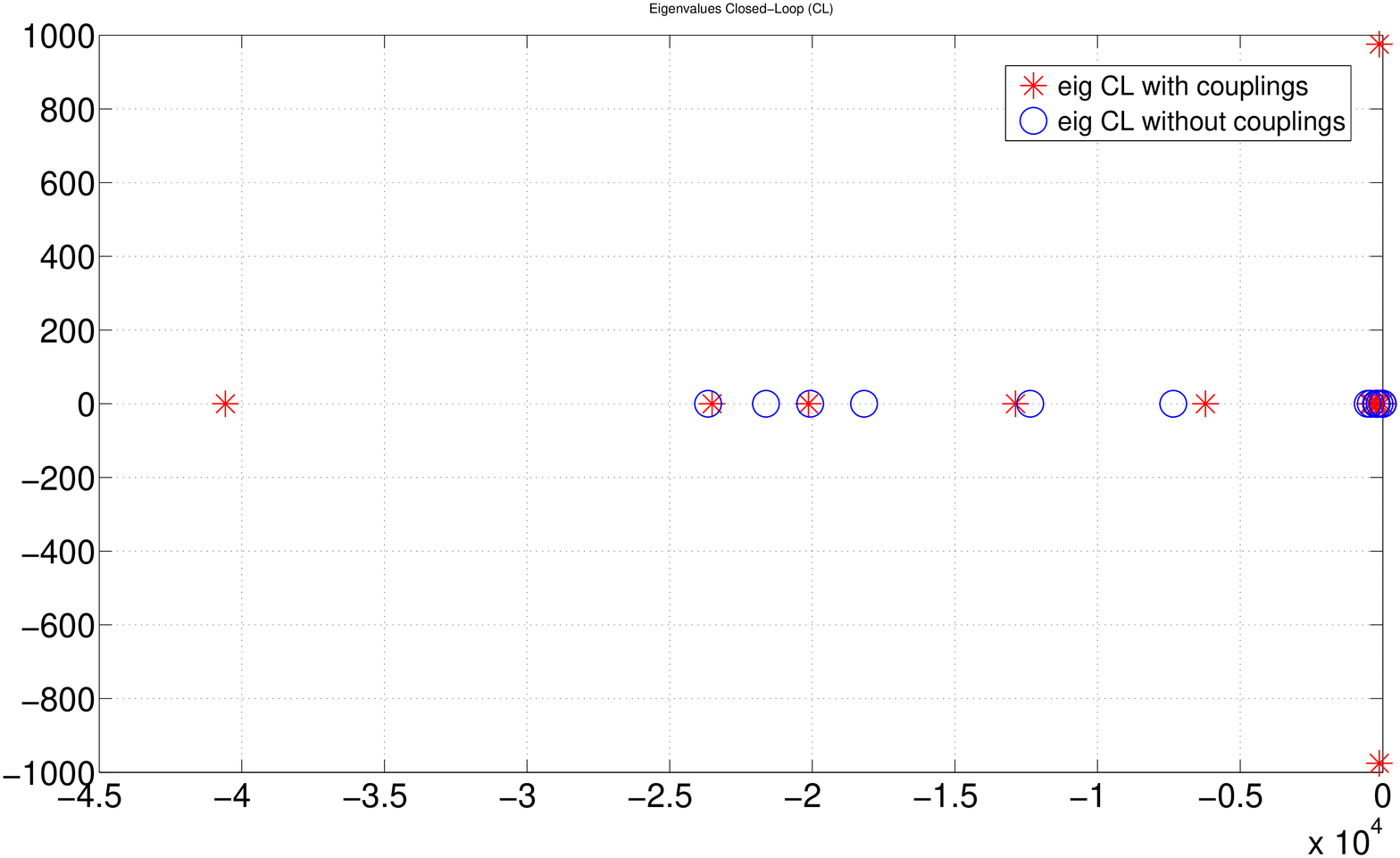}
                   \caption{Eigenvalues of the closed-loop QSL microgrid with (red) and without (blue) couplings.}
                   \label{fig:eig_2}
                 \end{subfigure}
                 \begin{subfigure}[!htb]{0.48\textwidth}
                   \centering
                   \includegraphics[width=1.1\textwidth, height=150pt]{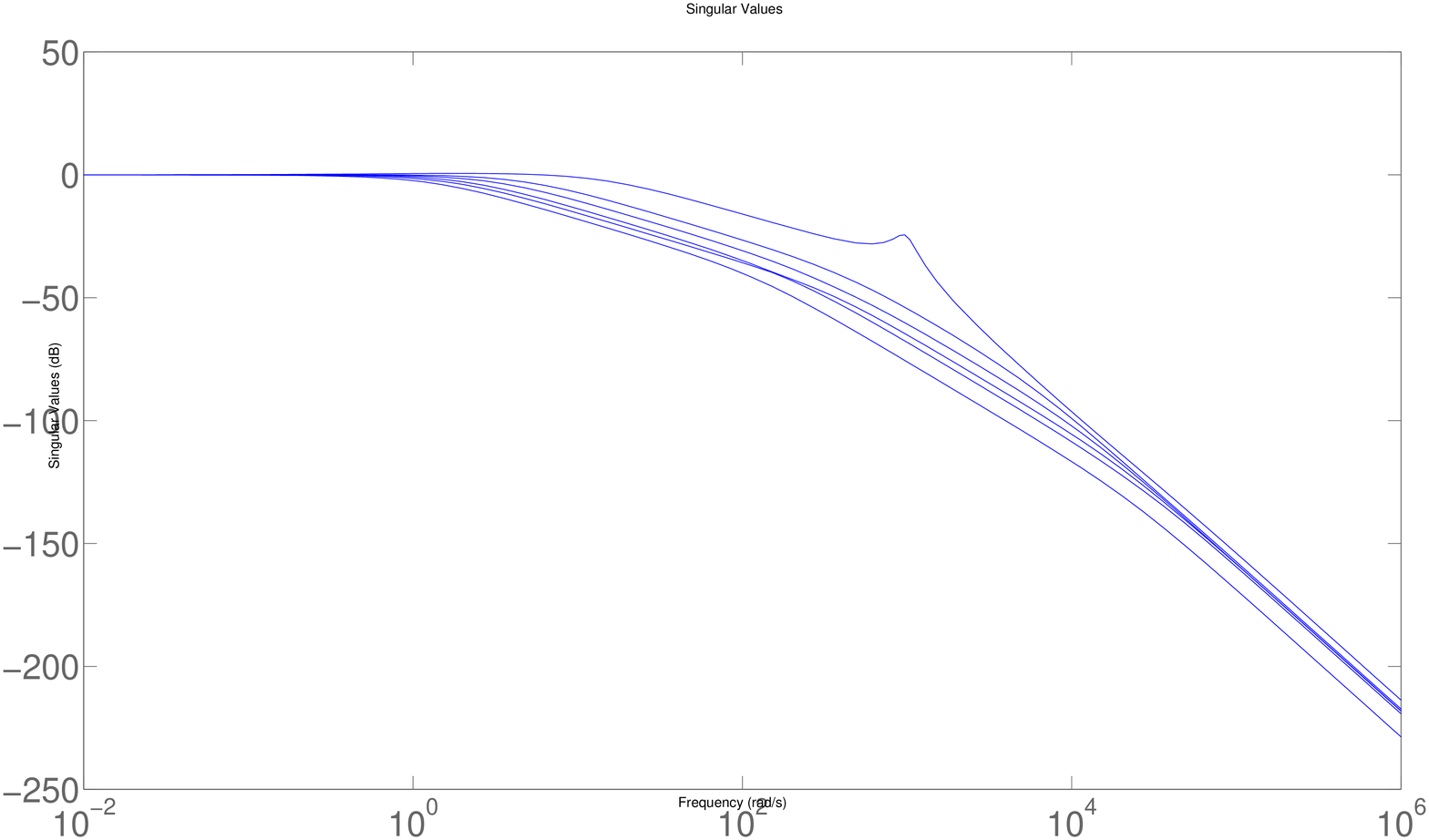}
                   \caption{ Singular values of $F(s)$.}
                   \label{fig:F_s_2}
                 \end{subfigure}
                 \caption{Features of PnP controllers for Scenario 2
                   with 6 interconnected DGUs}
                 \label{fig:closedLoop5areas}
               \end{figure}

\subsubsection{Robustness to unknown load dynamics}
In order to test the robustness of the overall ImG to unknown load
dynamics, at $t=3$ s we halve the load of DGU 6, which was equal to $8 \mbox{ }\Omega$ for $t<3$ s.
% Figure \ref{fig:Sc2_V_1} shows the load voltages for $\subss{\hat{\Sigma}}{6}^{DGU}$ and its neighbours $\subss{\hat{\Sigma}}{1}^{DGU}$ and $\subss{\hat{\Sigma}}{5}^{DGU}$. In particular, 

  \begin{figure}[!htb]
                      \centering
                      \begin{subfigure}[!htb]{0.48\textwidth}
                        \centering
                        \includegraphics[width=1\textwidth, height=130pt]{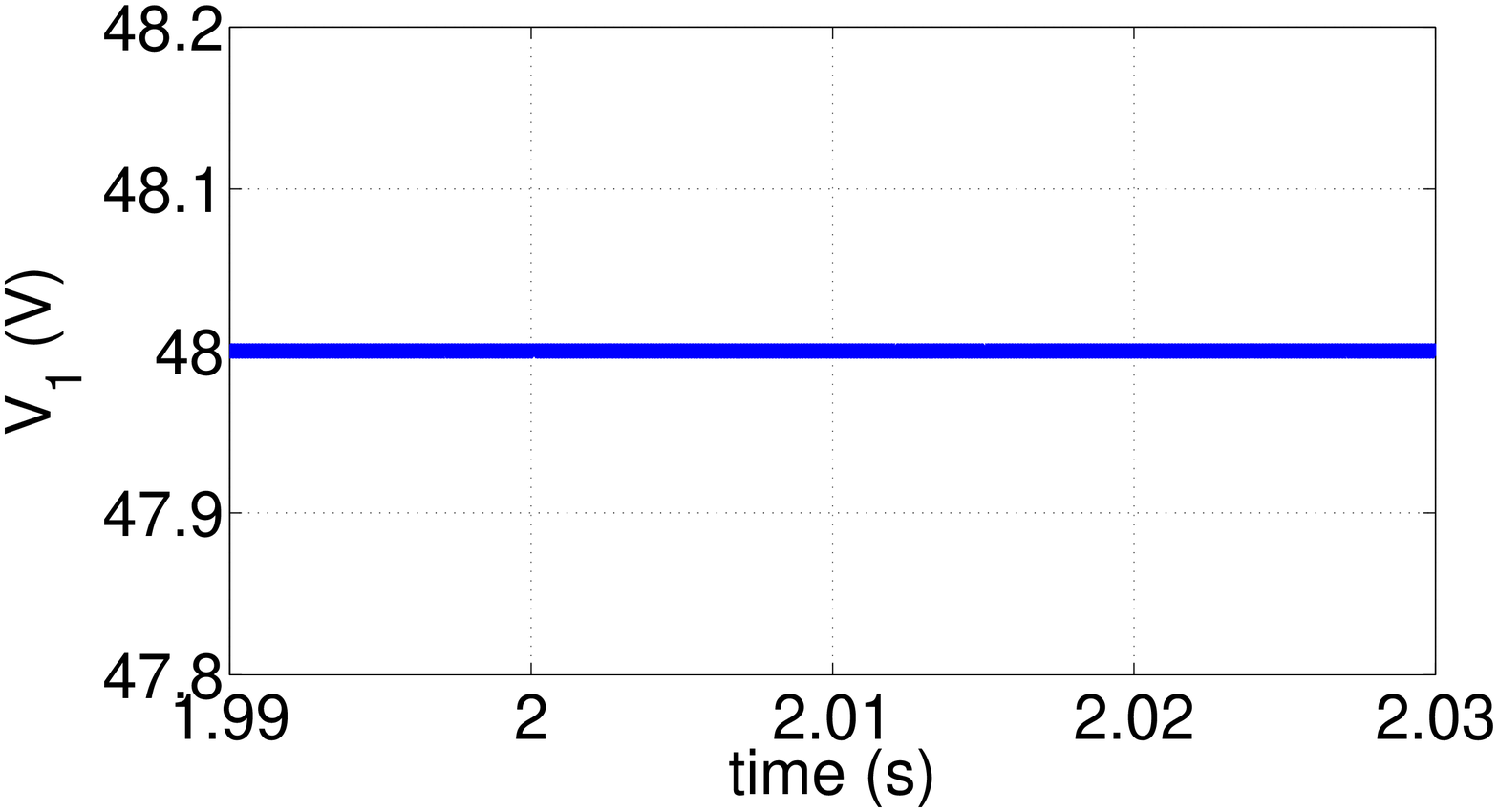}
                        \caption{Voltage at $PCC_1$.}
                        \label{fig:Sc2_V1}
                      \end{subfigure}
                      \begin{subfigure}[!htb]{0.48\textwidth}
                        \centering
                        \includegraphics[width=1\textwidth, height=130pt]{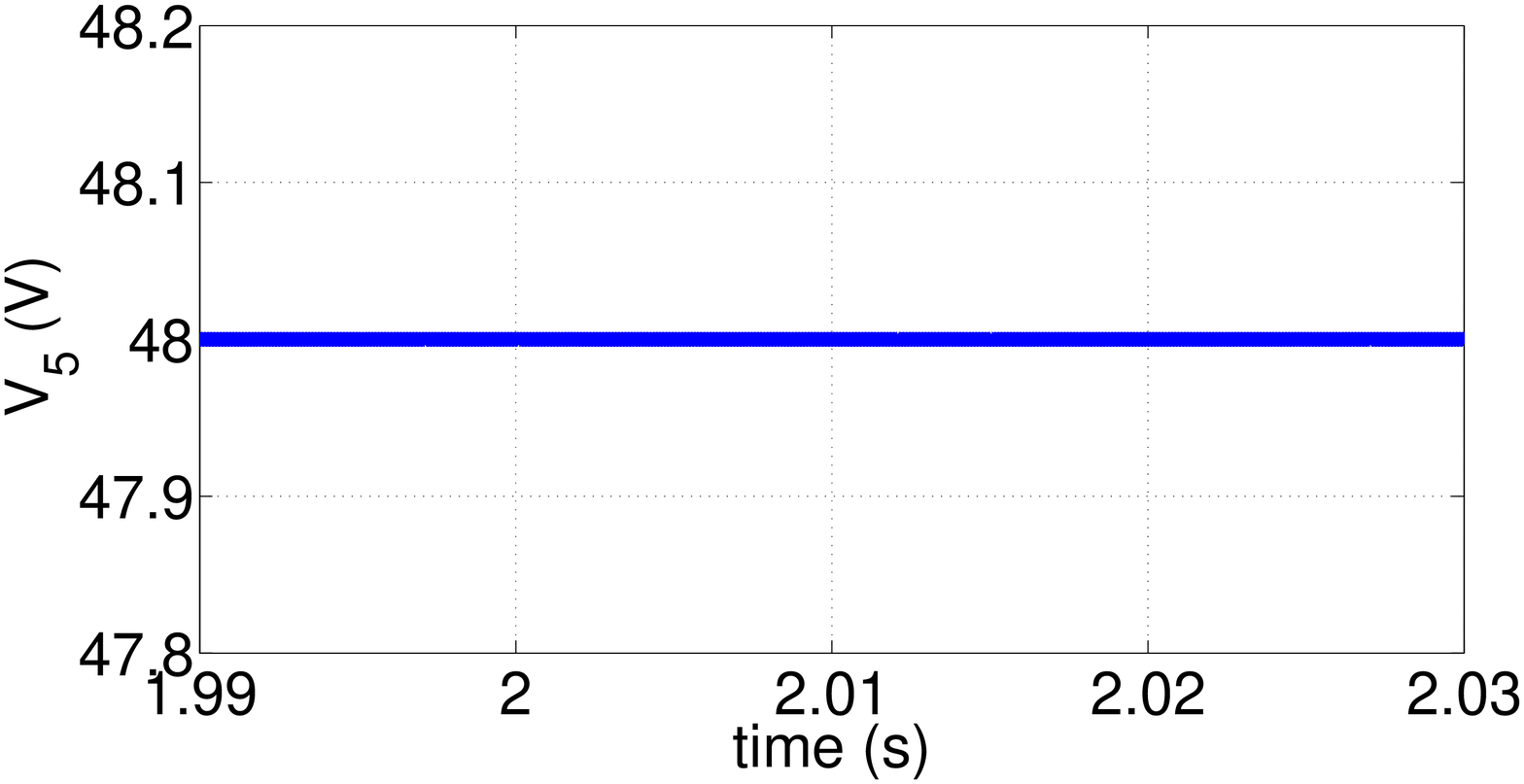}
                        \caption{Voltage at $PCC_5$.}
                        \label{fig:Sc2_V5}
                      \end{subfigure}
                      \begin{subfigure}[!htb]{0.48\textwidth}
                        \centering
                        \includegraphics[width=1\textwidth, height=130pt]{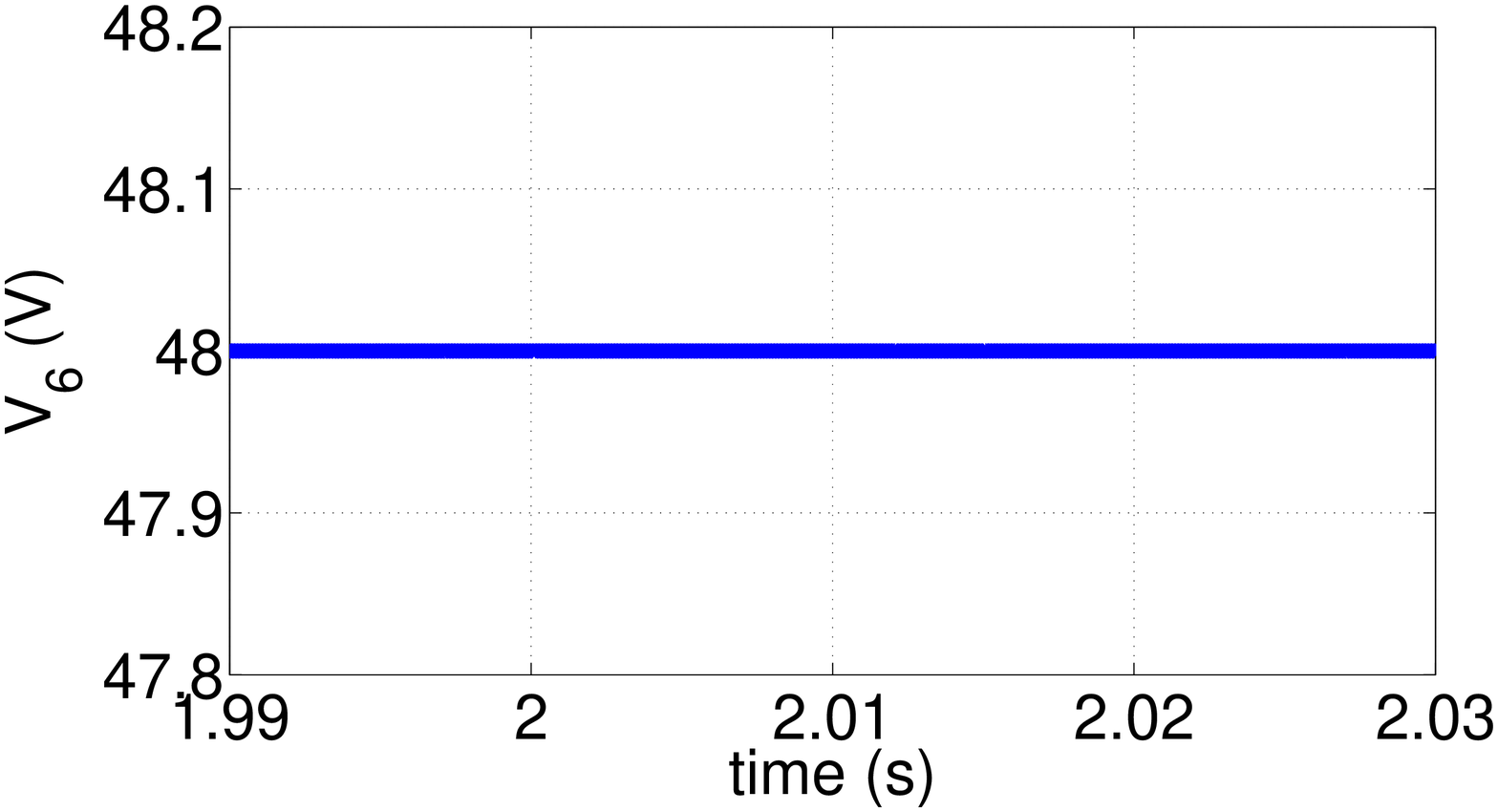}
                        \caption{Voltage at $PCC_6$.}
                        \label{fig:Sc2_V6}
                      \end{subfigure}
                     \caption{Scenario 2 - Performance of PnP
                       decentralized voltage controllers during the
                       hot plug-in of DGU 6 at time $t=2$ s.}
                      \label{fig:Sc2_V_0}                 
                    \end{figure}
Figures \ref{fig:Sc2_V1} and \ref{fig:Sc2_V5} show that, when the load
change of $\subss{\hat{\Sigma}}{6}^{DGU}$ occurs, the voltages at
$PCC_1$ and $PCC_5$ exhibit very small variations which last for a
short time. Then, load voltages of $\subss{\hat{\Sigma}}{1}^{DGU}$
and $\subss{\hat{\Sigma}}{5}^{DGU}$ converge to their reference
values. Similar remarks can be done for the new DGU
$\subss{\hat{\Sigma}}{6}^{DGU}$: as shown in Figure \ref{fig:Sc2_V6},
there is a short transient at the time of the load change, that is
effectively compensated by the control action. These experiments
highlight that controllers $\subss{\CC}{i}$, $i = 1,\dots,6$ may
ensure very good tracking of the reference signal and robustness to unknown load dynamics even without using compensators $\subss{\tilde{C}}{6}$ and $\subss{N}{6}$.
\begin{figure}[!htb]
                      \centering
                      \begin{subfigure}[!htb]{0.48\textwidth}
                        \centering
                        \includegraphics[width=1\textwidth, height=130pt]{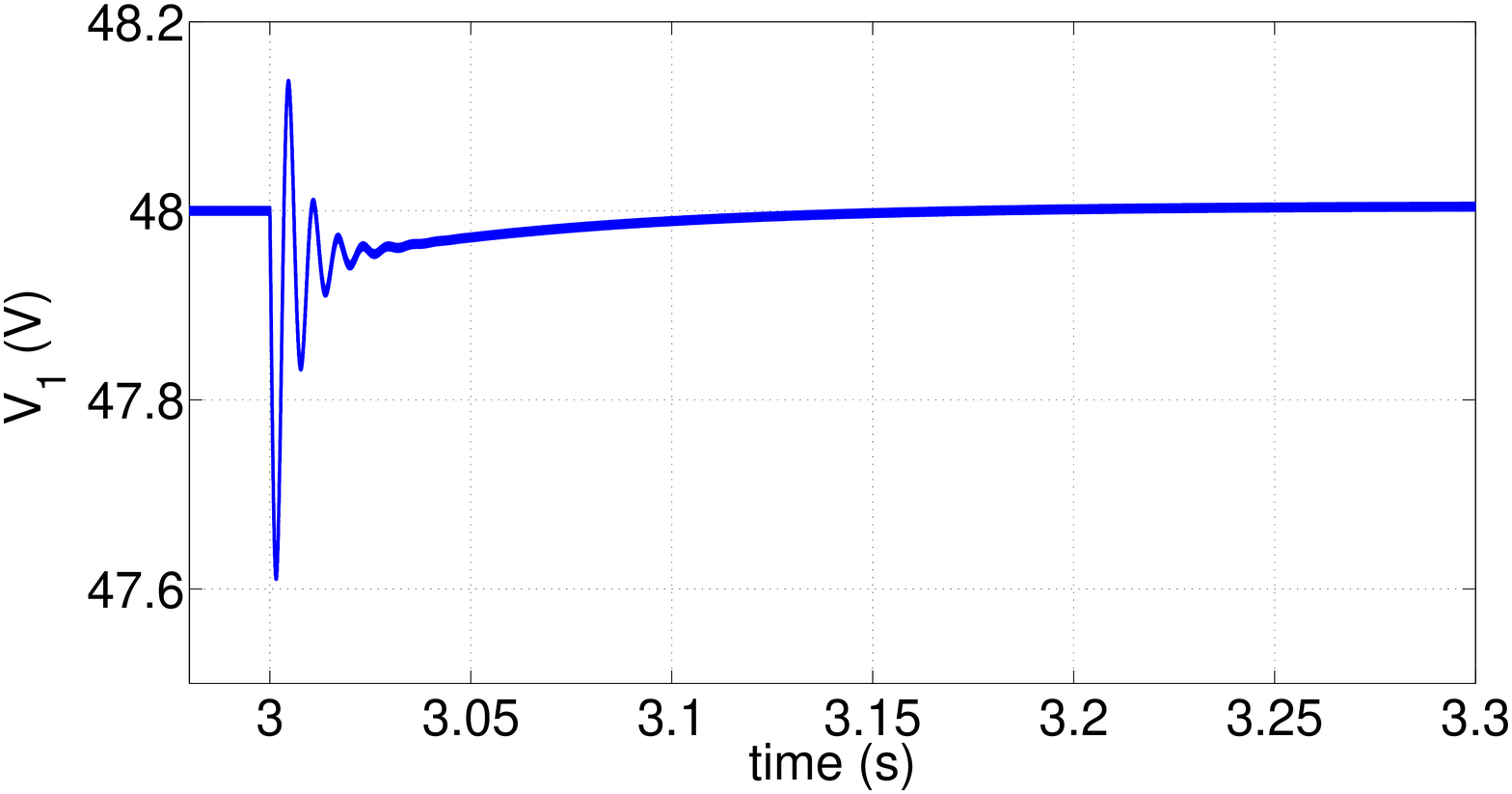}
                        \caption{Voltage at $PCC_1$.}
                        \label{fig:Sc2_V1}
                      \end{subfigure}
                      \begin{subfigure}[!htb]{0.48\textwidth}
                        \centering
                        \includegraphics[width=1\textwidth, height=130pt]{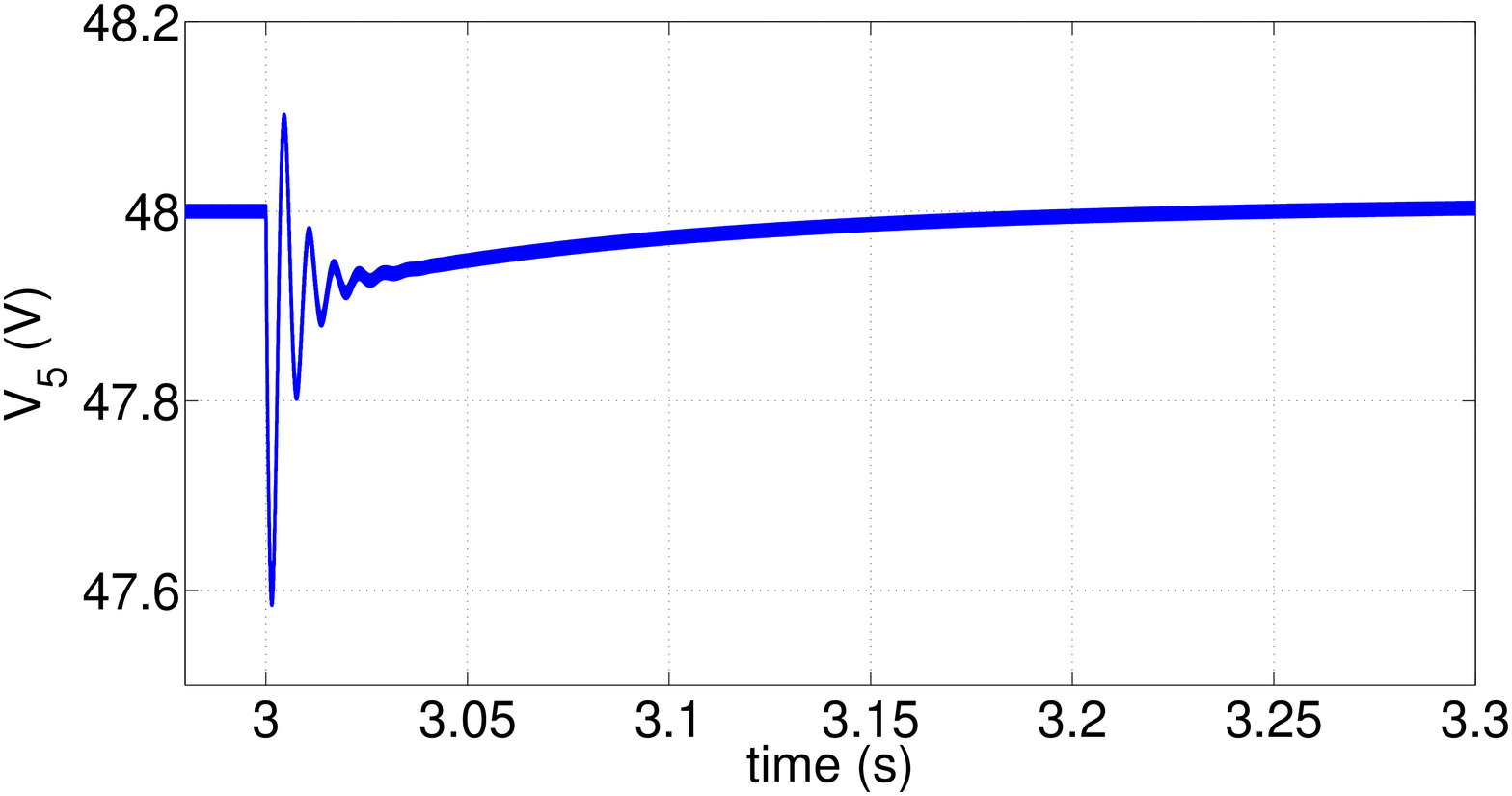}
                        \caption{Voltage at $PCC_5$.}
                        \label{fig:Sc2_V5}
                      \end{subfigure}
                      \begin{subfigure}[!htb]{0.48\textwidth}
                        \centering
                        \includegraphics[width=1\textwidth, height=130pt]{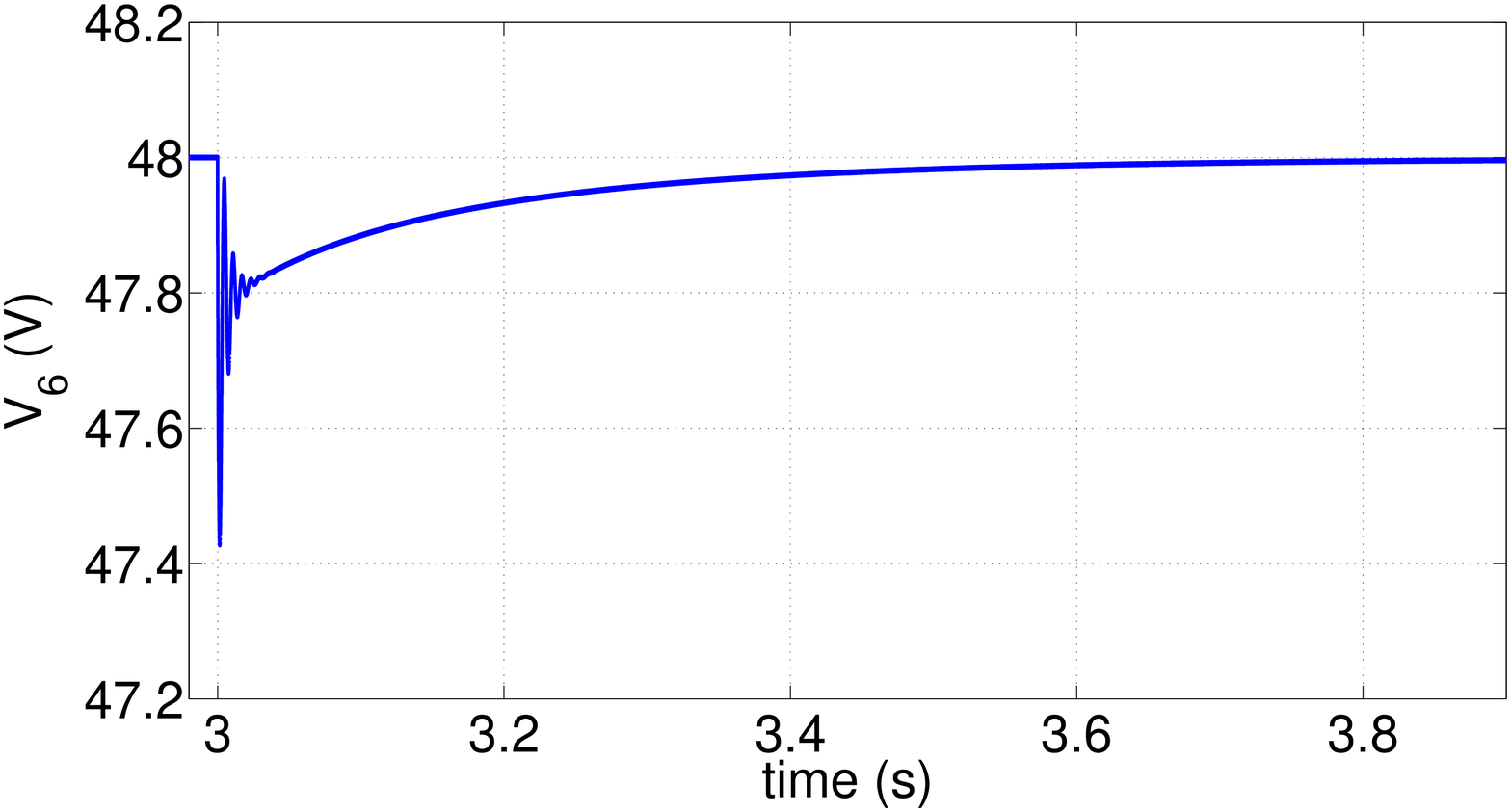}
                        \caption{Voltage at $PCC_6$.}
                        \label{fig:Sc2_V6}
                      \end{subfigure}
                     \caption{Scenario 2 - Performance of PnP
                       decentralized voltage controllers in terms of
                       robustness to an abrupt change of load
                       resistances at time $t = 3$ s.}
                      \label{fig:Sc2_V_1}                 
                    \end{figure}
\subsubsection{Unplugging of a DGU}
Next, we simulate the disconnection of $\subss{\hat{\Sigma}}{3}^{DGU}$
so that the considered ImG assumes the topology shown in Figure
\ref{fig:5areas_unplug}. The set of neighbours of DGU 3 is $\NN_{3}=\{1,4\}$.

 \begin{figure}[!htb]
                 \centering
                 \begin{tikzpicture}[scale=0.8,transform shape,->,>=stealth',shorten >=1pt,auto,node distance=3cm, thick,main node/.style={circle,fill=blue!20,draw,font=\sffamily\bfseries}]
					  
 \node[main node] (1) {\small{DGU $1$}};
  \node[main node] (2) [above right of=1] {\small{DGU $2$}};
  \node[main node] (3) [below right of=1] {\small{DGU $3$}};
  \node[main node] (4) [above right of=3] {\small{DGU $4$}};
  \node[main node] (5) [below right of=4] {\small{DGU $5$}};
  \node[main node] (6) [below of=3] {\small{DGU $6$}};
 
 \path[every node/.style={font=\sffamily\small}]
  (1) edge node [left] {} (6)
  (6) edge node [right] {} (1)

  (1) edge node [left] {} (2)
  (2) edge node [right] {} (1)
  
  (5) edge node [left] {} (6)
  (6) edge node [right] {} (5)
  
  (2) edge node [left] {} (4)
  (4) edge node [right] {} (2)
  
  (4) edge node [left] {} (5)
  (5) edge node [right] {} (4);

  \draw[red,dashed] (1) to (3);
  \draw[red,dashed] (3) to (1);
  \draw[red,dashed] (3) to (4);
  \draw[red,dashed] (4) to (3);

\end{tikzpicture}
                 \caption{Scenario 2 - Scheme of the ImG
                   composed of 5 DGUs after the unplugging of $\subss{\hat{\Sigma}}{3}^{DGU}$.}
                 \label{fig:5areas_unplug}
               \end{figure}

Because of the disconnection, there is a change in
the local dynamics $\hat{A}_{jj}$ of DGUs
$\subss{\hat{\Sigma}}{j}^{DGU}$, $j\in\NN_3$. Then, in theory, each controller
$\subss{\CC}{j}$, $j\in\NN_3$ must be redesigned (see Section \ref{sec:PnP}). As for the plugging-in
operation, we decide to maintain the previous controller for DGUs 1
and 4, after checking that the already computed matrices $K_j$, $j\in\NN_{3}$
fulfill all constraints in \eqref{eq:optproblem} even when DGU 3 is
removed. Since this test ends successfully, the disconnection of
$\subss{\hat{\Sigma}}{3}^{DGU}$ is allowed. Figure \ref{fig:eig_posteriori_unpl} shows that the
closed-loop model of the new QSL microgrid is still asymptotically
stable in spite of the unplugging operation while Figure \ref{fig:F_s_posteriori_unpl} shows
the closed-loop transfer function $F(s)$ of the ImG.
Hot-unplugging of $\subss{\hat{\Sigma}}{3}^{DGU}$ is performed at time $t=7$ s. As shown in Figure
\ref{fig:Sc2_V_2}, the load voltages of DGU
$\subss{\hat{\Sigma}}{j}^{DGU}$, $j\in\NN_3$ do not deviate from the
respective reference signals. We stress again that stability of the microgrid is
preserved despite the disconnection of 
$\subss{\hat{\Sigma}}{3}^{DGU}$. 
           \begin{figure}[!htb]
                 \centering
                  \begin{subfigure}[!htb]{0.48\textwidth}
                   \centering
                   \includegraphics[width=1.1\textwidth, height=150pt]{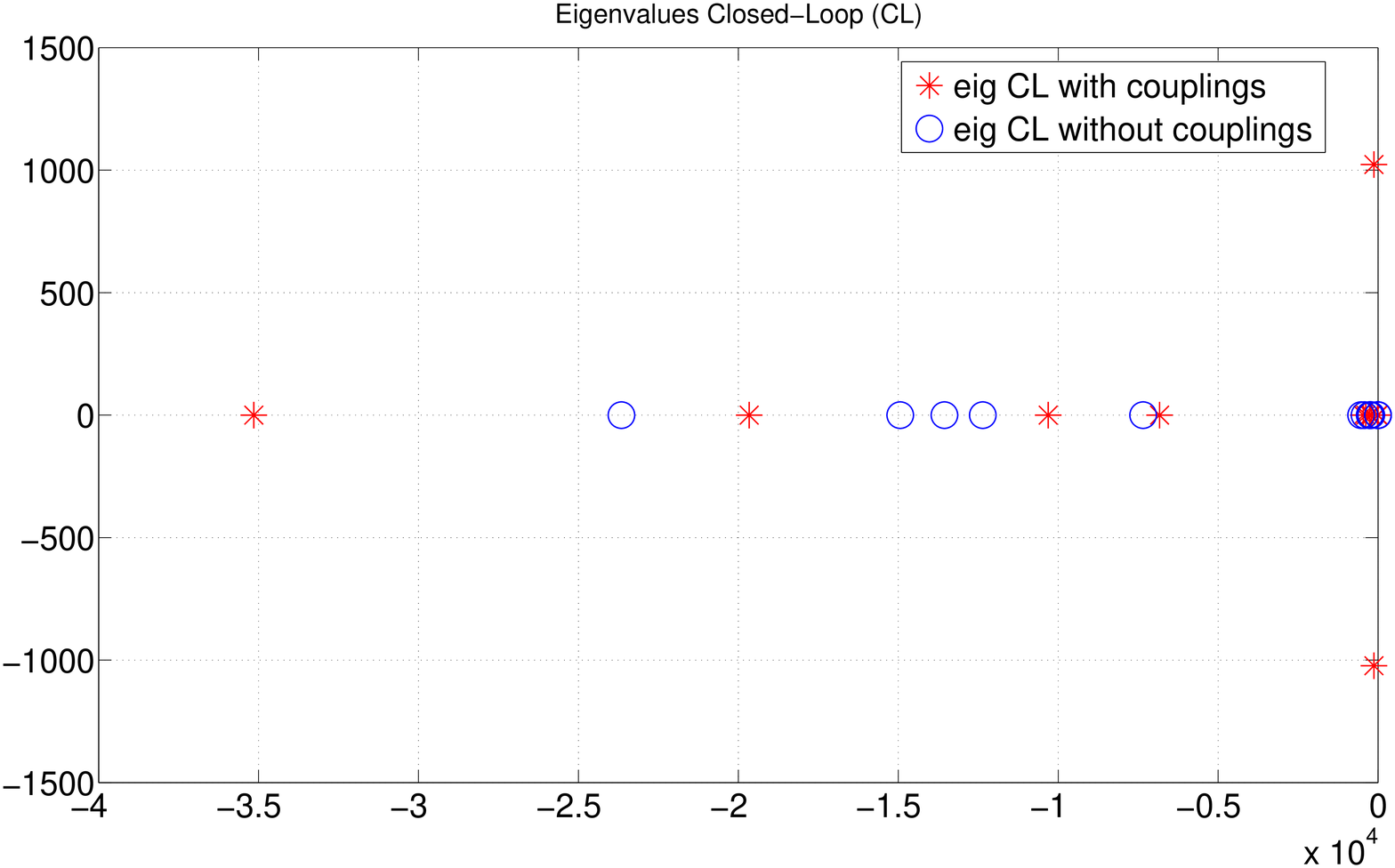}
                   \caption{Eigenvalues of the closed-loop QSL microgrid with (red) and without (blue) couplings.}
                   \label{fig:eig_posteriori_unpl}
                 \end{subfigure}
                 \begin{subfigure}[!htb]{0.48\textwidth}
                   \centering
                   \includegraphics[width=1.1\textwidth, height=150pt]{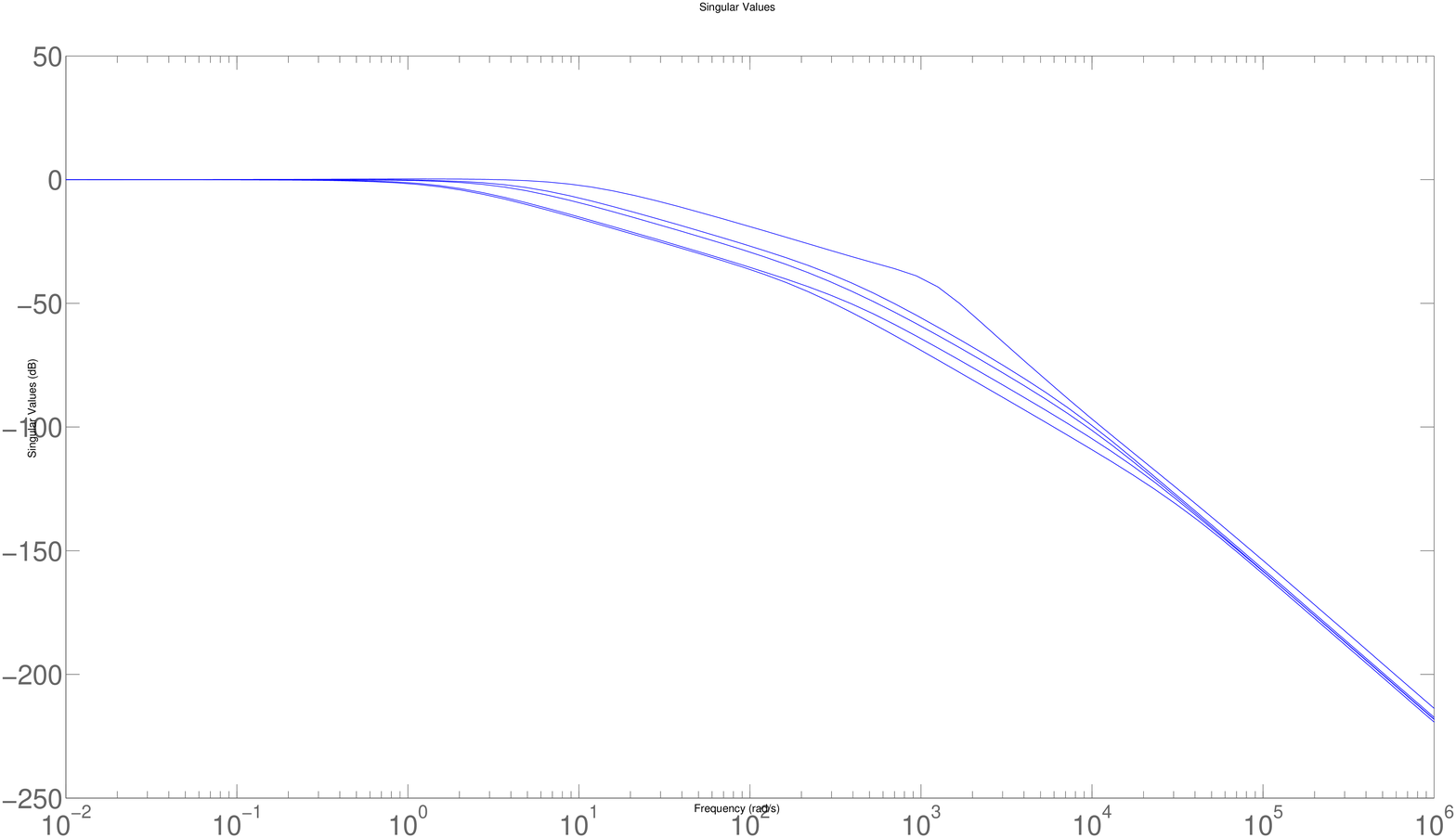}
                   \caption{ Singular values of $F(s)$.}
                   \label{fig:F_s_posteriori_unpl}
                 \end{subfigure}
                 \caption{Features of PnP controllers for Scenario 2
                   after the unplugging of DGU 3.}
                 \label{fig:posteriori_unplugging}
               \end{figure}    

 \begin{figure}[!htb]
                      \centering
                      \begin{subfigure}[!htb]{0.48\textwidth}
                        \centering
                        \includegraphics[width=1\textwidth, height=130pt]{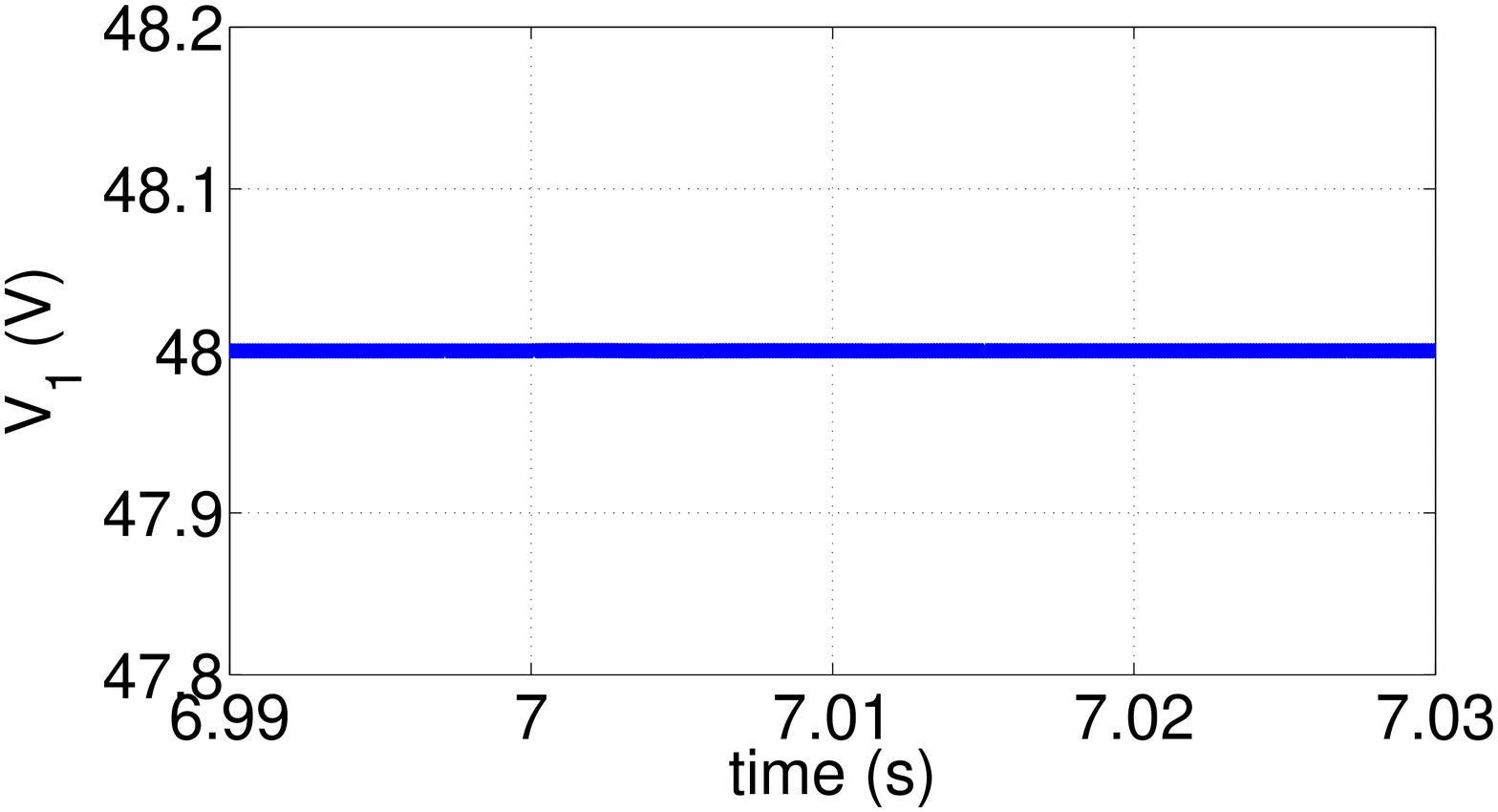}
                        \caption{Voltage at $PCC_1$.}
                        \label{fig:Sc2_V1_2}
                      \end{subfigure}
                      \begin{subfigure}[!htb]{0.48\textwidth}
                        \centering
                        \includegraphics[width=1\textwidth, height=130pt]{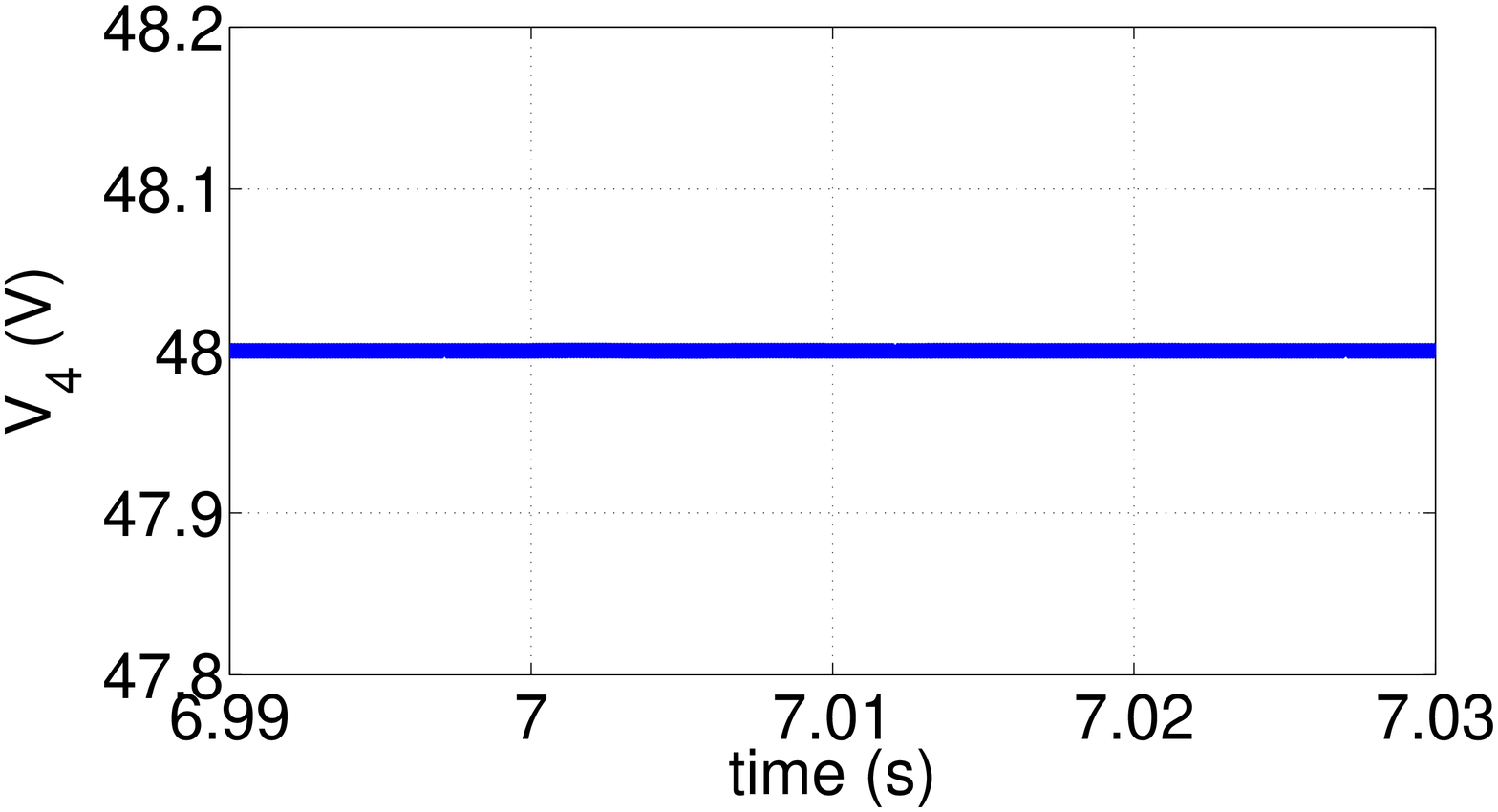}
                        \caption{Voltage at $PCC_4$.}
                        \label{fig:Sc2_V4_2}
                      \end{subfigure}
                     \caption{Scenario 2 - Performance of PnP
                       decentralized voltage controllers during the
                       hot-unplugging of DGU 3 at $t=7$ s.}
                      \label{fig:Sc2_V_2}                 
                    \end{figure}

               \section{Conclusions}
          \label{sec:conclusions}
          In this paper, a decentralized control scheme for
          guaranteeing voltage stability in DC ImGs was presented. The
          main feature of the proposed approach is that, whenever a
          plugging-in or -out of DGUs is required, only a limited
          number of local controllers must be updated. Moreover, as mentioned in Section ~\ref{sec:differentC}, local voltage controllers should be coupled with a higher control layer devoted to power flow regulation so as to orchestrate mutual help among DGUs. To this purpose, we will study if and how ideas from secondary control of ImGs \cite{shafiee2014hierarchical} can be reappraised in our context.

     \clearpage

     \appendix
     \section{Matrices appearing in microgrid models}
     \label{sec:AppMatrices}

     The following appendix collects all matrices appearing in Section \ref{sec:Model}.
     
     \subsection{Matrices in the model \eqref{eq:sysdistABCDM}}
          \label{sec:AppMasterSlave}
	\begin{equation*}
          \setlength{\arraycolsep}{6pt}%default 6pt
          \setcounter{MaxMatrixCols}{6}
          A=\matr{
            0 & \frac{1}{C_{ti}} & \frac{1}{C_{ti}} & 0 & 0 & 0\\
            -\frac{1}{L_{ti}} & -\frac{R_{ti}}{L_{ti}} & 0 & 0 & 0 & 0\\
            -\frac{1}{L_{ij}} & 0 & -\frac{R_{ij}}{L_{ij}} & 0 & \frac{1}{L_{ij}} & 0 \\
            \frac{1}{L_{ji}} & 0 & 0 &  -\frac{R_{ji}}{L_{ji}} & -\frac{1}{L_{ji}} & 0 \\  
            0 & 0 & 0 & \frac{1}{C_{tj}} & 0 & \frac{1}{C_{tj}} \\
            0 & 0 & 0 & 0 & -\frac{1}{L_{tj}} & -\frac{R_{tj}}{L_{tj}}
          }
        \end{equation*}
        
	\begin{equation*}
          \label{ABCDM}
          B=\matr{
            0 & 0 \\
            \frac{1}{L_{ti}} & 0 \\
            0 & 0 \\
            0 & 0 \\
            0 & 0 \\
            0 & \frac{1}{L_{tj}}
          }\qquad
          C^T =\matr{
            1 & 0 \\
            0 & 0 \\
            0 & 0 \\
            0 & 0 \\
            0 & 1 \\
            0 & 0
          }\qquad
          M=\matr{
            -\frac{1}{C_{ti}} & 0  \\
            0 & 0 \\
            0 & 0 \\
            0 & 0 \\
            0 & -\frac{1}{C_{tj}} \\
            0 & 0 
          }
	\end{equation*}
	
     \subsection{Matrices in the QSL model \eqref{eq:subsysDGUi} and \eqref{eq:subsysLine}}
          \label{sec:AppMasterMaster}
	
          \paragraph{DGU-$i$, $i\in\{1,2\}$}
          \begin{equation*}
            \renewcommand\arraystretch{1.5}
            A_{ii}=\begin{bmatrix}
              -\frac{1}{R_{ij}C_{ti}} & \frac{1}{C_{ti}} \\
              -\frac{1}{L_{ti}} & -\frac{R_{ti}}{L_{ti}} \\
            \end{bmatrix}
          \end{equation*}
          \begin{equation*}
            \renewcommand\arraystretch{1.5}
            A_{ij}=
            \begin{bmatrix}
              \frac{1}{R_{ij}C_{ti}} & 0 \\
              0 & 0 
            \end{bmatrix}
          \end{equation*}
          
          \begin{equation*}
            B_{i}=\begin{bmatrix}
              0 \\
              \frac{1}{L_{ti}}
            \end{bmatrix}
            \qquad
            M_{i}=\begin{bmatrix}
              -\frac{1}{C_{ti}} \\
              0 \\
            \end{bmatrix}
            \qquad
            C_{i}=\begin{bmatrix}
              1&0\\
              0&1
            \end{bmatrix} 
            \qquad
            H_{i}=\begin{bmatrix}
              1 & 0 
            \end{bmatrix}
          \end{equation*}
          
          \paragraph{Line $i\neq j$}
          \begin{equation}
            \renewcommand\arraystretch{1.2}
            \label{matrixss3}
            A_{li,ij}=\begin{bmatrix}
              -\frac{1}{L_{ij}} & 0\\
            \end{bmatrix}\quad	
            A_{lj,ij}=\begin{bmatrix}
              \frac{1}{L_{ij}} & 0\\
            \end{bmatrix}\quad
            A_{ll,ij}=
              -\frac{R_{ij}}{L_{ij}}
          \end{equation}
          
     \subsection{QSL model of microgrid composed of $N$ DGUs}
          \label{sec:AppNDGunit}          
          \paragraph{DGU-$i$, $i\in\DD$}
          \begin{equation}
            \label{eq:Aii}
            \renewcommand\arraystretch{2}
            A_{ii}=\begin{bmatrix}
              \sum_{j\in\NN_i}-\frac{1}{R_{ij}C_{ti}} & \frac{1}{C_{ti}} \\
              -\frac{1}{L_{ti}} & -\frac{R_{ti}}{L_{ti}} \\
            \end{bmatrix}
          \end{equation}          
          \begin{equation}
            \label{eq:Aij}
            \renewcommand\arraystretch{2}
            A_{ij}=
            \begin{bmatrix}
              \frac{1}{R_{ij}C_{ti}} & 0 \\
              0 & 0 
            \end{bmatrix}
	\end{equation}
        We remind that $R_{ij}$ and $L_{ij}$ are the resistance and
        the inductance of the line between DGU $i$ and DGU
        $j$. Moreover, matrices $B_i$, $C_i$, $M_i$ and $H_i$ are
        equal to those appearing in Section \ref{sec:AppMasterMaster}.
	
        \paragraph{Overall model of a microgrid composed by $N$ DGUs}
             \label{sec:AppOverallsys}

             \begin{equation}
               \label{TheSystem}
               \begin{aligned}
                 \begin{bmatrix}
                   \subss{\dx}{1} \\
                   \subss{\dx}{2} \\
                   \subss{\dx}{3} \\
                   \vdots \\
                   \subss{\dx}{N}
                 \end{bmatrix} 
                 &= 
                 \underbrace{\left[\begin{array}{ccccc}
                       A_{11} & A_{12} & A_{13} & \dots  & A_{1N} \\
                       A_{21} & A_{22} & A_{23} & \dots  & A_{2N} \\
                       A_{31} & A_{32} & A_{3l} & \dots  & A_{3N} \\
                       \vdots & \vdots & \vdots & \ddots & \vdots\\
                       A_{N1} & A_{N2} & A_{N3} & \dots  & A_{NN}
                     \end{array}
                   \right]}_{\mbf{A}}
                 \begin{bmatrix}
                   \subss{x}{1} \\
                   \subss{x}{2} \\
                   \subss{x}{3} \\
                   \vdots \\
                   \subss{x}{N}
                 \end{bmatrix} 
                 +\\
                 &+\,
                 \underbrace{\begin{bmatrix}
                     B_{1} & 0 & \dots & 0\\
                     0 & B_{2} & \ddots & \vdots\\
                     \vdots & \ddots & \ddots & 0\\
                     0& \dots & 0  & B_{N}
                   \end{bmatrix}}_{\mbf{B}}
                 \begin{bmatrix}
                   \subss{u}{1}\\
                   \subss{u}{2}\\
                   \vdots\\
                   \subss{u}{N}
                 \end{bmatrix}
                 + \underbrace{\begin{bmatrix}
                     M_{1} & 0 & \dots & 0\\
                     0 & M_{2} & \ddots & \vdots\\
                     \vdots & \ddots & \ddots & 0\\
                     0& \dots & 0  & M_{N}
                   \end{bmatrix}}_{\mbf{M}}
                 \begin{bmatrix}
                   \subss{d}{1}\\
                   \subss{d}{2}\\
                   \vdots\\
                   \subss{d}{N}
                 \end{bmatrix}\\    
                 \begin{bmatrix}
                   \subss{y}{1}\\
                   \subss{y}{2}\\
                   \subss{y}{3}\\
                   \vdots\\
                   \subss{y}{N}
                 \end{bmatrix}
                 &=
                 \underbrace{\left[\begin{array}{ccccc}
                       C_{1} & 0 & 0 & \dots & 0 \\
                       0 & C_{2} & 0 & \ddots & \vdots \\
                       0 & 0 & C_{3} & \ddots & 0 \\
                       \vdots & \ddots & \ddots &\ddots & 0\\
                       0 & \dots & 0 & 0  & C_{N}
                     \end{array}
                   \right]}_{\mbf{C}}
                 \begin{bmatrix}
                   \subss{x}{1} \\
                   \subss{x}{2} \\
                   \subss{x}{3} \\
                   \vdots \\
                   \subss{x}{N}
                 \end{bmatrix}\\
                 \begin{bmatrix}
                   \subss{z}{1}\\
                   \subss{z}{2}\\
                   \subss{z}{3}\\
                   \vdots\\
                   \subss{z}{N}
                 \end{bmatrix}
                 &=
                 \underbrace{\begin{bmatrix}
                     H_{1} & 0 & 0 & \dots & 0 \\
                     0 & H_{2} & 0 & \ddots & \vdots \\
                     0 & 0 & H_{3} & \ddots & 0 \\
                     \vdots & \ddots & \ddots &\ddots & 0\\
                     0& \dots & 0 & 0  & H_{N}
                   \end{bmatrix}}_{\mbf{H}}\begin{bmatrix}
                   \subss{y}{1}\\
                   \subss{y}{2}\\
                   \subss{y}{3}\\
                   \vdots\\
                   \subss{y}{N}
                 \end{bmatrix}.
               \end{aligned} 
             \end{equation}
     %\clearpage
     %\input{ExampleUnstable}
     \clearpage
     
\section{Bumpless control transfer}
     \label{sec:AppBumpless}
Since the controller is a dynamic system, it is necessary to make sure
that the state of the system is correct when a switch of the
controller (i.e. a plugging-in or unplugging operation) is required. 
Assuming that the control switch is made at a certain point in time $\bar t$, we
call $u_{prec,i}$ the control signal produced by the controller
$\mathcal C_{i}$
up to time $\bar t$. It might happen that the updated controller
will provide a control variable $u_{i}$ different from $u_{prec,i}$. Therefore,
it is necessary to ensure there is no substantial change in the two
outputs at $\bar t$. This is called \textit{bumpless control transfer}
\cite{aastrom2006advanced}. 

A bumpless control transfer implementation of PnP local controller for
system $\hat\Sigma_{i}^{DGU}$ is illustrated in Figure
\ref{fig:bumpless_scheme}. 

\begin{figure}[htb]
                        \centering
                        \includegraphics[width=1\textwidth, height=150pt]{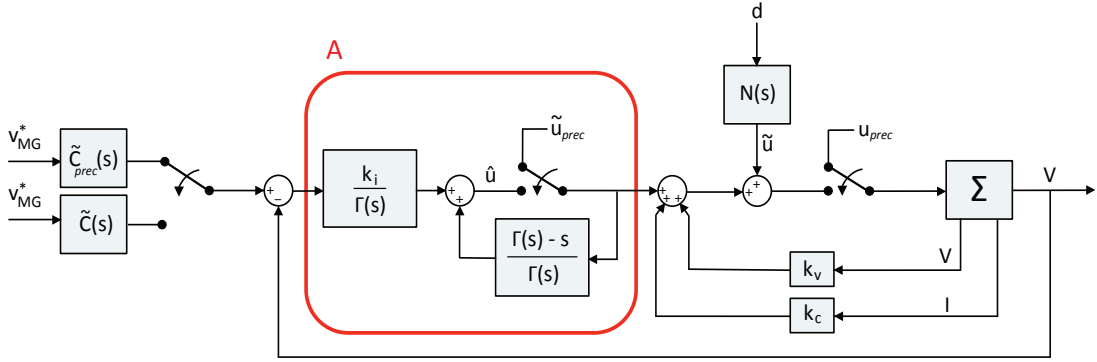}
                        \caption{Bumpless control transfer implementation.}
                        \label{fig:bumpless_scheme}
                      \end{figure}

For the sake of simplicity we drop here the
index $i$ of the subsystem and associated local variables and all
switches in Figure \ref{fig:bumpless_scheme} are assumed to commute at time $\bar t$. In the
Figure, vector
$$ K = [k_v\mbox{ }k_c\mbox{ }k_i]^T$$
contains the parameters of the local controller to be activated at
time $\bar t $. Notice that the integrator embedded in the DGU model
for zeroing the steady-state error is replaced by block A (highlighted
in red in
Figure \ref{fig:bumpless_scheme}), where the
polynomial $\Gamma(s)$ has to be chosen such that $k_i>\Gamma(0)$ and the
transfer function $$\Psi = \frac{\Gamma(s)-s}{\Gamma(s)}$$ is
asymptotically stable and realizable. In block A, a switch is present so that
the signal is either $\tilde{u}_{prec}$ (up to time $\bar
t$) or $\hat u$ (right after $\bar t$). The variable $\tilde{u}_{prec}$ is given by
\begin{eqnarray}
\tilde{u}_{prec} = u_{prec}-k_vV-k_cI_t-\tilde{u}
\end{eqnarray}
where $\tilde{u}$ is the additional input produced by compensator
$N(s)$, computed with respect to the dynamics of the system after the
commutation (set $N(s)=0$ if such a compensation is not implemented). We highlight that since there could be a
transient in the $\hat u$ response to track signal
$\tilde{u}_{prec}$, it is fundamental to wait for the two signals to
become similar before proceeding with the commutation. In this way we
avoid jumps in the control variable.
%(i.e. $\bar t$ has to ... ) 

Furthermore, if an
optional prefilter of the reference is implemented, at time $\bar t$
is also necessary to commute from transfer
function $\tilde{C}_{prec}(s)$ to $\tilde{C}(s)$, since each
plugging-in or unplugging operation of other DGUs in the overall ImG
lead to a variation of the local dynamics of the considered subsystem
$\hat\Sigma_{i}^{DGU}$ (see the term
$\sum_{j\in\NN_i}-\frac{1}{R_{ij}C_{ti}}$ in \eqref{eq:Aii}).
     \clearpage
     \section{Electrical and simulation parameters of Scenario 1 and 2}
     \label{sec:AppElectrPar}
     In this appendix, we provide all the electrical and simulation
     parameters of Scenarios 1 and 2 (which are described in Sections \ref{sec:scenario_1} and \ref{sec:scenario_2}, respectively).
   \begin{table}[!htb]                 
                      \centering
                      \begin{tabular}{*{4}{c}}
                        \toprule
                       Parameter & Symbol & Value \\
                        \midrule
                        DC power supply & $V_{DC}$ & 100 $V$ \\
                        Output capacitance & $C_{t*}$ & 2.2 $mF$\\
                        Converter inductance & $L_{t*}$ & 1.8$\mbox{ }mH$\\
                        Inductor + switch loss resistance & $R_{t*}$ & 0.2 $\Omega$ \\
                        Switching frequency & $f_{sw}$ & 10 kHz\\
                        \midrule
                        Transmissione line inductance & $L_{*\circ}$ &
                                                                       1.8$\mbox{
                                                                       }\mu
                                                                       H$\\
                        Transmission line resistance & $R_{*\circ}$ & 0.05 $\Omega$\\
                        \bottomrule
                      \end{tabular}
                      
                      \caption{Electrical setup and line parameters}	
                      \label{tbl:electrical_setup}
                    \end{table}
                    \begin{table}[!htb]
       \caption{VSC filter parameters for DGUs $\subss{\hat{\Sigma}}{i}^{DGU}$, $i=\{1,\dots,6\}$ in Scenario 2.}	
       \label{tbl:diffpar5}
       \centering
       \begin{tabular}{*{4}{c}}
         \toprule
         DGU & Resistance $R_t (\Omega)$ & Capacitance $C_t(mF$) & Inductance $L_t (mH$)\\
         \midrule
         $\subss{\hat{\Sigma}}{1}^{DGU}$& 0.2 & 2.2 & 1.8\\
         $\subss{\hat{\Sigma}}{2}^{DGU}$& 0.3 & 1.9 & 2.0\\
         $\subss{\hat{\Sigma}}{3}^{DGU}$& 0.1 & 1.7 & 2.2\\
         $\subss{\hat{\Sigma}}{4}^{DGU}$& 0.5 & 2.5 & 3.0\\
         $\subss{\hat{\Sigma}}{5}^{DGU}$& 0.4 & 2.0 & 1.2\\
         \midrule
         $\subss{\hat{\Sigma}}{6}^{DGU}$& 0.6 & 3.0 & 2.5\\
         \bottomrule
       \end{tabular}
     \end{table}

     \begin{table}[!htb]
       \caption{Transmission lines parameters for Scenario 2.}	
       \label{tbl:linespar5}
       \centering
       \begin{tabular}{*{3}{c}}
         \toprule
         Connected DGUs $(i,j)$ & Resistance $R_s (\Omega)$ & Inductance
                                                            $L_s (\mu H)$ \\
         \midrule
         $(1,2)$ & 0.05 & 2.1 \\
         $(1,3)$ & 0.07 & 1.8 \\
         $(3,4)$ & 0.06 & 1.0 \\
         $(2,4)$ & 0.04 & 2.3 \\
         $(4,5)$ & 0.08 & 1.8 \\
         \midrule
         $(1,6)$ & 0.1 & 2.5 \\
         $(5,6)$ & 0.08 & 3.0 \\
         \bottomrule
       \end{tabular}
     \end{table}

     \begin{table}[!htb]
       \caption{Common parameters of DGUs $\subss{\hat{\Sigma}}{i}^{DGU}$, $i=\{1,\dots,6\}$ in Scenario 2.}	
       \label{tbl:commpar5}
       \centering
       \begin{tabular}{*{3}{c}}
         \toprule
         Parameter & Symbol & Value \\
         \midrule
                         \hspace{30mm} Electrical parameters  \\
                        \hline
                        DC power supply & $V_{DC}$ & 100 $V$ \\
                        Output capacitance & $C_{t}$ & 2.2 $mF$\\
                        Converter inductance & $L_{t}$ & 1.8 $mH$\\
                        Inductor+switch loss resistance & $R_{t}$ & 0.2 $\Omega$ \\
                        Switching frequency & $f_{sw}$ & 10 kHz\\
  
         \bottomrule
       \end{tabular}
     \end{table}

     %\clearpage
     %\input{AlgoConv}
     \clearpage

     \bibliographystyle{IEEEtran}
     \bibliography{microgrids-report}

\end{document}